\numberwithin{equation}{section}
\theoremstyle{plain}
\newtheorem{thm}{Theorem}[section]
\newtheorem{lem}{\protect\lemmaname}
\providecommand{\lemmaname}{Lemma}
\DeclareMathOperator*{\argmin}{arg\,min}
\begin{document}

\begin{frontmatter}
\title{Optimal Estimation for the Functional Cox Model\thanksref{T1}}
\thankstext{T1}{The research of Jane-Ling Wang is supported by NSF grant DMS-0906813. The research of Xiao Wang is supported by NSF grants CMMI-1030246 and DMS-1042967.}

\begin{aug}
\author{\fnms{Simeng} \snm{Qu}\thanksref{m1}\ead[label=e1]{qu20@purdue.edu}},
\author{\fnms{Jane-Ling} \snm{Wang}\thanksref{m2}\ead[label=e2]{janelwang@ucdavis.edu}}
\and
\author{\fnms{Xiao} \snm{Wang}\thanksref{m1}
\ead[label=e3]{wangxiao@purdue.edu}
}


\affiliation{Department of Statistics, Purdue University\thanksmark{m1} \\
   Department of Statistics, University of California at Davis\thanksmark{m2}}

\address{S. Qu\\
X. Wang\\
Department of Statistics\\
Purdue University\\
West Lafayette, IN 47907\\
USA\\
\printead{e1}\\
\phantom{E-mail:\ }\printead*{e3}}

\address{J.-L. Wang\\
Department of Statistics\\
University of California\\
Davis, CA 95616\\
USA\\
\printead{e2}}

\end{aug}

\begin{abstract}
Functional covariates are common in many medical, biodemographic, and neuroimaging studies. The aim of this paper is to study functional Cox models with right-censored data in the presence of both functional and scalar covariates. We study the asymptotic properties of the maximum partial likelihood estimator and establish the asymptotic normality and efficiency of the  estimator of the finite-dimensional estimator. Under the framework of  reproducing kernel Hilbert space, the estimator of the coefficient function for a functional covariate achieves the minimax optimal rate of convergence under a weighted $L_2$-risk. This optimal rate is determined jointly by the censoring scheme, the reproducing kernel and the covariance kernel of the functional covariates. Implementation of the estimation approach and the selection of the smoothing parameter are discussed in detail. The finite sample performance is illustrated by simulated examples and a real application.
\end{abstract}


\begin{keyword}
\kwd{ Cox models}
\kwd{functional data}
\kwd{minimax rate of convergence}
\kwd{partial likelihood}
\kwd{right-censored data}
\end{keyword}

\end{frontmatter}
\section{Introduction}

The proportional hazard model, known as the Cox model, was introduced by \cite{Cox1972}, where the  hazard function of the survival time $T$ for a subject with  covariate $Z(t) \in \mathbb R^p$ is represented by
\begin{equation}\label{eq:cox}
h(t|Z) = h_0(t) e^{\theta_0' Z (t)},
\end{equation}
where $h_0$ is an unspecified baseline hazard function and  $\theta_0\in \mathbb R^p$  is an  unknown parameter.  Some or all of the $p$ components in $Z$ may be time-independent, meaning that they are constant over time $t$, or may depend on $t$.  The aim of this paper is to develop a different type of model,  the  functional Cox model, by incorporating functional predictors along with  scalar predictors. \cite{Wang2011} first proposed such a model when studying  the survival of diffuse large-B-cell lymphoma (DLBCL) patients, which is thought
to be influenced by genetic differences. The functional predictor, denoted by $X(\cdot): {\cal S}\rightarrow \mathbb R$ on a compact domain ${\cal S}$, is a smooth stochastic process related to the high-dimensional microarray gene expression of DLBCL patients. The entire trajectory of $X$ has an effect on the hazard function, which makes it 
different from the Cox model (\ref{eq:cox}) with time-varying covariates, where only the current value of $X$ at time $t$ affects the hazard function  at time $t$. Specifically,
the functional Cox model with a vector covariate $Z$ and functional covariate $X(t)$ represents the hazard function by
\begin{equation}
h(t\,|X)=h_{0}(t)\exp\Big\{\theta_0'Z+\int_{\mathcal{\mathcal{S}}}X(s)\beta_0(s)ds\Big\},\label{eq:model}
\end{equation} 
where $\beta_0$ is an unknown coefficient function. Without loss of generality, we take $\mathcal{S}$ to be $[0,1]$. 


Under the right censorship model and letting  $T^{u}$ and $T^{c}$ be, respectively, the failure time and censoring
time,
we observe i.i.d. copies of $(T,\,\Delta,\, X(s), s\in {\cal S})$, $(T_{1},\,\Delta_{1},\, X_{1}),\ldots, (T_{n},\,\Delta_{n},\, X_{n})$, where 
 $T=\min\{T^{u},\, T^{c}\}$ is the observed time event and $\Delta=I\{T^{u}\leq T^{c}\}$ is the censoring indicator. 
Our goal is to estimate $\alpha_0 = (\theta_0, \beta_0(\cdot))$  to reveal how the functional covariates $X(\cdot)$ and other scalar covariates $Z$ relate to survival.

Let $\hat\alpha = (\hat\theta, \hat\beta(\cdot))$ be an estimate from the data. It is critical to define the risk function to measure the accuracy of the estimate.
\  Let $W = (Z, X)$ and
\[
\eta_\alpha(W) = \theta' Z + \int_0^1 \beta(s) X(s)ds.
\]
Define an $L_2$-distance such that
\begin{equation}\label{equ:d}
d^2(\hat\alpha, \alpha_0) = \mathbb E \Big\{\Delta \Big( \eta_{\hat\alpha}(W) - \eta_{\alpha_0}(W) \Big)^2\Big\}.
\end{equation}
Based on this $L_2$-distance, we show that the accuracy of $\hat\theta$ is measured by the usual $L_2$-norm $\|\hat\theta - \theta\|_2$ and the accuracy of $\hat\beta$ is measured by a weighted $L_2$-norm $||\hat{\beta}-\beta_0||_{C_\Delta}$, where
\[
C_{\Delta}(s,t)=\mathrm{Cov}\Big(\Delta X(s), ~\Delta X(t)\Big), ~~~ \mbox{~and~}~~~ \|\beta\|_{C_{\Delta}}^{2}=\int\int\beta(s)C_{\Delta}(s,t)\beta(t)dsdt.
\]
We now explain why we do not consider the convergence of $\hat\beta$ with respect to the usual $L_2$-norm in the present paper. In general, $\|\hat\beta - \beta_0\|_2^2=\int_0^1(\hat\beta(t) - \beta_0(t))^2dt$ may not converge to zero in probability, and to obtain the convergence of $\|\hat\beta - \beta_0\|_2^2$ one needs additional smoothness conditions linking  $\beta$ to the functional predictor $X$; see \cite{Crambes2009} for a discussion of this phenomenon for functional linear models. 
On the other hand, in the presence of censoring, the Kullback-Leibler distance between two probability measures $\mathbb P_{h_0,\hat\alpha}$ and $\mathbb P_{h_0,\alpha_0}$ is equivalent to the $L_2$ distance $d$ in (\ref{equ:d}). When failure times $T^u$ are  fully observed, i.e. $\Delta=1$ is true regardless of $X(s)$, the $\|\cdot\|_{C_\Delta}$ norm becomes $\|\cdot\|_{C}$, where $C(t, s) = \mathrm{Cov}(X(t), X(s))$ is the covariance function of $X$.
This norm $\|\cdot\|_{C}$ has been widely used for functional linear models (e.g. \citealp{CaiYuan2012}).

Many people have studied parametric, nonparametric, or semiparametric modeling of the covariate effects using the Cox model (e.g. \citealp{Sasieni1992a, Sasieni1992b}; \citealp{Hastie1986,Hastie1990}; \citealp{Huang1999} and references therein) and 
\cite{Cox1972}  proposed to use partial likelihood to estimate $\theta$ in (\ref{eq:cox}). The advantage of using  partial likelihood is that it estimates $\theta$ without knowing or involving  the functional form of $h_0$.
The asymptotic equivalence of the partial likelihood estimator and the maximum likelihood estimator has been established by several authors (\citealp{Cox1975}; \citealp{Tsiatis1981}; \citealp{Andersen1982}; \citealp{Johansen1983}; \citealp{Jacobsen1984}). 
On the other hand, the literature on functional regression, in particular for functional linear models, is too vast to be summarized here. Hence, we only refer to the well-known monographs \cite{Ramsay2005} and \cite{Ferraty2006}, and some recent developments such as \cite{James2002}, \cite{Muller2005}, \cite{Hall2007}, \cite{Crambes2009}, \cite{YuanCai2010}, \cite{CaiYuan2012} for further references.  Recently, {\cite{Kong2014} studied a similar functional Cox model to establish some asymptotic properties but without investigating the optimality property. Moreover, their estimate of the parametric component converges at a rate which is slower than root-n. Thus, it is desirable to develop new theory to systematically investigate properties of the estimates and establish their  optimal asymptotic properties. In addition, instead of assuming that both $\beta_0$ and $X$ can be represented by the same set of basis functions, we adopt a more general reproducing kernel Hilbert space framework to estimate the coefficient function.  

In this paper we study the  convergence of   the estimator  $\hat{\alpha}= (\hat{\theta},  \hat{\beta})$  under the framework of the reproducing kernel Hilbert space and the Cox model. The  true coefficient function $\beta_0$ is assumed to reside in a reproducing
kernel Hilbert space $\mathcal{H}(K)$ with the reproducing kernel $K$, which is
a subspace of the collection of square integrable functions on $[0, 1]$. There are two main challenges for our asymptotic analysis,  the nonlinear structure of the Cox model, and the fact that  the reproducing kernel $K$ and the covariance kernel $C_\Delta$ may not share a common ordered set of eigenfunctions, so $\beta_0$ can not be represented effectively by the leading eigenfunctions of $C_\Delta$.
We obtain the estimator by maximizing a penalized partial likelihood and establish   $\sqrt{n}$-consistency, asymptotic normality, and semi-parametric efficiency  of the estimator  $\hat{\theta}$ of the finite-dimensional regression parameter.   

A second optimality result is on the estimator of the coefficient function, which achieves the minimax optimal rate of convergence under the weighted $L_2$-risk.
The optimal rate of convergence is established in the following two steps. 
First, the convergence rate of the penalized partial likelihood estimator is calculated. Second, in the presence of the nuisance parameter $h_0$, the minimax lower bound on the risk is derived, which matches the  convergence rate of the partial likelihood estimator.  Therefore the estimator is rate-optimal. 
Furthermore, an efficient algorithm is developed to estimate the coefficient function. Implementation of the estimation approach, selection of the smoothing parameter, as well as calculation of the information bound $I(\theta)$ are all discussed in detail.

The rest of the paper is organized as follows. Section 2 summarizes the main results regarding the asymptotic analysis of the penalized partial likelihood predictor. Implementation of the estimation approach is discussed in Section 3, including a GCV method to select the  smoothing parameter and a method of calculating the information bound of $\theta$ based on the alternating conditional expectations (ACE) algorithm.  Section 4 contains numerical studies, including simulations and a data application. All the proofs are relagated to Section 5, followed by several technical details in the Appendix.

\section{Main Results}\label{sec:main}


We estimate $\alpha_{0}=(\theta_0, \beta_0)\in {\mathbb R}^p\times {\cal H}(K)$ by maximizing the penalized log partial likelihood,  
\begin{equation}
\hat{\alpha}_{\lambda}={\arg\min}_{\alpha\in{\mathbb R}^p\times\mathcal{H}(K)}l_{n}(\alpha)+\lambda~ J(\beta), \label{eq:penal lh}
\end{equation}
 where the negative log partial likelihood is given by
\begin{equation}\label{eq:log partial lh}
l_{n}(\alpha)=-{\frac{1}{n}}\sum_{i=1}^{n}\Delta_{i}\Big\{\eta_\alpha(W_i)-\log\sum_{T_{j}\geq T_{i}}\exp(\eta_\alpha(W_j))\Big\},
\end{equation}
$J$ is a penalty function controlling the smoothness
of $\beta$, and $\lambda$ is a smoothing parameter that balances
the fidelity to the model and the plausibility of $\beta$. 
The choice of the penalty function $J(\cdot)$ is a squared
semi-norm associated with $\mathcal{H}$ and its norm. In general, $\mathcal{H}(K)$ can be decomposed with respect to the penalty $J$ as
$\mathcal{H}=\mathcal{N}_{J}+\mathcal{H}_{1}$, where $\mathcal{N}_{J}$
is the null space defined as
\[
\mathcal{N}_{J}=\{\beta\in\mathcal{H}(K):\, J(\beta)=0\},
\]
 and $\mathcal{H}_{1}$ is its orthogonal complement in $\mathcal{H}$.
Correspondingly, the kernel $K$ can be decomposed as $K=K_{0}+K_{1}$, where
$K_{0}$ and $K_{1}$ are kernels for the subspace $\mathcal{N}_{J}$
and $\mathcal{H}_{1}$ respectively.  For example,
for the Sobolev space, 
\[
\mathcal{W}_{2,m}=\Big\{f\,:[0,1]\rightarrow R |  \ f,f',\ldots f^{(m-1)} \text{are absolutely continuous},\, f^{(m)}\in L_2\Big\},
\]
 endowed with the norm
\begin{equation}
||f||_{\mathcal{W}_{2,m}}=\sum_{v=0}^{m-1}f^{(v)}(0)+\int_{0}^{1}(f^{(m)}(s))^{2}ds,\label{eq:Wm norm}
\end{equation}
where the penalty $J(\cdot)$ in this case can be assigned as
$J(f)=\int_{0}^{1}(f^{(m)}(s))^{2}ds$.

\medskip


We first present some main assumptions: 
\begin{description}
\item [{(A1)}] Assume $\mathbb{E}(\Delta Z)=0$ and $\mathbb E(\Delta X(s))=0, s\in [0,1]$.
\item [{(A2)}] The failure time $T^{u}$ and the censoring time $T^{c}$
are conditionally independent given $W$.
\item [{(A3)}]  The observed event time $T_{i},\ 1\leq i\leq n$ is
in a finite interval, say $[0,\tau]$, and there exists a small positive constant $\varepsilon$ such
that:  (i) $\mathbb P(\Delta=1|W)>\varepsilon$,  and (ii) $\mathbb P(T^{c}>\tau|W)>\varepsilon$
almost surely with respect to the probability measure of $W$.

 \item [{(A4)}]  The covariate $Z$ takes values in a bounded subset of $\mathbb{R}^p$, and  the $L_2$-norm $||X||_{2}$ of $X$ is bounded almost surely.

\item[{(A5)}] Let $0<c_1<c_2<\infty$ be two constants. The baseline joint density $f(t,\Delta=1)$ of $(T,\Delta=1)$ satisfies $ c_1<f(t,\Delta=1)<c_2$ for all $t\in [0,\tau]$.
\end{description}

Condition (A1) requires $Z$ and $X$ to be suitably centered.  Since  the partial likelihood function (\ref{eq:log partial lh})
does not change when centering  $Z_{i}$ as $Z_{i}-\sum\Delta_i Z_{i}/\sum\Delta_i$ or $X_{i}$ as $X_{i}-\sum\Delta_i X_{i}/\sum\Delta_i$,
centering does not impose any real restrictions.
In addition, centering by $\mathbb E(\Delta Z)$ and $\mathbb E(\Delta X)$, instead of centering by $\mathbb E(Z)$ and  $\mathbb E(X)$,
simplifies the asymptotic analysis.
Conditions (A2) and (A3) are common assumptions for analyzing right-censored
data, where (A2) guarantees the censoring mechanism to be non-informative
while (A3) avoids the
unboundedness of the partial likelihood at the end point of the support
of the observed event time. This is a reasonable assumption since the
experiment can only last for a certain amount of time in practice. 
Assumption (A3)(i) further ensures
the probability of being uncensored to be positive regardless of the
covariate and (A3)(ii) controls the censoring  rate so that it will
not be too heavy. 
Assumption (A4) places a
boundedness restriction on the covariates. This assumption can be relaxed to the sub-Gaussianity of $||X||_{2}$, which implies that with a large probability, $||X||_{2}$ is bounded. Condition (A5) and condition (A1) together guarantee the identifiability of the model. Moreover the joint density $f(T,Z,X,\Delta=1)$ 
is bounded away from zero and infinity under assumptions (A3)-(A5), which is used to calculate the information bound and convergence rate later in Theorem \ref{thm:score function} and Theorem \ref{thm: convergence rate}.

Let $r(W)=\exp(\eta_{\alpha}(W))$, then the counting process martingale
associated with model (1) is:
\[
M(t)=M(t|W)=\Delta I\{T\leq t\}-\int_{0}^{t}I\{T\geq u\}r(W)dH_{0}(u),
\]
where $H_{0}(t)=\int_0^t h_0(u)du$ is the baseline cumulative hazard function. For two sequences ${a_{k}:k\ge1}$ and ${b_{k}:k\ge1}$
of positive real numbers, we write $a_{k}\asymp b_{k}$ if there are
positive constants $c$ and $C$ independent of $k$ such that $c\le a_{k}/b_{k}\le C$
for all $k\ge1$.

\begin{thm}\label{thm:score function}
Under (A1)-(A5), the
efficient score for the estimation of $\theta$ is 
\[
l_{\theta}^{*}(T,\Delta,W)=\int_{0}^{\tau}(Z-a^{*}(t)-\eta_{g^{*}}(X))dM(t)
\]
where $(a^{*},\, g^{*})\in{L}_{2}\times\mathcal{H}(K)$
is a solution that minimizes
\[
\mathbb{E}\Big\{\Delta\|Z-a(T)-\eta_{g}(X)\|^{2}\Big\}.
\]
Here $a^{*}$ can be expressed as $a^{*}(t)=\mathbb{E}[Z-\eta_{g^{*}}(X)|\, T=t,\,\Delta=1]$.
The information bound for the estimation of $\theta$ is
\[
I(\theta)=\mathbb{E}[l_{\theta}^{*}(T,\Delta,W)]^{\otimes2}=\mathbb{E}\{\Delta[Z-a^{*}(T)-\eta_{g^{*}}(X)]^{\otimes2}\},
\]
where $y^{\otimes2}=yy'$ for column vector $y\in \mathbb R^{d}$.
\end{thm}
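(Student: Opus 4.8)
The plan is to run the standard semiparametric efficiency program. I regard the law of $(T,\Delta,W)$ as indexed by the finite-dimensional parameter $\theta$ together with two infinite-dimensional nuisance parameters, the baseline cumulative hazard $H_0$ and the functional coefficient $\beta$; I compute the ordinary score for $\theta$ and the nuisance tangent space, and then obtain $l_\theta^*$ by projecting the ordinary score onto the orthogonal complement of the nuisance tangent space inside $L_2(\mathbb P)$.

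First I would write the likelihood contribution of one observation. Under (A2) it factors, up to terms free of $(\theta,H_0,\beta)$, as $\{h_0(T)\,r(W)\}^{\Delta}\exp\{-\int_0^{T}r(W)\,dH_0(u)\}$ with $r(W)=\exp(\eta_\alpha(W))$. Differentiating the log-likelihood in $\theta$ and rewriting in terms of the counting-process martingale $M$ associated with model (\ref{eq:model}) gives the ordinary score $l_\theta=\int_0^\tau Z\,dM(t)$. For the nuisance directions: the submodels $dH_{0,\varepsilon}=(1+\varepsilon b(t))\,dH_0$ produce scores $\int_0^\tau b(t)\,dM(t)$, so the $H_0$-tangent set is the $L_2(\mathbb P)$-closure of $\{\int_0^\tau b(t)\,dM(t):b\in L_2[0,\tau]\}$; the submodels $\beta_\varepsilon=\beta+\varepsilon g$ with $g\in\mathcal H(K)$ produce scores $\int_0^\tau \eta_g(X)\,dM(t)$, so the $\beta$-tangent set is the closure of $\{\int_0^\tau \eta_g(X)\,dM(t):g\in\mathcal H(K)\}$. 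The nuisance tangent space is the closed linear span of the two, i.e. the closure of $\{\int_0^\tau (a(t)+\eta_g(X))\,dM(t)\}$.

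The projection step rests on the martingale isometry: for predictable integrands $U,V$ one has $\mathbb E\{\int_0^\tau U\,dM\cdot\int_0^\tau V\,dM\}=\mathbb E\int_0^\tau UV\,d\langle M\rangle=\mathbb E\{\Delta\,U(T)\,V(T)\}$, the last equality because the compensator of $\Delta I\{T\le t\}$ equals $\langle M\rangle(t)=\int_0^t I\{T\ge u\}\,r(W)\,dH_0(u)$ and $T\le\tau$ a.s. by (A3). Hence minimizing $\|l_\theta-\int_0^\tau (a(t)+\eta_g(X))\,dM(t)\|_{L_2(\mathbb P)}$ over $(a,g)$ is exactly minimizing $\mathbb E\{\Delta\|Z-a(T)-\eta_g(X)\|^2\}$, the displayed variational problem. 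The stationarity condition in $a$ for fixed $g$ gives $a^*(t)=\mathbb E[Z-\eta_{g^*}(X)\mid T=t,\Delta=1]$ (using (A3)(i) so that $\mathbb P(\Delta=1\mid T=t)>0$), and the minimizer yields $l_\theta^*=\int_0^\tau (Z-a^*(t)-\eta_{g^*}(X))\,dM(t)$; applying the same isometry componentwise to $\mathbb E\{(l_\theta^*)^{\otimes2}\}$ produces $I(\theta)=\mathbb E\{\Delta[Z-a^*(T)-\eta_{g^*}(X)]^{\otimes2}\}$.

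I expect the real work to be functional-analytic rather than a single hard computation. One must check that the submodels above are regular (differentiable in quadratic mean), so that they really generate the full nuisance tangent space; here I would adapt the classical analysis of the Cox partial likelihood. More delicate is showing that the variational problem $\min_{(a,g)}\mathbb E\{\Delta\|Z-a(T)-\eta_g(X)\|^2\}$ is attained with $g^*\in\mathcal H(K)$ --- equivalently, that the nuisance tangent space is already closed --- which is what lets the projection theorem produce a bona fide efficient score. For this I would use (A3)--(A5) to show $d\langle M\rangle$ is equivalent to Lebesgue measure on $[0,\tau]$, so the objective is coercive in the natural $L_2$ geometry, (A4) to control the $X$-contribution, and (A1) with (A5) to guarantee that the residual $Z-a^*(T)-\eta_{g^*}(X)$ is nondegenerate, hence $I(\theta)\succ0$. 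With existence and the tangent-space description in place, orthogonality of $l_\theta^*$ to the nuisance tangent space follows at once from the stationarity conditions, which identifies it as the efficient score.
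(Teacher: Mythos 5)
Your proposal is correct and follows essentially the same route as the paper: scores for $\theta$, $H_0$, and $\beta$ written as stochastic integrals against the martingale $M$, projection of $i_\theta$ onto the nuisance tangent space reduced via the martingale isometry to the weighted least-squares problem $\min_{(a,g)}\mathbb E\{\Delta\|Z-a(T)-\eta_g(X)\|^2\}$, with the conditional-expectation characterization of $a^*$ and closedness of the relevant space ensuring a well-defined minimizer. The only cosmetic difference is that you spell out the isometry computation that the paper delegates to the proof of Theorem 3.1 in Huang (1999).
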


Recall that $K$ and $C_{\Delta}$ are two real, symmetric, and nonnegative definite functions. Define a new kernel $K^{1/2}C_{\Delta}K^{1/2}: [0,1]^{2}\rightarrow\mathbb{R}$, which is a real, symmetric,  square integrable, and nonnegative definite
function. Let $L_{K^{1/2}C_{\Delta}K^{1/2}}$ be the corresponding linear operator
${L}_{2}\rightarrow{L}_{2}$. Then Mercer’s theorem (\citealp{Riesz1955})
implies that there exists a set of orthonomal eigenfunctions $\{\phi_{k}:k\geq1\}$
and a sequence of eigenvalues $s_{1}\geq s_{2}\geq\ldots>0$ such
that 
\[
K^{1/2}C_{\Delta}K^{1/2}(s,t)=\sum_{k=1}^\infty s_{k}\phi_{k}(s)\phi_{k}(t), ~~~~~L_{K^{1/2}C_{\Delta}K^{1/2}} (\phi_k) = s_k.  
\]

\begin{thm}\label{thm: convergence rate} Assume (A1)-(A5) hold. 
\begin{enumerate}
\item[(i)](consistency) $d(\hat{\alpha},\alpha_{0})\stackrel{p}{\rightarrow}0$,
provided that $\lambda\rightarrow0$ as $n\rightarrow\infty$. 
\item[(ii)](convergence rate) If the eigenvalues $\{s_{k}:k\geq1\}$ of
$K^{1/2}C_{\Delta}K^{1/2}$ satisfy $s_{k}\asymp k^{-2r}$ for some
constant $0<r<\infty$,  then for  $\lambda=O(n^{-\frac{2r}{2r+1}})$ we have  \[
d(\hat{\alpha},\alpha_{0})=O_p(n^{-\frac{r}{2r+1}}). 
\]

\item[(iii)] If $I(\theta)$ is nonsingular, then 
$\|\hat{\theta}-\theta_{0}\|_{2}=O_p(n^{-\frac{r}{2r+1}})$
and 
\[
\lim_{A\rightarrow\infty}\lim_{n\rightarrow\infty}\sup_{\beta_{0}\in\mathcal{H}(K)}\mathbb{P}_{h_{0}\beta_{0}}\Big\{\|\hat{\beta}_{\lambda}-\beta_{0}\|_{C_{\Delta}}\geq An^{-\frac{r}{2r+1}}\Big\}=0.
\]
\end{enumerate}
\end{thm}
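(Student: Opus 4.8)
The plan is to obtain all three parts from one penalized $M$-estimation analysis of the partial likelihood, carrying the reproducing-kernel geometry through the eigensystem $\{(s_k,\phi_k)\}$ of $L_{K^{1/2}C_\Delta K^{1/2}}$. The first ingredient is a quadratic approximation of the partial likelihood near $\alpha_0$. Writing $\bar S_n(t,\alpha)=n^{-1}\sum_j I\{T_j\ge t\}e^{\eta_\alpha(W_j)}$ with deterministic limit $s(t,\alpha)=\mathbb E[I\{T\ge t\}e^{\eta_\alpha(W)}]$, the population contrast $\ell(\alpha)=-\mathbb E[\Delta\{\eta_\alpha(W)-\log s(T,\alpha)\}]$ has a first derivative that vanishes at $\alpha_0$ by the martingale property of $M(t)$ (exactly as in the efficient-score computation behind Theorem \ref{thm:score function}) and a second derivative uniformly comparable to the bilinear form attached to $d$, so that $\ell(\alpha)-\ell(\alpha_0)\asymp d^2(\alpha,\alpha_0)$ on a neighbourhood of $\alpha_0$. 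Assumptions (A3)--(A5) are precisely what makes this work: they keep $\eta_\alpha(W)$ and the risk-set averages bounded away from $0$ and $\infty$, so the Hessian does not degenerate, and they make the Kullback-Leibler divergence equivalent to $d^2$ as noted in the introduction; the nuisance $h_0$ drops out once one passes to the counting-process martingale $M(t)$.

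Next, using that $\hat\alpha_\lambda$ minimizes $l_n(\alpha)+\lambda J(\beta)$, I would compare it with $\alpha_0$ and combine this ``basic inequality'' with the quadratic approximation to obtain, up to a Taylor remainder that is of smaller order by the previous step and (A4),
\[
d^2(\hat\alpha,\alpha_0)+\lambda J(\hat\beta)\ \lesssim\ \lambda J(\beta_0)\ +\ \big|(\mathbb P_n-\mathbb P)\big[\psi_{\hat\alpha}-\psi_{\alpha_0}\big]\big|,
\]
where $\psi_\alpha$ is the linearised partial-likelihood integrand. The stochastic term is handled by peeling/localisation over the shells $\{d(\alpha,\alpha_0)\in[2^{j}\delta,2^{j+1}\delta),\,J(\beta)\le t\}$ together with a maximal inequality, for which the needed metric-entropy bound
\[
\log N\big(\varepsilon,\{\beta:J(\beta)\le1\},\ \|\cdot\|_{C_\Delta}\big)\ \asymp\ \varepsilon^{-1/r}
\]
is exactly what the decay $s_k\asymp k^{-2r}$ of $L_{K^{1/2}C_\Delta K^{1/2}}$ provides. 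This is the step that circumvents the non-joint-diagonalisability of $K$ and $C_\Delta$: the relevant interpolation between the $\|\cdot\|_{C_\Delta}$-geometry and the $J$-geometry is governed by the $\{s_k\}$, not by the eigenbasis of $K$ or of $C_\Delta$ alone. Balancing the entropy integral against $\lambda J(\hat\beta)$ and optimising in $\lambda$ yields $\lambda\asymp n^{-2r/(2r+1)}$, $d(\hat\alpha,\alpha_0)=O_p(n^{-r/(2r+1)})$, and as a by-product $J(\hat\beta)=O_p(1)$; this is part (ii). Part (i) is the same argument with only $\lambda\to0$ imposed, using that the relevant class of $\eta_\alpha-\eta_{\alpha_0}$ is Glivenko-Cantelli under (A4).

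For part (iii) I would convert the $d$-bound into bounds on the two component norms. Decompose $\eta_{\hat\alpha}-\eta_{\alpha_0}=(\hat\theta-\theta_0)'Z+\int(\hat\beta-\beta_0)X$, project each coordinate of $Z$, in the $\Delta$-weighted $L_2$ inner product, onto the closed span of $\{\int gX:g\in L_2\}$, and write $Z=\tilde Z+Z^{\perp}$ accordingly; since $\int(\hat\beta-\beta_0)X$ lies in that span and $(\hat\theta-\theta_0)'Z^\perp$ is $\Delta$-orthogonal to it,
\[
d^2(\hat\alpha,\alpha_0)=(\hat\theta-\theta_0)'\,\mathbb E\{\Delta Z^{\perp}(Z^{\perp})'\}\,(\hat\theta-\theta_0)\ +\ \big\|(\hat\beta-\beta_0)+g_{\hat\theta}\big\|_{C_\Delta}^2,
\]
where $g_{\hat\theta}$ represents the $\{\int gX\}$-part of $(\hat\theta-\theta_0)'\tilde Z$ and satisfies $\|g_{\hat\theta}\|_{C_\Delta}\lesssim\|\hat\theta-\theta_0\|_2$ by (A4). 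Nonsingularity of $I(\theta)$ forces $\mathbb E\{\Delta Z^{\perp}(Z^{\perp})'\}$ to be nonsingular (it dominates $I(\theta)$, which projects out even more), so $\|\hat\theta-\theta_0\|_2\lesssim d(\hat\alpha,\alpha_0)=O_p(n^{-r/(2r+1)})$, and then the triangle inequality gives $\|\hat\beta-\beta_0\|_{C_\Delta}\le\|g_{\hat\theta}\|_{C_\Delta}+d(\hat\alpha,\alpha_0)\lesssim d(\hat\alpha,\alpha_0)$. The uniform-in-$\beta_0$ conclusion holds because every constant above depends only on the constants in (A1)--(A5) and on $J(\beta_0)$, hence is uniform over $J$-bounded balls of $\mathcal H(K)$.

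I expect the main obstacle to be the middle step: establishing the sharp $\varepsilon^{-1/r}$ entropy bound for the $J$-unit ball measured in $\|\cdot\|_{C_\Delta}$ \emph{through} the operator $K^{1/2}C_\Delta K^{1/2}$ rather than any single eigenbasis, and then pushing this through the nonlinear, non-i.i.d.\ partial likelihood, where the $\sum_{T_j\ge T_i}$ coupling must be linearised via $M(t)$ and the Taylor remainder kept uniformly of smaller order than $d^2$ on the localised shells. The first and third steps are comparatively routine given (A3)--(A5) and Theorem \ref{thm:score function}.
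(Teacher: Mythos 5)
Your plan follows essentially the same route as the paper for parts (i) and (ii): the paper's Lemma \ref{lem bound 1} is exactly your curvature step ($P_\Delta m_0(\cdot,\alpha)-P_\Delta m_0(\cdot,\alpha_0)\asymp -d^2(\alpha,\alpha_0)$), and the rate is obtained from the shell-wise modulus bound $\mathbb{E}^*\sup_{\delta/2\le d(\alpha,\alpha_0)\le\delta}\sqrt{n}|(M_n-M_0)(\alpha-\alpha_0)|\lesssim\delta^{(2r-1)/(2r)}$, whose entropy input is precisely your $\varepsilon^{-1/r}$ bound derived by diagonalizing through $K^{1/2}C_\Delta K^{1/2}$ (writing $\beta=\sum_k b_k L_{K^{1/2}}\phi_k$ so that $d^2(\beta,\beta_0)=\sum_k s_k(b_k-b_k^0)^2$); the paper then invokes Theorem 3.4.1 of van der Vaart and Wellner, which is the peeling argument you describe, with the penalty entering through the basic inequality $M_n(\hat\alpha_\lambda)\ge M_n(\alpha_0)-O_p(n^{-2r/(2r+1)})$ for $\lambda=O(n^{-2r/(2r+1)})$. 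Two points where the executions differ: for the non-i.i.d.\ risk-set term the paper does not linearize via the martingale $M(t)$ but splits $(M_n-M_0)(\alpha-\alpha_0)$ into an i.i.d.\ empirical-process part (with the population $S_0$) plus the discrepancy $\log(S_{0n}/S_0)$, which it controls by a uniform bound on $S_{0n}-S_0$ and a second entropy bound of the same order (Lemma \ref{lem: conv I and III}); and for consistency the Glivenko–Cantelli property is only available after truncating to $\mathbb{R}_M\times\mathcal{H}^M=\{\|\theta\|_\infty<M,\|\beta\|_K<M\}$ and letting $M\to\infty$, a localization your sketch leaves implicit. In part (iii) you take a genuinely (if mildly) different decomposition: you project $Z$ only onto the closed span of $\{\eta_g(X)\}$ and argue that the resulting residual Gram matrix dominates $I(\theta)$, whereas the paper projects onto the least-favorable direction $a^*(T)+\eta_{g^*}(X)$ from Theorem \ref{thm:score function}, so that the first term of (\ref{eq:dXidThetadbeta}) is exactly $(\hat\theta-\theta_0)'I(\theta)(\hat\theta-\theta_0)$; your version buys independence from the explicit ACE solution (at the cost of a density/containment remark to justify the domination), the paper's ties the bound directly to the information matrix assumed nonsingular. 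Both give $\|\hat\theta-\theta_0\|_2\lesssim d(\hat\alpha,\alpha_0)$ and then $\|\hat\beta-\beta_0\|_{C_\Delta}\lesssim d(\hat\alpha,\alpha_0)$ by the same triangle-inequality step, so your argument is sound; your caveat that uniformity in $\beta_0$ really holds over $J$- (or $\|\cdot\|_K$-) bounded balls mirrors a restriction the paper itself imposes at the start of its proof.
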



Theorem \ref{thm: convergence rate} indicates that the convergence rate is determined by the
decay rate of the eigenvalues of $K^{1/2}C_{\Delta}K^{1/2}$, which is
jointly determined by the eigenvalues of both reproducing kernel $K$
and the conditional covariance function $C_{\Delta}$ as well as  by the
alignment between $K$ and $C_{\Delta}$. When $K$ and $C_{\Delta}$ are  perfectly aligned, meaning that  $K$ and $C_{\Delta}$ have the same ordered eigenfunctions,
the decay rate of $\{s_{k}:k\geq1\}$  equals to the summation
of the decay rates of the eigenvalues of $K$ and $C_{\Delta}$. \cite{CaiYuan2012} established a similar result for functional linear models, for which  the optimal prediction risk depends on the decay rate of the eigenvalues of $K^{1/2}CK^{1/2}$, where  $C$ is the covariance function of $X$.  

The next theorem establishes the asymptotic normality of $\hat\theta$ with root-n consistency.
\begin{thm}
\label{thm: asymp normal}Suppose (A1)-(A5) hold, and that  the Fisher information 
$I(\theta_0)$ is nonsingular. Let $\hat{\alpha}=(\hat{\theta},\hat{\beta})$
be the estimator given by (\ref{eq:penal lh}) with $\lambda=O(n^{-\frac{2r}{2r+1}})$.
Then
\[
\sqrt{n}(\hat{\theta}-\theta_0)=n^{-1/2}I^{-1}(\theta_0)\sum_{i=1}^{n}l_{\theta_0}^{*}(T_{i},\Delta_{i},W_{i})+o_{p}(1)\stackrel{d}{\rightarrow}\mathcal{N}(0,\Sigma), 
\]
where $\Sigma=I^{-1}(\theta_0)$.
\end{thm}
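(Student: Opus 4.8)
I would show that $\hat\theta$ is asymptotically linear with the efficient influence function and then invoke a central limit theorem. The two available ingredients are the efficient score $l^{*}_{\theta_0}$ and its least favorable direction $(a^{*},g^{*})$ from Theorem~\ref{thm:score function}, and the weighted-$L_2$ rate $d(\hat\alpha,\alpha_{0})=O_p(n^{-r/(2r+1)})$ from Theorem~\ref{thm: convergence rate}. Since $\hat\alpha_{\lambda}=(\hat\theta,\hat\beta)$ minimizes $l_{n}(\alpha)+\lambda J(\beta)$ over $\mathbb R^{p}\times\mathcal H(K)$ and both terms are Gâteaux differentiable (the partial likelihood enters $\alpha$ only through the linear functional $\eta_{\alpha}(W)$, and $J$ is a squared semi-norm), the directional derivative vanishes along the least favorable path $t\mapsto(\hat\theta+t\,e_{j},\ \hat\beta-t\,g^{*}_{j})$, giving for each $j=1,\dots,p$
\[
\dot l_{n}(\hat\alpha)[e_{j},-g^{*}_{j}]=\lambda\,\dot J(\hat\beta)[g^{*}_{j}],
\]
where $e_{j}$ is the $j$-th coordinate vector, $g^{*}=(g^{*}_{1},\dots,g^{*}_{p})$, and $\dot l_{n}(\alpha)[v,f]=\frac{d}{ds}l_{n}(\alpha+s(v,f))\big|_{s=0}$. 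Since $J(\hat\beta)=O_p(1)$ (a by-product of the rate analysis), the right-hand side is $O_p(\lambda)=o_p(n^{-1/2})$ under $\lambda=O(n^{-2r/(2r+1)})$.

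\textbf{Leading term.} I would Taylor-expand the left-hand side about $\alpha_{0}$,
\[
\dot l_{n}(\hat\alpha)[e_{j},-g^{*}_{j}]=\dot l_{n}(\alpha_{0})[e_{j},-g^{*}_{j}]+\ddot l_{n}(\alpha_{0})\big[(e_{j},-g^{*}_{j}),\hat\alpha-\alpha_{0}\big]+R_{n,j},
\]
and analyze the three pieces. Rewriting $\dot l_{n}(\alpha_{0})$ through the counting-process martingale $M$,
\[
-\dot l_{n}(\alpha_{0})[e_{j},-g^{*}_{j}]=\frac1n\sum_{i=1}^{n}\int_{0}^{\tau}\big(Z_{ij}-\eta_{g^{*}_{j}}(X_{i})-\bar e_{n}(t)\big)\,dM_{i}(t),
\]
where $\bar e_{n}(t)=\sum_{k}I\{T_{k}\ge t\}r(W_{k})\big(Z_{kj}-\eta_{g^{*}_{j}}(X_{k})\big)\big/\sum_{k}I\{T_{k}\ge t\}r(W_{k})$. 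Under (A3)--(A5) a uniform law of large numbers gives $\sup_{t\in[0,\tau]}|\bar e_{n}(t)-a^{*}_{j}(t)|\stackrel{p}{\rightarrow}0$, using the Cox-model identity $\mathbb E[h(W)r(W)I\{T\ge t\}]/\mathbb E[r(W)I\{T\ge t\}]=\mathbb E[h(W)\mid T=t,\Delta=1]$ to match the limit with $a^{*}_{j}(t)=\mathbb E[Z_{(j)}-\eta_{g^{*}_{j}}(X)\mid T=t,\Delta=1]$. With the boundedness supplied by (A3)--(A5) (so Rebolledo's martingale CLT applies), $-\dot l_{n}(\alpha_{0})[e_{j},-g^{*}_{j}]=\frac1n\sum_{i}l^{*}_{\theta_{0},j}(T_{i},\Delta_{i},W_{i})+o_p(n^{-1/2})$.

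\textbf{No-bias condition, remainders, and assembly.} Split $\hat\alpha-\alpha_{0}$ into its $\theta$- and $\beta$-components. The limiting Hessian satisfies $\ddot l(\alpha_{0})[(v_{1},f_{1}),(v_{2},f_{2})]=\mathbb E\{\Delta(\eta_{v_{1},f_{1}}(W)-\bar e_{v_{1},f_{1}}(T))(\eta_{v_{2},f_{2}}(W)-\bar e_{v_{2},f_{2}}(T))\}$; evaluated at $(e_{j},-g^{*}_{j})$ against $(0,f)$ it equals $\mathbb E\{\Delta[Z_{(j)}-a^{*}_{j}(T)-\eta_{g^{*}_{j}}(X)][\eta_{f}(X)-\mathbb E(\eta_{f}(X)\mid T,\Delta=1)]\}$, which is exactly $0$ for every $f\in\mathcal H(K)$ by the first-order conditions defining $(a^{*},g^{*})$ (namely $\mathbb E\{\Delta[Z-a^{*}(T)-\eta_{g^{*}}(X)]\eta_{f}(X)\}=0$ together with $\mathbb E[Z-\eta_{g^{*}}(X)\mid T,\Delta=1]=a^{*}$); this is the no-bias condition, and it also yields $\ddot l(\alpha_{0})[(e_{j},-g^{*}_{j}),(u,0)]=e_{j}'I(\theta_{0})u$. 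The empirical fluctuations $(\ddot l_{n}-\ddot l)(\alpha_{0})[(e_{j},-g^{*}_{j}),(0,\hat\beta-\beta_{0})]$ and $(\ddot l_{n}-\ddot l)(\alpha_{0})[(e_{j},-g^{*}_{j}),(\hat\theta-\theta_{0},0)]$ are $o_p(n^{-1/2})$: under (A4) and the RKHS embedding, the relevant classes of functions of $(T,\Delta,W)$ indexed by $f$ in a shrinking ball of radius $\asymp d(\hat\alpha,\alpha_{0})$ are $P$-Donsker with a modulus shrinking at the same rate. The third-order remainder $R_{n,j}$ is $O_p(d(\hat\alpha,\alpha_{0})^{2})=o_p(n^{-1/2})$ because the higher derivatives of $\log\sum\exp(\cdot)$ are bounded under (A4) and $\|\eta_{\hat\alpha-\alpha_{0}}\|_{\infty}=O_p(1)$ from the rate analysis. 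Collecting the pieces and stacking over $j$,
\[
I(\theta_{0})\sqrt n(\hat\theta-\theta_{0})=n^{-1/2}\sum_{i=1}^{n}l^{*}_{\theta_{0}}(T_{i},\Delta_{i},W_{i})+o_p(1).
\]
Since $I(\theta_{0})$ is nonsingular, inverting gives the stated expansion. The summands $l^{*}_{\theta_{0}}(T_{i},\Delta_{i},W_{i})=\int_{0}^{\tau}(Z_{i}-a^{*}(t)-\eta_{g^{*}}(X_{i}))\,dM_{i}(t)$ are i.i.d., mean zero by the martingale property, with covariance $I(\theta_{0})$ by Theorem~\ref{thm:score function} and finite moments under (A3)--(A5); the multivariate CLT then gives $n^{-1/2}\sum_{i}l^{*}_{\theta_{0}}\stackrel{d}{\rightarrow}\mathcal N(0,I(\theta_{0}))$, hence $\sqrt n(\hat\theta-\theta_{0})\stackrel{d}{\rightarrow}\mathcal N(0,\Sigma)$ with $\Sigma=I^{-1}(\theta_{0})$; and since $I^{-1}(\theta_{0})l^{*}_{\theta_{0}}$ is the efficient influence function, $\hat\theta$ is also semiparametrically efficient.

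\textbf{Main obstacle.} The delicate part is the pair of steps that couple the data-dependent nuisance error $\hat\beta-\beta_{0}$ to the efficient direction: establishing the exact no-bias identity $\ddot l(\alpha_{0})[(e_{j},-g^{*}_{j}),(0,f)]=0$ from the characterization of $(a^{*},g^{*})$ in the regime where $K$ and $C_{\Delta}$ need not be aligned, and showing that the matching empirical remainder (and the third-order Taylor remainder) is $o_p(n^{-1/2})$ by an equicontinuity argument over a shrinking weighted-$L_2$ ball, using only the rate of Theorem~\ref{thm: convergence rate}. Throughout, the partial likelihood is not an i.i.d. sum, so each derivative must first be re-expressed as a martingale integral plus a uniformly negligible term via the uniform convergence of the at-risk averages under (A3)--(A5).
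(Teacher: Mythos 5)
Your proposal is correct and follows essentially the same route as the paper: the penalized score equations at $\hat\alpha$ in the $Z$-direction and the least favorable direction $g^{*}$ (with the penalty contributing only $O_p(\lambda)=o_p(n^{-1/2})$), the no-bias orthogonality coming from the characterization of $(a^{*},g^{*})$ in Theorem \ref{thm:score function}, stochastic equicontinuity over a ball of radius $d(\hat\alpha,\alpha_0)=O_p(n^{-r/(2r+1)})$ to kill the empirical fluctuation and quadratic remainder terms, and the martingale representation with the identity (\ref{eq:ssieni}) to replace the at-risk averages by $a^{*}(t)$ and identify the efficient score before applying the CLT. The only difference is organizational — you Taylor-expand the empirical score and control Hessian fluctuations directly, whereas the paper first transfers to the population score via its Lemma \ref{lm: normal 2} and then expands it in Lemma \ref{lm: normal 3} — which is the same argument in different bookkeeping.
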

For the nonparametric coefficient function $\beta$, it is of interest to see whether the convergence rate of $\hat{\beta}$ in Theorem \ref{thm: convergence rate} is optimal. In the following, we derive a minimax lower bound for the risk.

\begin{thm}\label{thm: lower bound} 
Assume that the baseline hazard function $h_0 \in {\cal F} = \{h: H(t)=\int_0^t h(s)ds<\infty,\text{ for any }0<t<\infty\}$.
Suppose that the eigenvalues $\{s_{k}:k\geq1\}$ of $K^{1/2}C_{\Delta}K^{1/2}$
satisfy $s_{k}\asymp k^{-2r}$ for some constant $0<r<\infty$.  Then,
\[
\lim_{a\rightarrow0}\lim_{n\rightarrow\infty}\inf_{\hat{\alpha}}\sup_{\alpha_{0}\in\mathbb R^p\times\mathcal{H}(K)} \sup_{h_0\in {\cal F}}\mathbb P_{\alpha_0, h_0}\Big\{\big\|\hat\beta - \beta_0\big\|_{C_\Delta}\geq an^{-\frac{r}{2r+1}}\Big\}=1,
\]
where the infimum is taken over all possible predictors $\hat{\alpha}$
based on the observed data.
\end{thm}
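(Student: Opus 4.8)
The plan is a Fano-type minimax argument carried out over a finite family of coefficient functions built from the joint eigenstructure of $K$ and $C_\Delta$. I would first restrict the two suprema by fixing $\theta_0=0$ and a single admissible baseline hazard $h_0\in\mathcal F$ (for instance a constant, chosen together with the laws of $X$ and $T^c$ so that (A1)--(A5) hold for every $\beta_0$ in the family below); this reduces the problem to a lower bound over functional Cox models differing only in $\beta_0$. On that subfamily $\eta_\alpha(W)-\eta_{\alpha_0}(W)=\int_0^1(\beta-\beta_0)X$, so by (A1) the distance in (\ref{equ:d}) collapses to $d^2\big((0,\beta),(0,\beta')\big)=\|\beta-\beta'\|_{C_\Delta}^2$, and --- provided $\eta_\beta$, hence the induced hazard, stays uniformly bounded on the family --- the per-observation Kullback--Leibler divergence is comparable to $\|\beta-\beta'\|_{C_\Delta}^2$ by the KL-versus-$d$ equivalence recorded in the discussion following (\ref{equ:d}).

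To build the family, let $(s_k,\phi_k)$ be the eigenpairs of $K^{1/2}C_\Delta K^{1/2}$ and set $\psi_k:=K^{1/2}\phi_k$ (that is, $L_{K^{1/2}}\phi_k$); a short computation gives $\langle\psi_k,\psi_l\rangle_{C_\Delta}=s_k\delta_{kl}$ and $\langle\psi_k,\psi_l\rangle_{\mathcal H(K)}=\delta_{kl}$. Put $m\asymp n^{1/(2r+1)}$ and work on the block $B=\{m+1,\dots,2m\}$, on which $s_k\asymp m^{-2r}$ uniformly. For $\omega\in\{0,1\}^m$ define $\beta_\omega=\mu\sum_{k\in B}\omega_k\psi_k$ with $\mu^2=\delta^2/m$, $\delta>0$ a small constant to be fixed; then $\|\beta_\omega\|_{\mathcal H(K)}^2\le\mu^2 m=\delta^2$, so $\beta_\omega\in\mathcal H(K)$ and, using (A4) and the $\mathcal H(K)\hookrightarrow L_2$ embedding, $\eta_{\beta_\omega}$ is uniformly bounded, validating the comparison above. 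Using the Varshamov--Gilbert bound, choose $\Omega\subseteq\{0,1\}^m$ with $\log|\Omega|\ge (m/8)\log 2$ and pairwise Hamming distance $\ge m/8$; then for distinct $\omega,\omega'\in\Omega$,
\[
\|\beta_\omega-\beta_{\omega'}\|_{C_\Delta}^2=\mu^2\!\sum_{k\in B}(\omega_k-\omega'_k)^2 s_k\ \asymp\ \delta^2 m^{-2r}\ \asymp\ \delta^2 n^{-2r/(2r+1)},
\]
the lower side from the Hamming separation and $\min_{k\in B}s_k\asymp m^{-2r}$, the upper side from $|B|=m$ and $\max_{k\in B}s_k\asymp m^{-2r}$.

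With $\mathbb P_\omega$ the law of the $n$ i.i.d.\ observations under $\beta_\omega$, tensorization and the KL comparison give $\mathrm{KL}(\mathbb P_\omega\|\mathbb P_{\omega'})\lesssim n\|\beta_\omega-\beta_{\omega'}\|_{C_\Delta}^2\lesssim n\delta^2 m^{-2r}\asymp\delta^2 m$, which is at most $\tfrac1{16}\log|\Omega|$ once $\delta$ is small enough. The generalized Fano lemma then gives, for $a$ a fixed multiple of $\delta$,
\[
\inf_{\hat\beta}\ \sup_{\beta_0\in\mathcal H(K)}\ \mathbb P_{\alpha_0,h_0}\Big\{\|\hat\beta-\beta_0\|_{C_\Delta}\ge an^{-\frac{r}{2r+1}}\Big\}\ \ge\ 1-\frac{\tfrac1{16}\log|\Omega|+\log 2}{\log|\Omega|}.
\]
Since $\log|\Omega|\gtrsim m\to\infty$, the right-hand side tends, as $n\to\infty$, to $1-\kappa(\delta)$ with $\kappa(\delta)\to0$ as $\delta\to0$; because $a$ is proportional to $\delta$, letting $n\to\infty$ and then $a\to0$ forces the bound to $1$, which is the claim.

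The step I expect to be the main obstacle is the uniform upper bound $\mathrm{KL}(\mathbb P_\omega\|\mathbb P_{\omega'})\lesssim\|\beta_\omega-\beta_{\omega'}\|_{C_\Delta}^2$: this is exactly where the censoring enters, through $C_\Delta$ itself and through the requirement --- ensured by (A3)--(A5) together with the uniform boundedness of $\eta_{\beta_\omega}$ --- that the induced hazards remain bounded away from $0$ and $\infty$; it is also what compels the perturbations to lie in a fixed $\mathcal H(K)$-ball rather than merely in $\mathcal H(K)$. A secondary but necessary check is that the generating mechanism can be arranged so that (A1)--(A5) hold simultaneously for all $\beta_\omega$.
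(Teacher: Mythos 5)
Your proposal is correct and follows essentially the same route as the paper: a hypercube of perturbations $\sum_k b_k\,L_{K^{1/2}}\varphi_k$ built from the eigenfunctions of $K^{1/2}C_\Delta K^{1/2}$ on the block $\{M+1,\dots,2M\}$ with $M\asymp n^{1/(2r+1)}$, Varshamov--Gilbert separation, the identity $\|\beta-\beta'\|_{C_\Delta}^2=\sum_k (b_k-b_k')^2\mu^2 s_k$, and a Fano/Tsybakov multiple-hypothesis bound, with your small amplitude $\delta$ playing the role of the paper's parameter $\gamma$ in driving the limit to $1$ as $a\to0$. The one step you defer to a cited ``KL-versus-$d$ equivalence'' --- that the per-sample Kullback--Leibler divergence of the censored-data likelihoods is $\lesssim\|\beta_\omega-\beta_{\omega'}\|_{C_\Delta}^2$ --- is precisely what the paper proves inside the theorem by computing $\mathbb{E}[H_0(T)\mid W]=e^{-\theta_0'Z-\eta_{\beta}(X)}\mathbb{E}[\Delta\mid W]$ and expanding $\exp(\eta_{\beta^{(0)}-\beta^{(j)}})-1$ to second order, using exactly the boundedness you invoke.
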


Theorem \ref{thm: lower bound} shows that the minimax lower bound of the convergence rate
for estimating $\beta_{0}$ is $n^{-r/(2r+1)}$, which is determined
by $r$ and the decay rate of the  eigenvalues of $K^{1/2}C_{\Delta}K^{1/2}$. 
We have shown that this rate is achieved by the penalized partial likelihood predictor
 and therefore this estimator is rate-optimal. 


\section{Computation of the Estimator}

\subsection{Penalized partial likelihood}

In this section, we present an algorithm to compute the penalized partial likelihood estimator.
Let $\{\xi_{1},\ldots\xi_{m}\}$
be a set of orthonormal basis of the null space with $m=\dim(\mathcal{N}_{J})$.
The next theorem provides a closed form representation of $\hat{\beta}$ from the penalized partial likelihood method.

\begin{thm} \label{thm: beta closed form} 
The penalized partial likelihood estimator of the coefficient function is given by
\begin{equation}
\hat{\beta}_{\lambda}(t)=\sum_{k=1}^{m}d_{k}\xi_{k}(t)+\sum_{i=1}^{n}c_{i}\int_{0}^{1}X_{i}(s)K_{1}(s,t)ds, \label{eq: beta closed form}
\end{equation}
where $d_{k}\ (k=1,\ldots m)$ and $c_{i} \ (i=1,\ldots n)$ are constant coefficients.
\end{thm}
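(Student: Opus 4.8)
The plan is to invoke a generalized representer theorem: the penalized objective in (\ref{eq:penal lh}) depends on $\beta$ only through the finitely many linear functionals $\beta \mapsto \int_0^1 X_i(s)\beta(s)\,ds$ (one for each observation $i=1,\dots,n$) appearing in $\eta_\alpha(W_i)$, together with the penalty $\lambda J(\beta)$. Since $J(\beta) = \|P_1\beta\|_{\mathcal H}^2$ is the squared norm of the projection of $\beta$ onto $\mathcal H_1$, and since each evaluation functional $\beta \mapsto \int_0^1 X_i(s)\beta(s)\,ds$ is bounded on $\mathcal H(K)$ under assumption (A4) (because $\|X_i\|_2$ is bounded a.s., so this is a continuous linear functional on $L_2 \supseteq \mathcal H(K)$), the Riesz representation gives an element $\rho_i \in \mathcal H(K)$ with $\langle \rho_i, \beta\rangle_{\mathcal H} = \int_0^1 X_i(s)\beta(s)\,ds$. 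A direct computation using the reproducing property identifies the $\mathcal H_1$-component of $\rho_i$ as the function $t \mapsto \int_0^1 X_i(s) K_1(s,t)\,ds$.

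The key step is the orthogonal decomposition argument. Write $\mathcal H(K) = \mathcal N_J \oplus \mathcal H_1$ and decompose any candidate $\beta = \beta_{\mathcal N} + \beta_1$ accordingly, and further split $\beta_1 = \beta_1^{\parallel} + \beta_1^{\perp}$, where $\beta_1^{\parallel}$ lies in the span of $\{P_1\rho_1,\dots,P_1\rho_n\}\subseteq \mathcal H_1$ and $\beta_1^{\perp}$ is orthogonal to it within $\mathcal H_1$. First I would observe that for every $i$, $\int_0^1 X_i(s)\beta(s)\,ds = \langle \rho_i,\beta\rangle_{\mathcal H}$ is unchanged if we drop $\beta_1^{\perp}$, since $\langle \rho_i, \beta_1^{\perp}\rangle_{\mathcal H} = \langle P_1\rho_i, \beta_1^{\perp}\rangle_{\mathcal H} = 0$; hence the partial likelihood $l_n(\alpha)$ is unaffected by $\beta_1^{\perp}$. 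Second, $J(\beta) = \|\beta_1\|_{\mathcal H}^2 = \|\beta_1^{\parallel}\|_{\mathcal H}^2 + \|\beta_1^{\perp}\|_{\mathcal H}^2 \geq \|\beta_1^{\parallel}\|_{\mathcal H}^2$, with equality iff $\beta_1^{\perp}=0$. Therefore any minimizer must have $\beta_1^{\perp}=0$, i.e. $\hat\beta_\lambda \in \mathcal N_J \oplus \mathrm{span}\{P_1\rho_1,\dots,P_1\rho_n\}$. Expanding the $\mathcal N_J$-component in the orthonormal basis $\{\xi_1,\dots,\xi_m\}$ and the $\mathcal H_1$-component in $\{P_1\rho_i\}$ yields exactly (\ref{eq: beta closed form}) with scalars $d_k$ and $c_i$.

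I would also need to address existence of a minimizer: once we know the search can be restricted to the finite-dimensional set above, the objective becomes a function of the vector $(d,c)\in\mathbb R^m\times\mathbb R^n$ (jointly with $\theta$), and the penalty $\lambda\|\sum_i c_i P_1\rho_i\|_{\mathcal H}^2$ is coercive in the $c$-directions that affect $l_n$; combined with the convexity and continuity of $l_n$ this gives a minimizer, so the representation is not vacuous. The main obstacle is a bookkeeping one rather than a conceptual one: verifying carefully that the linear functional $\beta\mapsto\int X_i\beta$ is bounded on $\mathcal H(K)$ and that its Riesz representer has $\mathcal H_1$-part equal to $\int X_i(s)K_1(s,\cdot)\,ds$ — this requires the reproducing property of $K_1$ on $\mathcal H_1$ together with the fact that $\mathcal N_J$ is annihilated by $P_1$, and one must be slightly careful that $X_i$ itself need not lie in $\mathcal H(K)$ (only in $L_2$), so the representer is obtained by projecting the $L_2$-representation back into $\mathcal H(K)$.
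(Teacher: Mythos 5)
Your argument is correct and is essentially the paper's: the paper proves Theorem \ref{thm: beta closed form} simply by citing the generalized representer lemma for smoothing splines (Wahba 1990; Yuan and Cai 2010), and your orthogonal-decomposition argument --- the likelihood depends on $\beta$ only through $\langle\rho_i,\beta\rangle_{\mathcal H}$, the component of $\beta$ orthogonal to $\mathcal N_J\oplus\mathrm{span}\{P_1\rho_1,\dots,P_1\rho_n\}$ leaves $l_n$ unchanged while only increasing $J(\beta)$, and the $\mathcal H_1$-part of $\rho_i$ is $\int_0^1 X_i(s)K_1(s,\cdot)\,ds$ --- is exactly the standard proof of that lemma, written out in full.
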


Theorem $\ref{thm: beta closed form}$ is a direct application of the generalized
version of the well-known representer lemma for smoothing splines
(see  \cite{Wahba1990} and \cite{YuanCai2010}). We omit the proof here. In fact, the algorithm can be made more efficient without using all $n$ bases $\int_{0}^{1}X_{i}(s)K_{1}(s,t)ds$, $i=1, \ldots, n$ in (\ref{eq: beta closed form}). \cite{Gu2013} showed that, under some conditions, a more efficient estimator, denoted by $\beta_{\lambda}^{*}$, sharing the same convergence rate with $\hat{\beta}_{\lambda}$,
can be calculated in the data-adaptive finite-dimensional space 
\[
\mathcal{H}^{*}=\mathcal{N}_{J}\oplus\big\{ K_{1}(\tilde{X}_{j},\cdot),\, j=1,\ldots,q\big\},
\]
where $\{\tilde{X}_{j}\}$ is a random subset of $\{X_{i}:\Delta_{i}=1\}$
and 
\[
K_{1}(\tilde{X}_{j},\cdot)=\int_{0}^{1}\tilde{X}_{j}(s)K_{1}(s,\cdot)ds.
\]
Here,
$q=q_{n}\asymp n^{2/(ps+1)+\epsilon}$ for some $s>1$ and     $ \ p\in[1,2]$, 
and for any $\epsilon>0$. Therefore, $\beta_{\lambda}^{*}$
is given by
\[
\beta_{\lambda}^{*}(t)=\sum_{k=1}^{m}d_{k}\xi_{k}(t)+\sum_{j=1}^{q}c_{j}K_{1}(\tilde{X}_{j},t).
\]
The computational efficiency is more prominent when $n$ is large, as the number of coefficients is significantly reduced from $n+m$ to $q+m$.


For the Sobolev space $\mathcal{W}_{2,m}$, the penalty function
$J(\cdot)$ is $J(f)=\int_{0}^{1}(f^{(m)}(s))^{2}ds$, and
($\ref{eq:penal lh}$) becomes 
\begin{eqnarray}
(\hat\theta , \hat{\beta}_{\lambda}) &=& \argmin_{\theta\in\mathbb{R}^p,\beta\in\mathcal{W}_{2,m}}-{\frac{1}{n}}\sum_{i=1}^{n}\Delta_{i}\big\{\eta_{\alpha}(W_{i})-\log\sum_{T_{j}>T_{i}}\exp(\eta_{\alpha}(W_{j})\big\} \nonumber\\
 & &\quad\quad\quad+\lambda\int_{0}^{1}(\beta^{(m)}(s))^{2}ds.\label{eq: Wm penal lh}
\end{eqnarray}
Let $\xi_{\nu}=t^{\nu-1}/(\nu-1)!,\,\nu=1,\ldots m,$ be the orthonormal
basis of  the null space 
\[
\mathcal{N}_{J}=\Big\{\beta\in\mathcal{W}_{2,m},\,\int_{0}^{1}(\beta^{(m)}(s))^{2}ds=0\Big\}.
\]
Write $G_{m}(t,u)=(t-u)_{+}^{m-1}/(m-1)!$, then the kernels are in
forms of 
\[
K_{0}(s,t)=\sum_{\nu=1}^{m}\xi_{\nu}(s)\xi_{\nu}(t), \ \text{\ and \ }\ K_{1}(s,t)=\int_{0}^{1}G_{m}(s,u)G_{m}(t,u)du.
\]
Hence, the estimator is given by 
\begin{equation}
\hat{\beta}_{\lambda}(t)=\sum_{\nu=1}^{m}d_{v}\xi_{\nu}(t)+\sum_{i=1}^{n}c_{i}\int_{0}^{1}X_{i}(s)K_{1}(s,t)ds.\label{eq:Wm est}
\end{equation}
We may obtain the constants $c_i$ and $d_j$ as well as the estimator $\hat{\theta}$ by maximizing the objective function (\ref{eq: Wm penal lh}) after plugging $\hat{\beta}_{\lambda}(t)$ back into the objective function. 

\subsection{Choosing the smoothing parameter}

The choice of the smoothing parameter $\lambda$ is always a critical but difficult question. In this section, we borrow ideas from \cite{Gu2013} and provide a simple  GCV method to choose $\lambda$.  The key idea is to draw an analogy between the partial likelihood
estimation and weighted density estimation, which then allows us to define a criterion
analogous to the Kullback-Leibler distance to select the best
performing  smoothing parameter.  Below we provide more details.


Let $i_{1},\ldots i_{N}$ be the index for the uncensored data, i.e
$\Delta_{i_{k}}=1$, for $k=1,\ldots N$ and $N=\sum_{1}^{n}\Delta_{i}$.   
Define weights $w_{i_k}(\cdot)$  as $w_{i_k}(t)=I\{t \geq T_{i_k}\}$ and \[
f_{\alpha|i_{k}}(t,w)=\frac{w_{i_{k}}(t)e^{\eta_{\alpha}(w)}}{\sum_{k=1}^N w_{i_{k}}(t)e^{\eta_{\alpha}(w)}}.
\]

Following the suggestion in Section 8.5 of \cite{Gu2013},  we  extend the Kullback-Leibler distance for density functions to  the partial likelihood as follows,


\begin{eqnarray*}
KL(\hat{\alpha}_{\lambda},\alpha)&=&  \frac{1}{N}\sum_{k=1}^{N}  \mathbb{E}_{f_{\alpha_{0}|i_k}} \Big\{\log\frac{f_{\alpha_{0}|i_k}(T_{i_k}, W_{i_k})}{f_{\hat{\alpha}|i_k} (T_{i_k}, W_{i_k})}\Big\}\\
&=&\frac{1}{N}\sum_{k=1}^{N}\mathbb{E}_{f_{\alpha_{0|i_{k}}}}\Big\{\log\frac{e^{\eta_{\alpha_{0}}(W_{i_{k}})}}{\sum_{j=1}^{n}w_{i_{k}}(T_{j})e^{\eta_{\alpha_{0}}(W_{j})}}-\log\frac{e^{\eta_{\hat{\alpha}_{\lambda}}(W_{i_{k}})}}{\sum_{j=1}^{n}w_{i_{k}}(T_{j})e^{\eta_{\hat{\alpha}_{\lambda}}(W_{j})}}\Big\}.
\end{eqnarray*}

Dropping off terms not involving $\hat{\alpha}_{\lambda}$, we have
a relative KL distance 
\[
RKL(\hat{\alpha}_{\lambda},\alpha)=-\frac{1}{N}\sum_{k=1}^{N}\mathbb{E}_{f_{\alpha_{0|i_{k}}}}\eta_{\hat{\alpha}_{\lambda}}(W)+\frac{1}{N}\sum_{k=1}^{N}\log\sum_{j=1}^{n}w_{i_{k}}(T_{j})e^{\eta_{\hat{\alpha}_{\lambda}}(W_{j})}.
\]
The second term is ready to be computed once we have an estimate $\hat{\alpha}_{\lambda}$,
but the first term involves $\alpha_{0}$ and needs to be estimated. We
approximate the RKL by 
\[
\widehat{RKL}(\hat{\alpha}_{\lambda},\alpha_{0})=-\frac{1}{n}\sum_{i=1}^{n}\eta_{\hat{\alpha}_{\lambda}}^{[i]}(W_{i})+\frac{1}{N}\sum_{i=1}^{n}\Delta_{i}\log\sum_{T_{j}\geq T_{i}}\exp\{\eta_{\hat{\alpha}_{\lambda}}(W_{j})\}.
\]

Based on this $\widehat{RKL}(\hat{\alpha}_{\lambda},\alpha_{0})$, a function 
$\mbox{GCV}(\lambda)$ can be derived analytically when replacing the penalized
partial likelihood function by its quadratic approximation,
\begin{eqnarray*}
GCV(\lambda)&=&-\frac{1}{n}\sum_{i=1}^{n}\eta_{\hat{\alpha}_{\lambda}}(W_{i})+\frac{1}{n(n-1)}\mbox{tr}[(SH^{-1}S)(\mbox{diag}\Delta-\Delta\boldsymbol{1}'/n)]\\
 & & \quad \quad \quad+\frac{1}{N}\sum_{i=1}^{n}\Delta_{i}\log\sum_{T_{j}\geq T_{i}}\exp\{\eta_{\hat{\alpha}_{\lambda}}(W_{j})\}.
\end{eqnarray*}
Details of deriving $\mbox{GCV}(\lambda)$ are given in Section $\ref{sub: GCV}$.



\subsection{Calculating the information bound $I(\theta)$}\label{Sec: ACE}

To calculate the information bound $I(\theta)$, we apply the ACE method \citep{Breiman1985}, the estimator
of which is shown to converge to $(a^{*},\, g^{*})$. For simplicity,
we take $Z$ as a one-dimensional scalar. When $Z$ is a vector, we just need to
apply the following procedure to all dimensions of $Z$ separately.

Theorem $\ref{thm:score function}$ shows that 
\[
I(\theta)=\mathbb{E}\{\Delta[Z-a^{*}(t)-\eta_{g^{*}}(X)]^{\otimes2}\}
\]
with $(a^{*},\, g^{*})\in{L}_{2}\times\mathcal{H}(K)$
being  the unique solution that minimizes
\[
\mathbb{E}\Big\{\Delta||Z-a(T)-\eta_{g}(X)||^{2}\Big\}.
\]
Furthermore, the proof of Theorem $\ref{thm:score function}$  reveals 
that this is equivalent to the following: $(a^{*},\, g^{*})$ is the unique
solution to the equations: 
\[
\mathbb{E}(Z-a^{*}-\eta_{g^{*}}|T,\Delta=1)=0,\ \ a.s.\, P_{T}^{(u)},
\]
\[
\mathbb{E}(Z-a^{*}-\eta_{g^{*}}|X,\Delta=1)=0,\ \ a.s.\, P_{X}^{(u)},
\]
where $P_{T}^{(u)}$ and $P_{X}^{(u)} $  represent, respectively, the measure space of $(T, \Delta=1) $ and $(X, \Delta=1)$.

The  idea of ACE is to update $a$ and $g$ alternatively until the
objective function $e(a,g)=\mathbb{E}\Delta||Z-a(T)-\eta_{g}(X)||^{2}$
stops to  decrease. In our case, the procedure is as follows: 
\begin{description}
\item [(i)] Initialize $a$ and $g$, 
\item [(ii)] Update $a$ by 
\[
a(T)=\mathbb{E}(Z-\eta_{g}|T,\Delta=1)=0,
\]
\item [(iii)] Update $g$ such that 
\[
\eta_{g}(X)=\mathbb{E}(Z-a|X,\Delta=1)=0,\ \ a.s.\, P_{X}^{(u)}, 
\]
\item [(iv)] Calculate $e(a,g)=\mathbb{E}\Delta||Z-a(T)-\eta_{g}(X)||^{2}$
and repeat (ii) and (iii) until $e(a,g)$ fails to decrease.
\end{description}

In practice, we replace $\mathbb{E}\Delta||Z-a(T)-\eta_{g}(X)||^{2}$
by the sample mean
\[
e(a,g)=\frac{1}{n}\sum_{i=1}^{n}\Delta_{i}||Z_{i}-a(T_{i})-\eta_{g}(X_{i})||^{2}.
\]
As for $a$ and $g$, we need to employ some smoothing techniques.    For a given  $g\in\mathcal{H}(K)$ we calculate 
\[
\tilde{a}_{i}=\sum_{T_{j}=T_{i}}\Delta_{j}[Z_{j}-\eta_{g}(X_{j})]/\sum_{T_{j}=T_{i}}\Delta_{j},
\]
and update $a(t)$ as the local polynomial regression estimator for the 
data $(T_{1},\tilde{a}_{1}),...,(T_{n},\tilde{a}_{n})$.  For a given  $a\in{L}_{2}$ we calculate 
\[
y_{i}=Z_{i}-a(T_{i}),\text{ for all }\Delta_{i}=1,
\]
and update $g$ by fitting a functional linear regression 
\[
y=\int g(s)X(s)ds+\epsilon,
\]
based on the data $(y_{i},X_{i})$ with $\Delta_{i}=1.$ More details
can be find in \cite{YuanCai2010}. When $(a^{*},\, g^{*})$ is obtained, $I(\theta)$
is estimated by 
\[
\widehat{I(\theta)}=\frac{1}{n}\sum_{i=1}^n\Delta_{i}[Z_{i}-a^{*}(T_{i})-\eta_{g^{*}}(X_{i})]^{\otimes2}.
\]

\section{Numerical Studies}
In this session, we first carry out simulations under different settings
to study the finite sample performance of the proposed method
and to demonstrate practical implications of the theoretical results. In the second part, we apply the proposed method to data that were collected to study the effect of early reproduction history to the longevity of female Mexican fruit flies.

\subsection{Simulations}

We adopt a similar design as that in \cite{YuanCai2010}. The functional covariate $X$ is generated
by a set of cosine basis functions, $\phi_{1}=1$ and $\phi_{k+1}(s)=\sqrt{2}\cos(k\pi s)$
for $k\geq1$, such that 
\[
X(s)=\sum_{k=1}^{50}\zeta_{k}U_{k}\phi_{k}(s),
\]
where the $U_{k}$ are independently sampled from the uniform distribution
on $[-3,3]$ and $\zeta_{k}=(-1)^{k+1}k^{-v/2}$ with $v=1,\,1.5,\,2,\,2.5$.
In this case, the covariance function of $X$ is $C(s,t)=\sum_{k=1}^{50}3k^{-v}\phi_{k}(s)\phi_{k}(t)$.
The coefficient function $\beta_{0}$ is 
\[
\beta_{0}=\sum_{i=1}^{50}(-1)^{k}k^{-3/2}\phi_{k},
\]
which is from a Sobolov space $\mathcal{W}_{2,2}$. The reproducing
kernel takes the  form:
\[
K(s,t)=1+st+\int_{0}^{1}(s-u)_{+}(t-u)_{+}du,
\]
and $K_{1}=\int_{0}^{1}(s-u)_{+}(t-u)_{+}du$. The null space becomes
$\mathcal{N}_{J}=\mbox{span}\{1,s\}$. The penalty function as mentioned
before is $J(f)=\int(f'')^{2}$. The vector covariate $Z$ is set to be
univariate  with distribution $\mathcal{N}(0,1)$ and corresponding
slope $\theta=1$. The failure time $T^{u}$ is generated based on
the hazard function 
\[
h(t)=h_{0}(t)\exp\Big\{\theta'Z+\int_{0}^{1}X(s)\beta_{0}(s)ds\Big\},
\]
where $h_{0}(t)$ is chosen as a constant or a linear function $t$.
Given $X$, $T^{u}$ follows an exponential distribution when $h_{0}$
is a constant, and follows a Weibull distribution when $h_{0}(t)=t$.
The censoring time $T^{c}$ is generated independently,  following an
exponential distribution with parameter $\gamma$ which controls the
censoring rate. When $h_{0}(t)$ is constant, $\gamma=19$ and $3.4$
lead to censoring rates around $10\%$ and $30\%$ respectively. Similar
censoring rates result from $\gamma=15$ and 3.9 for the case when
$h_{0}(t)=t$. 
$(T,\Delta)$ is then generated by $T=\min\{T^{u},\,T^{c}\}$ and $\Delta=I\{T^{u}\leq T^{c}\}$.

The criterion to evaluate the performance of the estimators $\hat{\beta}$
is the mean squared error, defined as 
\[
MSE(\hat{\beta})=\Big\{\frac{1}{\sum_{i=1}^{n}\Delta_{i}}\sum_{i=1}^{n}\Delta_{i}\Big(\eta_{\hat{\beta}}(X_{i})-\eta_{\beta_{0}}(X_{i})\Big)\Big\}^{1/2},
\]
which is an empirical version of $||\hat{\beta}-\beta_{0}||_{C_{\Delta}}$.
To study the trend as the sample size increases, we vary the sample size
$n$ according to $n=50,100,150,200$ for each value $v=1,\,1.5,\,2,\,2.5$.
For each combination  of censoring rate, $h_{0}$, $v$ and $n$, the simulation
is repeated $1000$ times, and the average mean squared error was obtained
for each scenario.

\begin{figure}
\centering
\includegraphics[width=5.7cm]{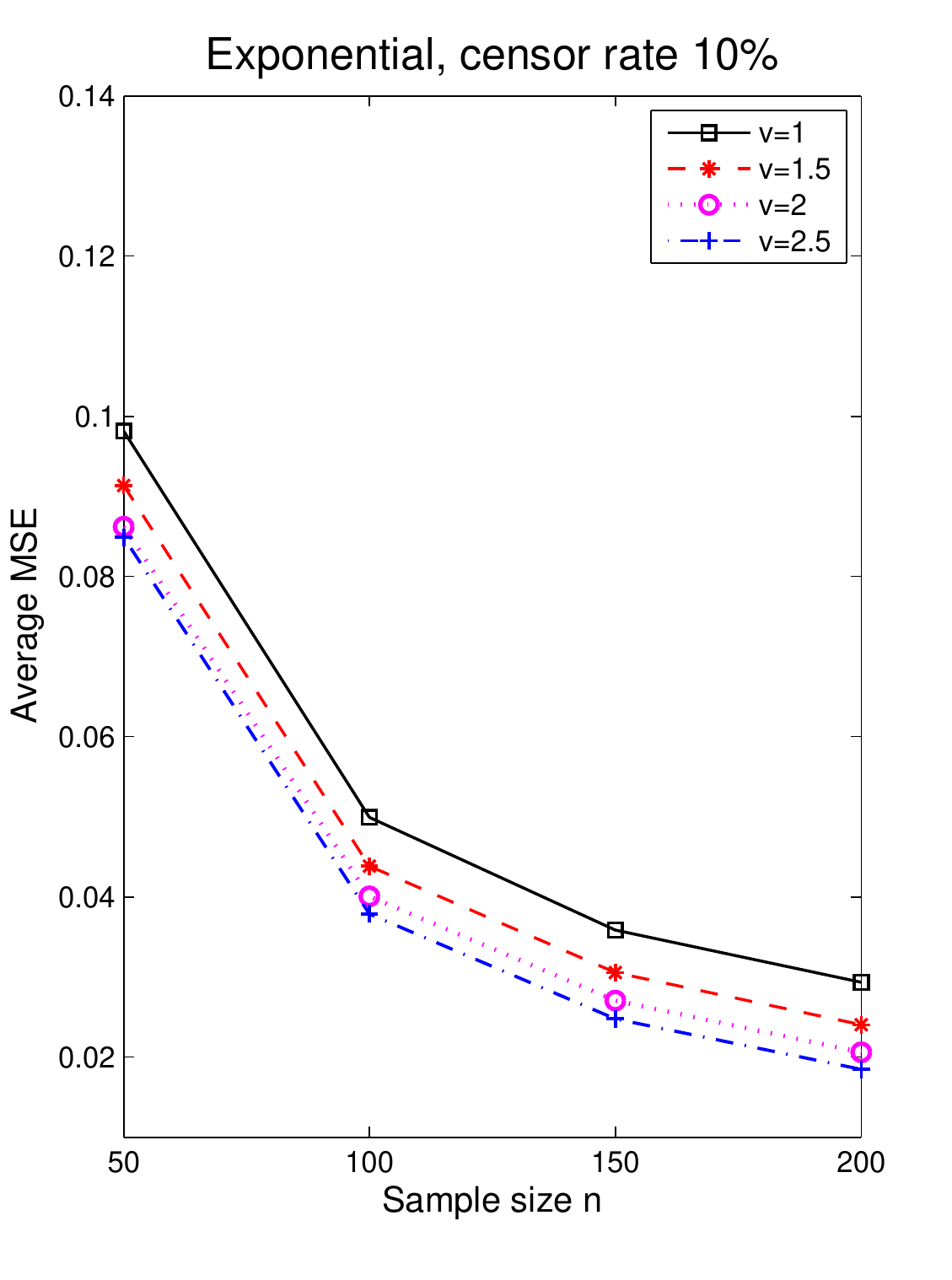}\includegraphics[width=5.7cm]{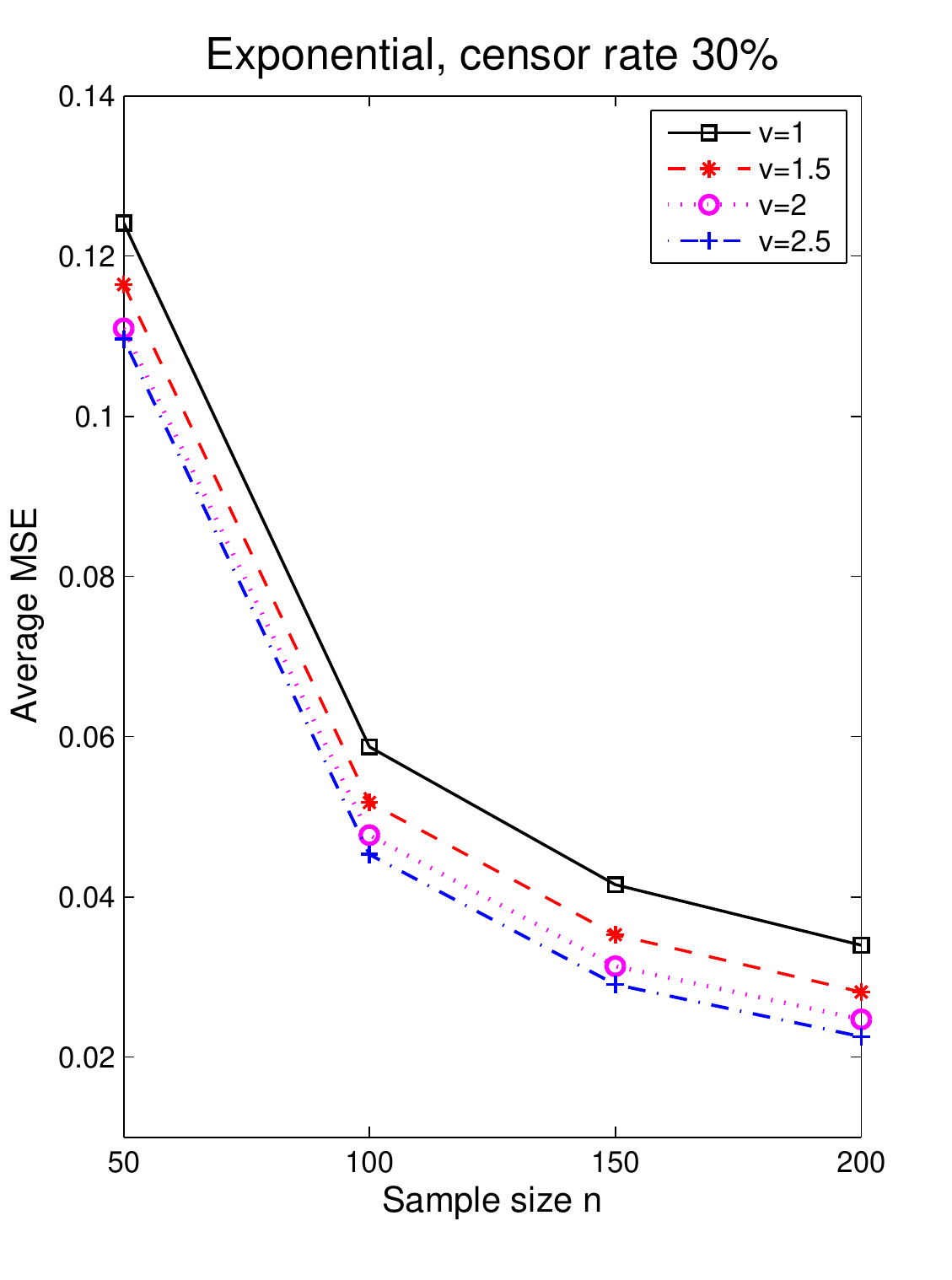}

\includegraphics[width=5.7cm]{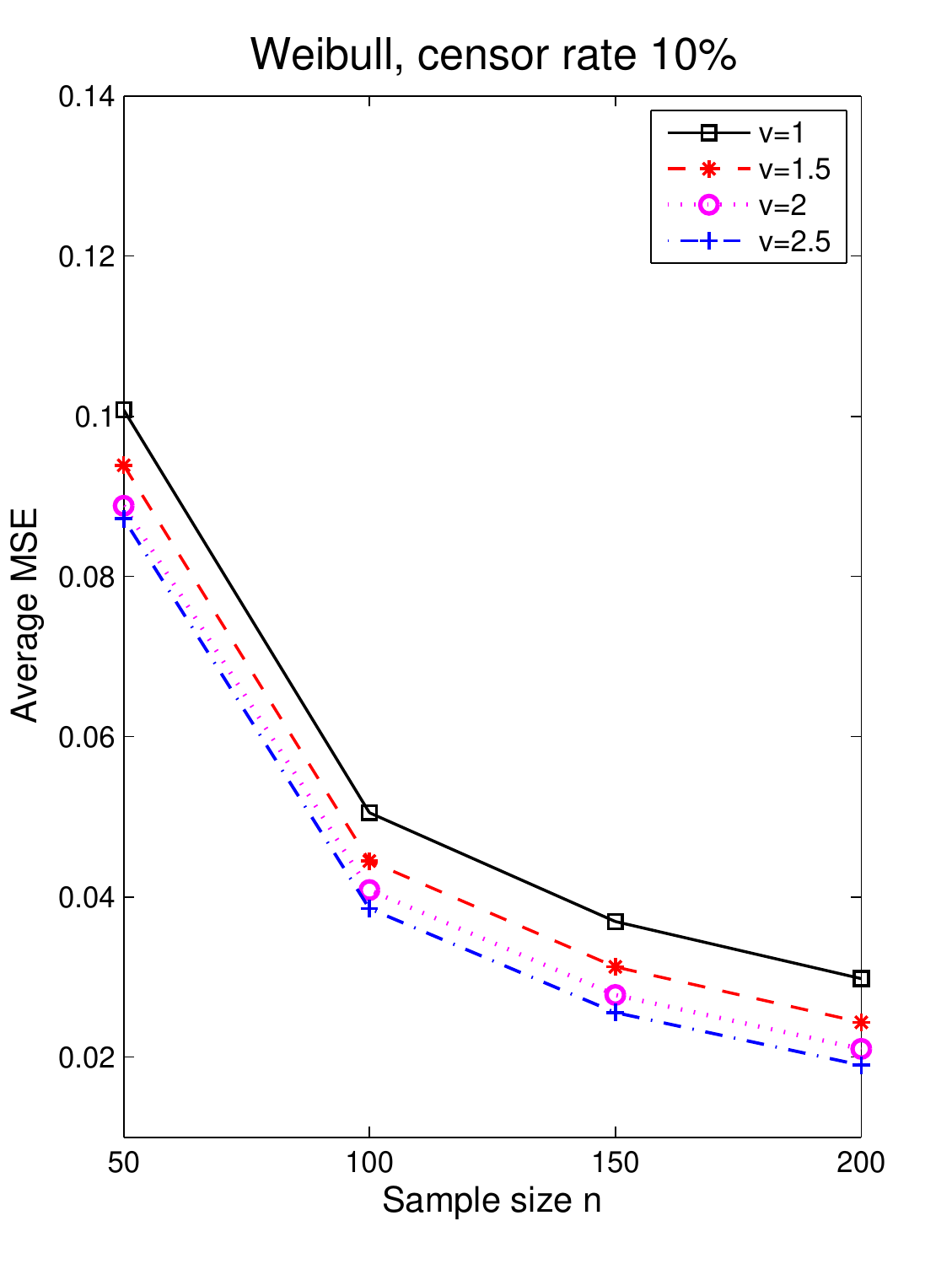}\includegraphics[width=5.7cm]{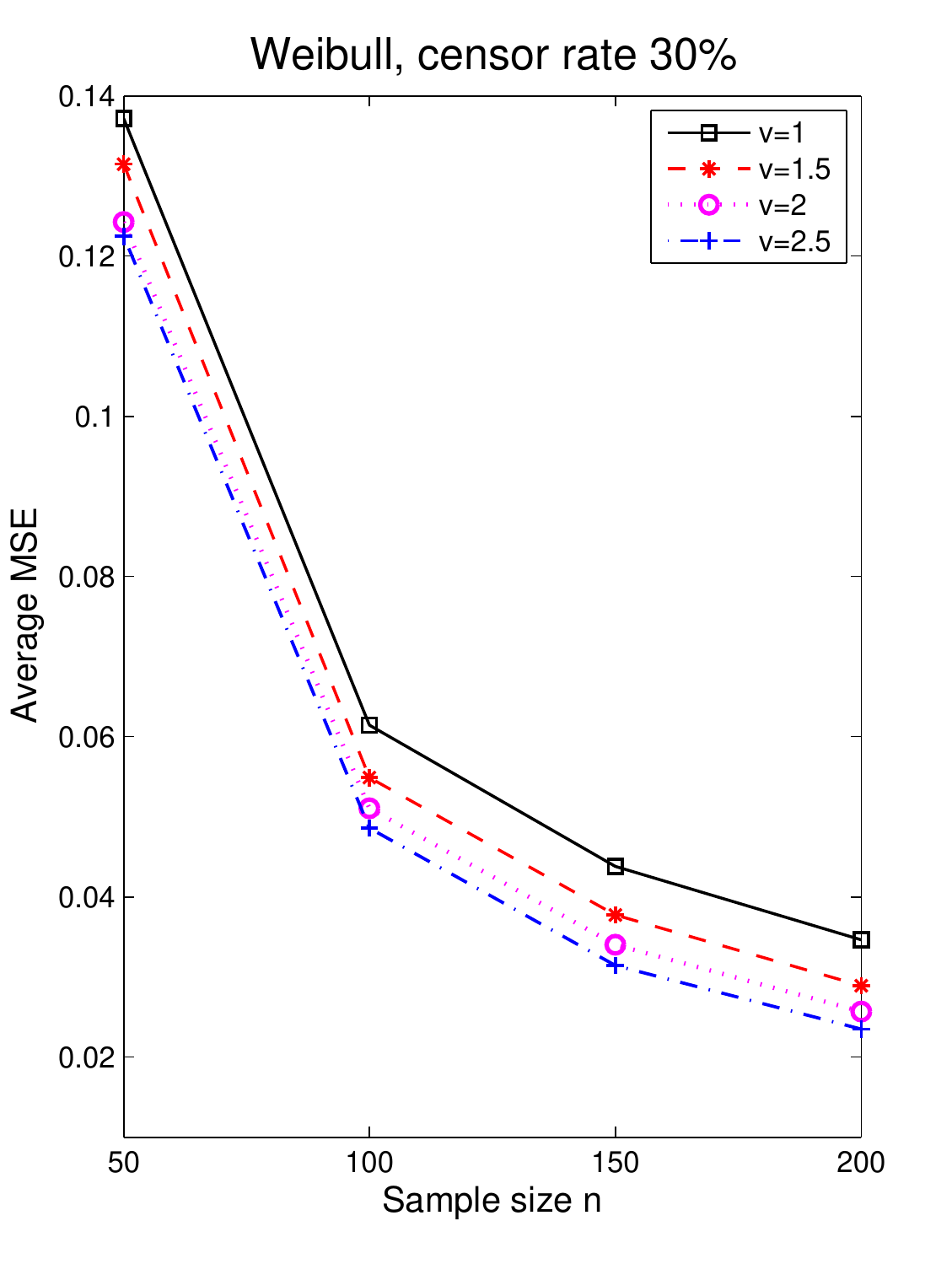}

\caption{\label{fig: sm set 1} The average MSE based on 1000 simulations.
The top panel is for the constant baseline hazard function and the
bottom panel is for the linear baseline hazard function. For each
panel, from left to right, the censoring rate is controlled to be
around 10\% and 30\%. The sample sizes are $n=50,100,150,200$ and
the decay rate parameters are $v=1,1.5,2,2.5$.}
\end{figure}

Note that for a fixed $\gamma$, $\mathbb{E}(\Delta|X)$ is roughly
a constant  for  different values of $v$. Therefore $C_{\Delta}(s,t)$
is approximately proportional to $C(s,t)=\sum_{k=1}^{50}k^{-v}\phi_{k}(s)\phi_{k}(t)$.
In this case, $v$ controls the decay rate of the eigenvalues of $C_{\Delta}$
and  $K^{1/2}C_{\Delta}K^{1/2}$.
It follows from Theorem \ref{thm: convergence rate} that a faster
decay rate of the eigenvalues leads to a faster convergence rate. Figure
\ref{fig: sm set 1} displays the average MSE based on 1000 simulations.
The simulation results are in agreement with Theorem \ref{thm: convergence rate};
it is very clear that when $v$ increases from 1 to 2.5 with the remaining
parameters fixed, the average MSEs decrease steadily. The average
MSEs also decrease with the sample sizes. Besides, for both the exponential
and Weibull distribution, the average MSEs are lower for each setting
at the 10\% censoring rate comparing to the values for the  30\% censoring rate.  This   is consistent with the expectation 
 that the lower the censoring rate is,
the more accurate the estimate will be.

Averages and standard deviations of the estimated $\hat{\theta}$, 
for each setting of $v$ and $n$ over 1000 repetition for the case
of $h_{0}=c$ and 30\% censoring rate,  are given in Table $\ref{tab:mtheta}$.
For each case of $v$, as $n$ increases, the average of $\hat{\theta}$
gets closer to the true value and the standard deviation
decreases. Noting that the results do not vary much across different
values of $v$, as $v$ is specially designed to examine the estimation
of $\beta$ and  has little effect on the estimation of  $\theta$. 

For each
simulated dataset, we also calculated the information bound $I(\theta)$ based on the ACE
method proposed in Section 3.3.  The inverse of this  information bound, as suggested by  Theorem $\ref{thm: asymp normal}$,  can be used to estimate the asymptotic variance of $\hat{\theta}$.  We further used these asymptotic variance estimates to 
construct a 95\% confidence
interval for $\theta$. 
Table $\ref{tab:vtheta}$
shows the observed percentage  the constructed 95\% confidence interval
covered the true value 1 for the various  settings. As expected, the covering
rates increase towards   95\% as $n$  gets larger. Results for other choices of  $h_{0}$ and censoring rates  were about the same and
are omitted.


\begin{table}
\caption{Average and standard deviation of $\hat{\theta}$.
($h_{0}=c$, 30\% censoring rate)}
\label{tab:mtheta}
\begin{centering}
\begin{tabular}{c>{\centering}b{2cm}>{\centering}b{2cm}>{\centering}b{2cm}>{\centering}b{2cm}}
\hline
n & $v=1$ & $v=1.5$ & $v=2$ & $v=2.5$\tabularnewline
\hline \hline
\multirow{2}{*}{50} & 1.061 & 1.064 & 1.064 & 1.065\tabularnewline
 & (0.264) & (0.265) & (0.264) & (0.265)\tabularnewline
\hline 
\multirow{2}{*}{100} & 1.027 & 1.030 & 1.031 & 1.031\tabularnewline
 & (0.164) & (0.164) & (0.164) & (0.163)\tabularnewline
\hline 
\multirow{2}{*}{150} & 1.013 & 1.016 & 1.017 & 1.018\tabularnewline
 & (0.133) & (0.132) & (0.131) & (0.131)\tabularnewline
\hline 
\multirow{2}{*}{200} & 1.011 & 1.013 & 1.015 & 1.016\tabularnewline
 & (0.111) & (0.111) & (0.110) & (0.110)\tabularnewline
\hline 
\end{tabular}
\par\end{centering}


\end{table}

\begin{table}
\begin{centering}
\caption{Covering rate of the 95\% confidence intervals for 
$\theta$. ($h_{0}=c$, 30\% censoring rate)}
\label{tab:vtheta}
\begin{tabular}{c>{\centering}p{2cm}>{\centering}p{2cm}>{\centering}p{2cm}>{\centering}p{2cm}}
\hline
n & $v=1$ & $v=1.5$ & $v=2$ & $v=2.5$\tabularnewline
\hline \hline
50 & 91.5\% & 91.9\% & 92.0\% & 91.5\%\tabularnewline
\hline 
100 & 93.3\% & 92.4\% & 92.4\% & 93.0\%\tabularnewline
\hline 
150 & 93.5\% & 93.1\% & 93.9\% & 93.4\%\tabularnewline
\hline 
200 & 93.6\% & 93.7\% & 93.9\% & 93.8\%\tabularnewline
\hline 
\end{tabular}
\par\end{centering}


\end{table}


\subsection{Mexican Fruit Fly Data}

We now apply the proposed method to the Mexican fruit fly data in 
 \cite{Carey2005}. There were 1152  female flies
in that paper coming from four cohorts, for illustration purpose we are using the data from
cohort 1 and cohort 2, which consist of the lifetime and daily reproduction
(in terms of number of eggs laid daily) of 576 female flies.

We are interested in whether and how early  reproduction  will affect the lifetime of female Mexican fruit flies.   
For this reason, we exclude  28 infertile flies from cohort 1 and     20 infertile flies from cohort 2. The period for early reproduction is chosen to be from day 6 to day 30 based on   the average reproduction curve (Figure
$\ref{fig: avenegg}$), which shows that no flies laid any eggs before day 6 and the peak of reproduction was day 30. 
Once the period of early reproduction was determined to be $[6, 30] $, we further excluded  flies that died before day 30 to guarantee a fully observed trajectory for all flies and this leaves us with a total of 479 flies for further exploration of the functional Cox model.  The mean and median lifetime of the remaining  224  flies in cohort 1 is   56.41 and 58 days respectively;  the mean and the median lifetime of the  remaining 255 flies in cohort 2 is  55.78  and  55 days respectively.  

The trajectories of early reproduction  for these 479 flies are of interest to researchers but they are very noisy, so  for visualization   we display the smoothed  egg-laying curves for the first 100 flies  (Figure $\ref{fig:Xt}$).   The  data of these 100 flies were individually smoothed with a local linear  smoother, but the subsequent data analysis for all 479 flies was based on the original data without smoothing. 

\begin{figure}
\centering{}\includegraphics[width=12cm]{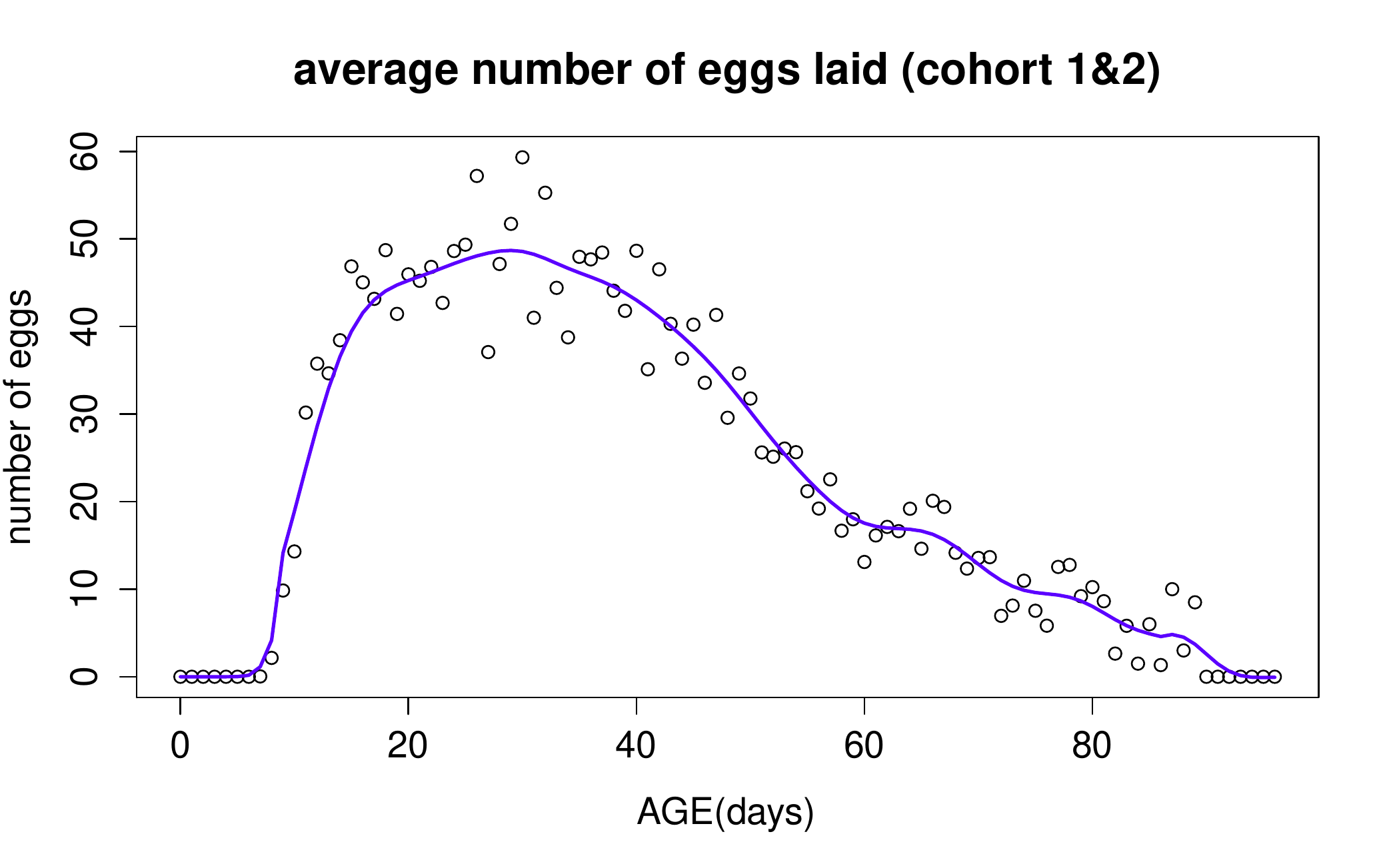}
\caption{\label{fig: avenegg} Average number of eggs laid daily for both cohorts}
\end{figure}


\begin{figure}
\begin{centering}
\includegraphics[width=10cm]{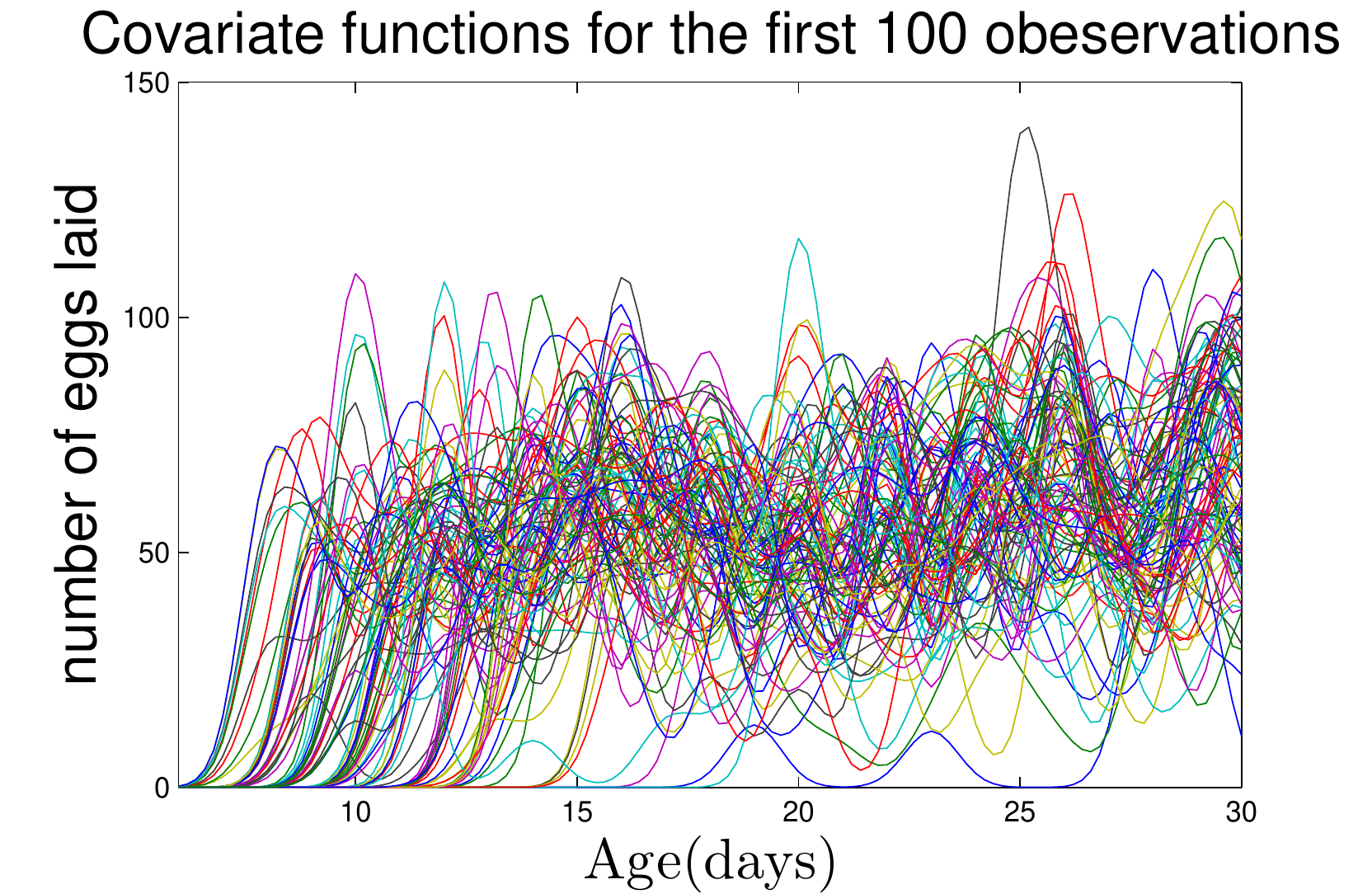}
\par\end{centering}

\caption{\label{fig:Xt}Pre-smoothed individual curves for the first 100 observations. }

\end{figure}

Using the original egg-laying curves from day 6 to day 30 as the 
longitudinal covariates and the cohort indicator as  a time-independent covariate,  the functional Cox model  resulted in an  estimate  $\hat{\theta}= 0.0562$ with 95\% confidence
interval $[-0.1235,0.2359]$. Since zero is included in the interval, we conclude that the cohort effect is not significant. Figure $\ref{fig: beta_nc}$
shows the estimated coefficient function $\hat{\beta}$ for the longitudinal covariate. The shaded area is the $95\%$ pointwise bootstrap confidence interval. Under the functional Cox model, a positive $\hat{\beta}(s)$
yields  a larger hazard function and a decreased probability of
survival and vice versa for a negative $\hat{\beta}(s)$. 

Checking
the plot of $\hat{\beta}(s)$, we can see that $\hat{\beta}(s)$ starts
with a large positive value, but  decreases fast to near zero
on day 13 and  stays around zero till day 22, then declines again mildly
towards day 30. The pattern of $\hat{\beta}(s)$ indicates that higher
early reproduction before day 13 results in a much higher mortality
rate suggesting the high cost of early reproduction,  whereas  a higher reproduction that occurs after day 22 tends to lead to
a relatively lower mortality rate, suggesting that  reproduction past day 22 might be sign of physical fitness.  However, the latter  effect is less significant than the early reproduction effect   as indicated by the bootstrap confidence interval. 
Reproduction  between day 13 and
day 22 does not have a major effect on the mortality rate. In other words,
flies that lay a lot of eggs in their early age (before day 13) and relatively fewer eggs 
 after day 22 tend
to die earlier, while those with the opposite pattern tend to have a longer
life span.

\begin{figure}
\begin{centering}
\includegraphics[width=10cm]{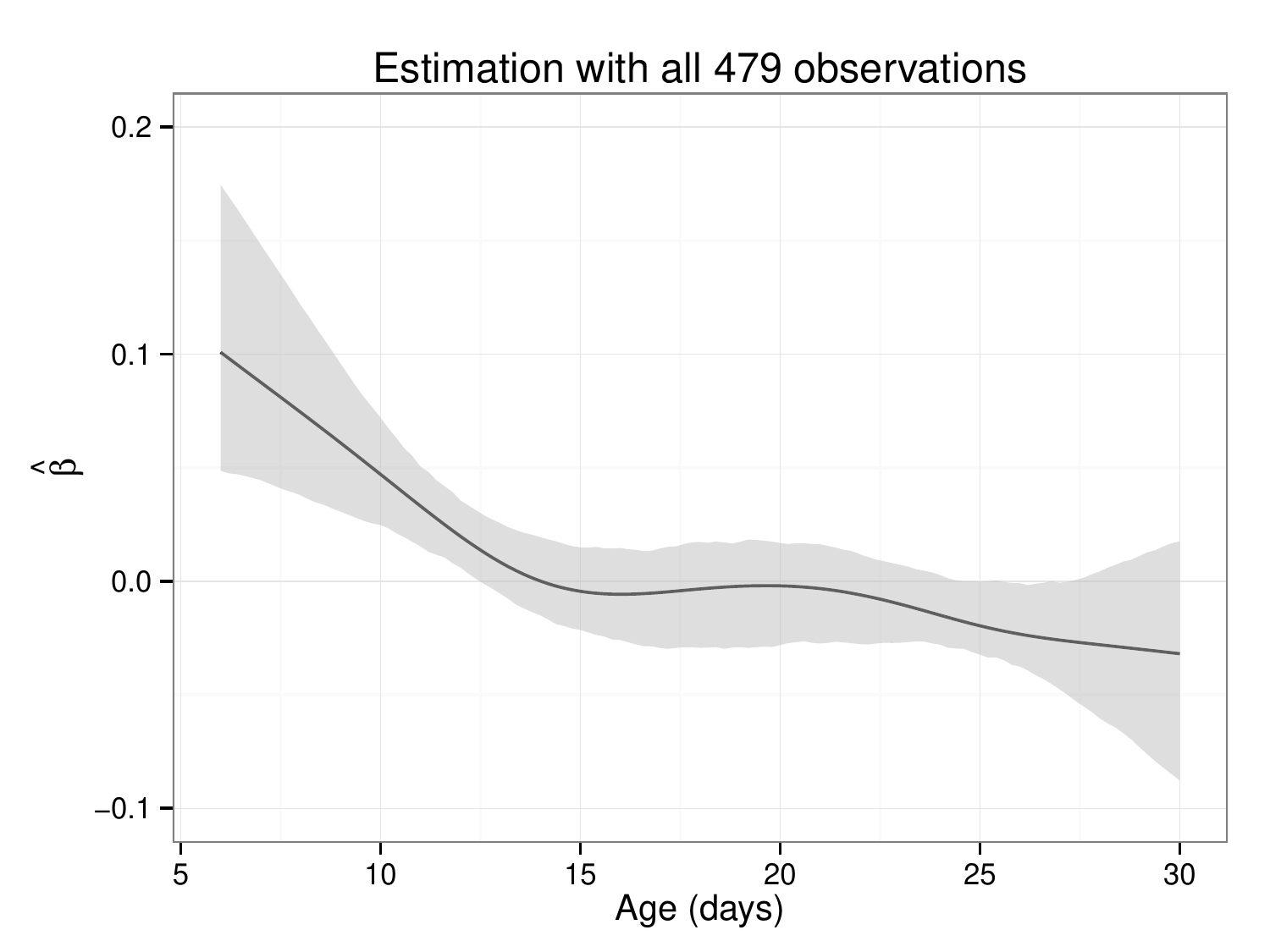}
\par\end{centering}

\caption{\label{fig: beta_nc}Estimated coefficient function $\hat{\beta}(s)$
using all 479 observations and 95\% pointwise c.i. for $\beta(s)$.}
\end{figure}

The Mexfly data contains no censoring,  so it is easy to check how the proposed method works in the presence of censored
data.  We artificially randomly  censor the data by 10\% and then again by 30\% using an exponential censoring distribution with parameter $\gamma=450$   and $150$, respectively.
The estimated coefficient $\hat{\theta}$
and corresponding 95\% confidence intervals are given in Table $\ref{tab: htheta and CI}$. 
Regardless of the censoring conditions, all the confidence intervals
contain  zero and therefore indicate a non-significant cohort effect. This
is consistent with the previous result for non-censored
data. The estimated coefficient functions $\hat{\beta}$ and the corresponding pointwise bootstrap confidence intervals are displayed 
in Figure $\ref{fig:betacs}$. Despite  the slightly different results
for different censoring proportions and choice of tuning parameters,
all the $\hat{\beta}$  have a similar  pattern. This indicates that
the proposed method is quite stable with respect to right censorship,
as long as the censoring rate is below 30\%.
%

\begin{table}

\caption{Values of fixed cut-off point and parameters for
generating random cut-off point, followed by the actual censored percentage
for both cohorts and the whole data.}
\label{tab: censor}
\begin{centering}
\begin{tabular}{|c|c|c|c|c|}
\hline 
 & \multicolumn{2}{c|}{fixed cut-off point} & \multicolumn{2}{c|}{random cut-off point}\tabularnewline
\hline 
 & $T^{c}=71$ & $T^{c}=62$ & $T^{c}\sim\exp(450)$ & $T^{c}\sim\exp(150)$\tabularnewline
 & (10\%) & (30\%) & (10\%) & (30\%)\tabularnewline
\hline 
Cohort 1 & 0.138 & 0.339 & 0.0.071 & 0.353\tabularnewline
\hline 
Cohort 2 & 0.067 & 0.259 & 0.110 & 0.251\tabularnewline
\hline 
Total & 0.100 & 0.296 & 0.092 & 0.300\tabularnewline
\hline 
\end{tabular}
\par\end{centering}

\end{table}

\begin{table}

\caption{The estimated $\hat{\theta}$ and  95\%
confidence interval for $\theta$ under different censoring conditions.}
\label{tab: htheta and CI}
\begin{centering}
\begin{tabular}{|c|c|c|}
\hline 
 & 10\% censoring & 30\% censoring\tabularnewline
\hline 
\multirow{2}{*}{fixed cut-off point} & 0.0929 & 0.0757\tabularnewline
 & {[}-0.0914, 0.2772 {]} & {[}-0.1268 0.2870{]}\tabularnewline
\hline 
\multirow{2}{*}{random cut-off point} & 0.0104 & 0.1863\tabularnewline
 & {[}-0.1705, 0.1913{]} & {[}-0.0177,0.3903{]}\tabularnewline
\hline 
\end{tabular}
\par\end{centering}

\end{table}

\begin{figure}
\begin{center}
\includegraphics[width=13cm]{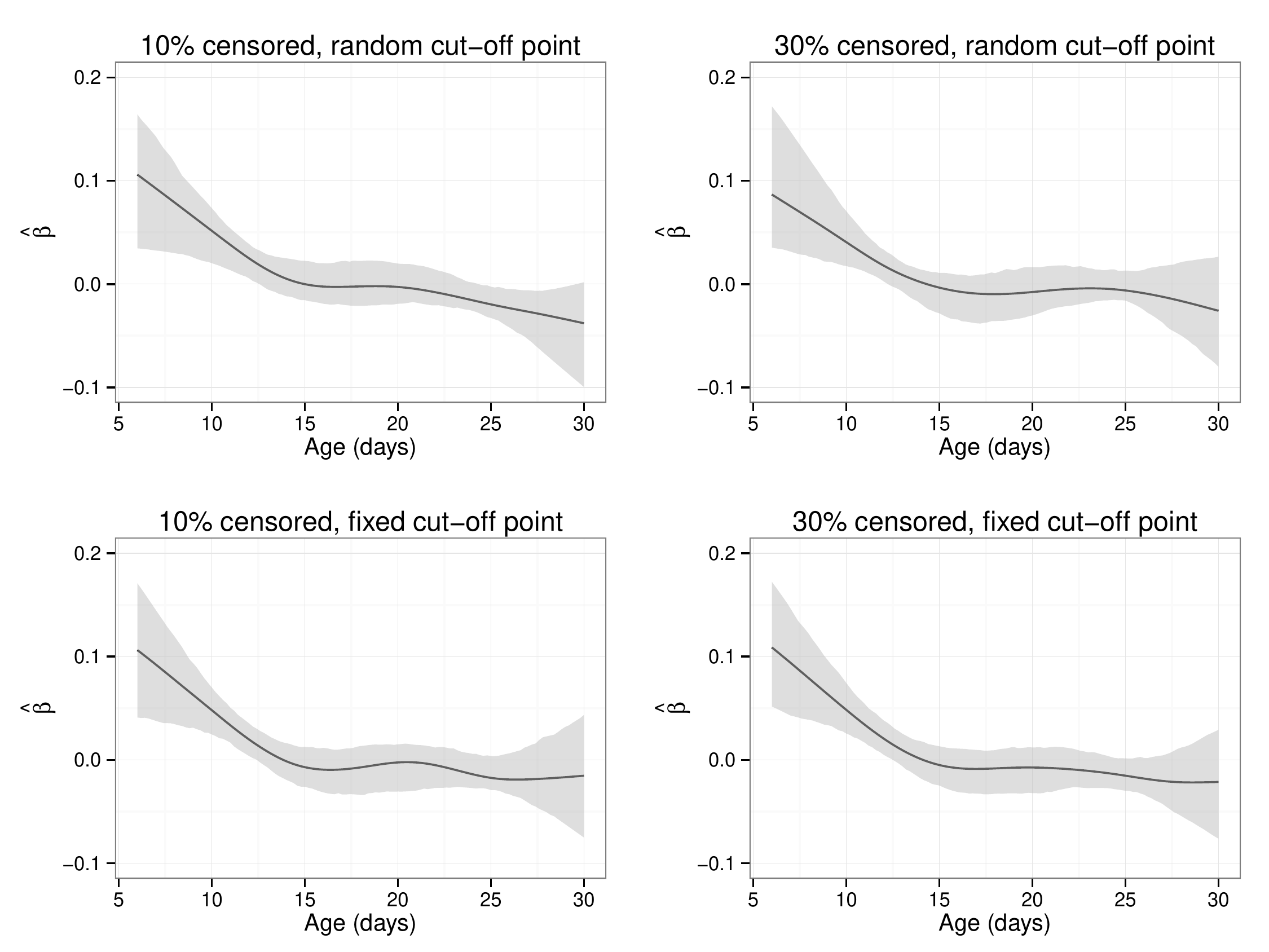}
\end{center}
\caption{\label{fig:betacs}Estimation for $\beta (s)$ with censored data and 95\% pointwise c.i.}
\end{figure}



\section{Proofs of Theorems}
We first introduce some notations by denoting $d(\beta_{1},\beta_{2})=||\beta_{1}-\beta_{2}||_{C_{\Delta}}$,
for any $\beta_{1},\beta_{2}\in\mathcal{H}(K)$;  $Y(t)=1_{\{T\geq t\}}$;
$Y_{j}(t)=1_{\{T_{j}\geq t\}}$, $1\leq j\leq n$;
and $\eta_{\beta}(X_{i})=\int_0^1 \beta(s)X_{i}(s)ds$.

Recall that  $W=(Z,\, X)$  represents the covariates,  $\alpha=(\theta,\beta)$ represents the corresponding regression coefficient with $\theta$   the coefficient for $Z$ and $\beta$  the coefficient function for $X(\cdot)$, and  the true   coefficient is denoted as $\alpha_{0}=(\theta_{0}, \beta_{0}).$
 The index $\eta_{\alpha}(W)=\theta'Z+\int_0^1 \beta(s)X(s)ds$ summarizes the information carried by the covariate $W$. 
To measure the distance between two coefficients $\alpha_1$ and $\alpha_2$ we use 
\[
d(\alpha_{1},\alpha_{2})^{2}=\mathbb{E} (\Delta[\eta_{\alpha_{1}}(W)-\eta_{\alpha_{2}}(W)]^{2}). 
\]

 Furthermore, we denote
\[
S_{0n}(t,\alpha)=\frac{1}{n}\sum_{j=1}^{n}Y_{j}(t)e^{\eta_{\alpha}(W_{j})},\ \ \ \ S_{0}(t,\alpha)=\mathbb{E}\{Y(t)e^{\eta_{\alpha}(W)}\}, 
\]
and for $\widetilde{\alpha}\in{L}_{2}\times\mathcal{H}(K)$,
\[
S_{1n}(t,\alpha)[\widetilde{\alpha}]=\frac{1}{n}\sum_{j=1}^{n}Y_{j}(t)e^{\eta_{\alpha}(W_{j})}\eta_{\widetilde{\alpha}}(W_{j})),\ \ \ \ S_{1}(t,\alpha)[\widetilde{\alpha}]=\mathbb{E}[Y(t)e^{\eta_{\alpha}(W)}\eta_{\widetilde{\alpha}}(W)].
\]
Define
\[
m_{n}(t,W,\alpha)=[\eta_{\alpha}(W)-\log S_{0n}(t,\alpha)]1_{\{0\leq t\leq\tau\}}, 
\]
and
\[
m_{0}(t,W,\alpha)=[\eta_{\alpha}(W)-\log S_{0}(t,\alpha)]1_{\{0\leq t\leq\tau\}}.
\]

Let $P_{n}$ and $P$ be the
empirical and probability measure of $(T_{i},\Delta_{i},W_{i})$ and
$(T,\Delta,W)$, respectively, and $P_{\Delta n}$ and $P_{\Delta}$
be the subprobability measure with $\Delta_{i}=1$ and $\Delta=1$
accordingly. The logarithm of the partial likelihood is $M_{n}(\alpha)=P_{\Delta n}m_{n}(\cdot,\alpha)$.  Let
$M_{0}(\alpha)=P_{\Delta}m_{0}(\cdot,\alpha)$.
Note that $P_{\Delta}$ is restricted to $T\in[0,\tau]$ due to the $1\{0\leq t\leq\tau\}$ term.

A useful identity due to Lemma 2 in \cite{Sasieni1992b} is 
\begin{equation}
\frac{S_{1}(t,\alpha)[\widetilde{\alpha}]}{S_{0}(t,\alpha)}=\mathbb{E}[\eta_{\widetilde{\alpha}}(W)|\, T=t,\Delta=1].\label{eq:ssieni}
\end{equation}

\subsection{Proof of Theorem $\ref{thm:score function}$}

The log-likelihood for a single sample $(t,\Delta,Z,X(\cdot))$ is 
\[
l(h_{0},\theta,\beta)=\Delta[\log h_{0}(t)+Z'\theta+\int_0^1 X(s) \beta(s) ds]-H_{0}(t)\exp[Z'\theta+\int_0^1 X(s) \beta(s) ds],
\]
where $H_{0}(t)=\int_{0}^{t}h_{0}(u)du$ is the baseline cumulative
hazard function. Consider a parametric and smooth sub-model $\{h_{(\mu_{1})}:\mu_{1}\in\mathbb{R}\}$
satisfying $h_{(0)}=h_{0}$ and 
\[
\frac{\partial\log h_{(\mu_{1})}}{\partial\mu_{1}}(t)\Big|_{\mu_{1}=0}=a(t).
\]
Let $\eta_{(\mu_{2})}(X)=\eta_{\beta}(X)+\eta_{\mu_{2}g}(X),$ for
$g\in\mathcal{H}(K)$. Therefore $\eta_{(0)}=\eta_{\beta}(X)$ and
\[
\frac{\partial\eta(\mu_{2})}{\partial\mu_{2}}(X)\Big|_{\mu_{2}=0}=\eta_{g}(X).
\]
Recall that $r(W)=\exp(\eta_{\alpha}(W))$, and $M(t)$ is the counting process martingale
associated with model (1),
\[
M(t)=M(t|W)=\Delta I\{T\leq t\}-\int_{0}^{t}I\{T\geq u\}r(W)dH_{0}(u).
\]
The score operators for the cumulative hazard $H_{0}$, coefficient function $\beta$, 
and the score vector for $\theta$ are the partial derivatives of
the likelihood $l(h_{(\mu_{1})},\theta,\eta_{(\mu_{2})})$ with respect
to $\mu_{1}$, $\mu_{2}$ and $\theta$ evaluated at $\mu_{1}=\mu_{2}=0$,
\[
i_{H}a:=\Delta a(T)-r(W)\int_{0}^{\infty}Y(t)a(t)dH_{0}(t)=\int_{0}^{\infty}a(t)dM(t),
\]
\[
i_{\beta}g:=\eta_{g}(X)[\Delta-r(W)H_{0}(T)]=\int_{0}^{\infty}\eta_{g}(X)\, dM(t),
\]
\[
i_{\theta}:=Z[\Delta-r(W)H_{0}(T)]=\int_{0}^{\infty}Z\, dM(t).
\]
Define $L(P_{T}^{(u)}):=\{a\in {\cal L}_2:\, \mathbb E[\Delta a^{2}(T)]<\infty\}$ and $L(P_X^{(u)}):=\{g\in {\cal H}(K): \mathbb E[\Delta \eta_g(X)]=0; \mathbb E[\Delta \eta_g^2(X) < \infty]\}$. Let
\[
A_{H}=\{i_{H}a:\, a\in L(P_{T}^{(u)})\},
\]
and 
\[
G=\{i_{\beta}g:\, g\in L(P_X^{(u)})\}.
\]
To calculate the information bound for $\theta$, we need to find
the (least favorable) direction $(a^{*},g^{*})$ such that $i_{\theta}-i_{H}a^{*}-i_{\beta}g^{*}$
is orthogonal to the sum space $\boldsymbol{A}=A_{H}+G$. That is,
$(a^{*},g^{*})$ must satisfy 
\[
\mathbb{E}[(i_{\theta}-i_{H}a^{*}-i_{\beta}g^{*})i_{H}a]=0,\ \ a\in L(P_{T}^{(u)}),
\]
\[
\mathbb{E}[(i_{\theta}-i_{H}a^{*}-i_{\beta}g^{*})i_{\beta}g]=0,\ \ g\in L(P_X^{(u)}).
\]
Following the  proof of Theorem 3.1 in \cite{Huang1999}, we can show that
$(a^{*},g^{*})$ satisfies 
\begin{equation}
\mathbb{E}[\Delta(Z-a^{*}-\eta_{g^{*}})a]=0,\ \ a\in L(P_{T}^{(u)}),\label{eq: a*}
\end{equation}
\begin{equation}
\mathbb{E}[\Delta(Z-a^{*}-\eta_{g^{*}})\eta_{g}]=0,\ \ g\in L(P_X^{(u)}).\label{eq:g*}
\end{equation}
Therefore, $(a^*, g^*)$ is the solution to the following equations:
\[
\mathbb{E}(Z-a^{*}-\eta_{g^{*}}|T,\Delta=1)=0,\ \ a.s.\, P_{T}^{(u)},
\]
\[
\mathbb{E}(Z-a^{*}-\eta_{g^{*}}|X,\Delta=1)=0,\ \ a.s.\, P_{X}^{(u)}.
\]
So, $(a^*, g^*)\in L(P_{T}^{(u)}) \times L(P_X^{(u)})$ minimizes 
\begin{equation}\label{equ:ace}
\mathbb E\Big\{ \Delta \big\| Z - a(T) - \eta_g(X) \big\|^2 \Big\}.
\end{equation}
It follows from Conditions A3 and A4 that the space $L(P_{T}^{(u)}) \times L(P_X^{(u)})$ is closed, so that the minimizer of (\ref{equ:ace}) is well-defined. Further, the solution can be obtained by the population version of the ACE algorithm of \cite{Breiman1985}. 


\subsection{Proof of Theorem $\ref{thm: convergence rate}$}

For some large number $M$, such that $||\theta_{0}||_{\infty}<M$
and $||\beta_{0}||_{K}<M$, define $\mathbb{R}_{M}=\{\theta\in\mathbb{R}^{p},\,||\theta||_{\infty}<M\}$
and $\mathcal{H}^{M}=\{\beta\in\mathcal{H}(K),||\beta||_{K}<M\}$.
Let $\alpha^{M}=(\theta^{M},\beta^{M})$ be the penalized partial likelihood
estimator with minimum taken over ${L}^{M}\times\mathcal{H}^{M}$,
i.e.
\begin{equation}
\alpha^{M}=\arg\min_{\alpha\in\mathbb{R}_{M}\times\mathcal{H}^{M}}-n^{-1}\sum_{i=1}^{n}\Delta_{i}\big\{\eta_{\alpha}(W_{i})-\log\sum_{T_{j}>T_{i}}\exp\{\eta_{\alpha}(W_{j})\}\big\}+\lambda\cdot J(\beta)\label{eq:beta_M}.
\end{equation}

We first prove that 
\begin{equation}
\sup_{\alpha\in\mathbb{R}_{M}\times\mathcal{H}^{M}}|M_{n}(\alpha)-M_{0}(\alpha)|\stackrel{P}{\rightarrow}0.  \label{eq:consist 1}
\end{equation}
Observe that 
\begin{eqnarray*}
 & &|M_{n}(\alpha)-M_{0}(\alpha)| \\
 & & \leq  |P_{\Delta n}m_{n}(\cdot,\alpha)-P_{\Delta n}m_{0}(\cdot,\alpha)|+|P_{\Delta n}m_{0}(\cdot,\alpha)-P_{\Delta}m_{0}(\cdot,\alpha)|\\
 & &\leq  P_{\Delta n}|\log S_{0n}(T,\alpha)-\log S_{0}(T,\alpha)|1_{\{0\leq T\leq\tau\}}+|(P_{n}-P)\Delta m_{0}(\cdot,\alpha)|\\
 & &\lesssim \sup_{0\leq t\leq\tau}|S_{0n}(t,\alpha)-S_{0}(t,\alpha)|+|(P_{n}-P)\Delta m_{0}(\cdot,\alpha)|\\
 &  &= \sup_{0\leq t\leq\tau}|(P_{n}-P)Y(t)e^{\eta_{\alpha}(W)}|+|(P_{n}-P)\Delta m_{0}(\cdot,\alpha)|.
\end{eqnarray*}
Lemma $\ref{lem:consist 1}$ shows that $\mathcal{F}_{1}=\{\Delta m_{0}(t,W,\alpha):\alpha\in\mathbb{R}_{M}\times\mathcal{H}^{M}\}$
and $\mathcal{F}_{2}=\{Y(t)e^{\eta_{\alpha}(W)}:\alpha\in\mathbb{R}_{M}\times\mathcal{H}^{M},\ 0\leq t\leq\tau\}$
are P-Glivenko-Cantelli, which means that both terms on the righthand side  above  converge to zero in probability uniformly with respect to $\alpha\in\mathbb{R}_{M}\times\mathcal{H}^{M}$.
Therefore ($\ref{eq:consist 1}$) holds.

The definition of $\alpha^{M}$ in ($\ref{eq:beta_M}$) indicates that
\[
-M_{n}(\alpha^{M})+\lambda J(\beta^{M})\leq-M_{n}(\alpha_{0})+\lambda J(\beta_{0}).
\]
Rearranging  the inequality with $M_{n}(\alpha^{M})$ on one side and the
fact that $\lambda\rightarrow0$ as $n\rightarrow\infty$ lead to 
\begin{equation}
M_{n}(\alpha^{M})\geq M_{n}(\alpha_{0})-o_{p}(1).\label{eq:consist 2}
\end{equation}

On the other hand, lemma $\ref{lem bound 1}$ implies  that $\sup_{d(\alpha,\alpha_{0})\geq\epsilon}M_0(\alpha)<M_0(\alpha_{0})$.
Combining this with ($\ref{eq:consist 1}$) and ($\ref{eq:consist 2}$) and by the consistency result in \citet[Theorem 5.7
on Page 45]{Vaart2000}, we can show that $\alpha^{M}$ is consistent, i.e. $d(\alpha^{M},\alpha_{0})\stackrel{P}{\rightarrow}0$.

Part (i) now follows from 
\[
d(\hat{\alpha},\alpha_{0})\leq d(\hat{\alpha}, \alpha^{M})+d(\alpha^{M},\alpha_{0}), 
\]
and $P(\hat{\alpha}=\alpha^{M})=P(||\hat{\beta}||_{K}<M,||\hat{\theta}||_{\infty}<M)\rightarrow1$,
as $M\rightarrow\infty$, i.e. $d(\hat{\alpha},\alpha^{M})\rightarrow0\, a.s.$.

For part (ii), we follow the proof of Theorem 3.4.1 in \cite{Vaart1996}. We first show that
\begin{equation}
E^{*}\sup_{\delta/2\leq d(\alpha,\alpha_{0}))\leq\delta}\sqrt{n}|(M_{n}-M_{0})(\alpha-\alpha_{0})|\lesssim\phi_{n}(\delta), \label{eq: conv rate cond 1}
\end{equation}
where $\phi_{n}(\delta)=\delta^{\frac{2r-1}{2r}}$. Direct calculation
yields that

\begin{eqnarray*}
& &(M_{n}-M_{0})(\alpha-\alpha_{0}) \\
&  &= P_{\Delta n}m_{n}(\cdot,\alpha)-P_{\Delta n}m_{n}(\cdot,\alpha_{0})-P_{\Delta}m_{0}(\cdot,\alpha)+P_{\Delta}m_{0}(\cdot,\alpha_{0})\\
 & &=  (P_{\Delta n}-P_{\Delta})(m_{0}(\cdot,\alpha)-m_{0}(\cdot,\alpha_{0}))\\
 &  & \ \ \ \ +P_{\Delta n}(m_{n}(\cdot,\alpha)-m_{n}(\cdot,\alpha_{0})-m_{0}(\cdot,\alpha)+m_{0}(\cdot,\alpha_{0}))\\
 & & =  (P_{\Delta n}-P_{\Delta})(m_{0}(\cdot,\alpha)-m_{0}(\cdot,\alpha_{0}))\\
 &  & \ \ \ \ \ \ \ \ +P_{\Delta n}(\log\frac{S_{0}(T,\alpha)}{S_{0}(T,\alpha_{0})}-\log\frac{S_{0n}(T,\alpha)}{S_{0n}(T,\alpha_{0})})\\
 &  &= I+II.
\end{eqnarray*}

For the first term, $I=(P_{\Delta n}-P_{\Delta})(m_{0}(\cdot,\alpha)-m_{0}(\cdot,\alpha_{0}))$.
By Lemma $\ref{lem: conv I and III}$, we have 
\[
\sup_{\delta/2\leq d(\alpha,\alpha_{0}))\leq\delta}|I|=O(\delta^{\frac{2r-1}{2r}}n^{-1/2}).
\]

For the second term $II$, we have
\begin{eqnarray*}
& &\sup_{\delta/2\leq d(\alpha,\alpha_{0}))\leq\delta}|II| \\
& \leq & \sup_{\begin{array}{c}
\delta/2\leq d(\alpha,\alpha_{0}))\leq\delta\\
t\in[0,\tau]
\end{array}}\Big|\log\frac{S_{0}(t,\alpha)}{S_{0}(t,\alpha_{0})}-\log\frac{S_{0n}(t,\alpha)}{S_{0n}(t,\alpha_{0})}\Big|\\
 & \leq & \sup_{\begin{array}{c}
\delta/2\leq d(\alpha,\alpha_{0}))\leq\delta\\
t\in[0,\tau]
\end{array}}c\Big|\frac{S_{0n}(t,\alpha)}{S_{0n}(t,\alpha_{0})}-\frac{S_{0}(t,\alpha)}{S_{0}(t,\alpha_{0})}\Big|\\
 & = & \sup_{\begin{array}{c}
\delta/2\leq d(\alpha,\alpha_{0}))\leq\delta\\
t\in[0,\tau]
\end{array}}c\Big|\frac{S_{0n}(t,\alpha)S_{0}(t,\alpha_{0})-S_{0n}(t,\alpha_{0})S_{0}(t,\alpha)}{S_{0}(t,\alpha_{0})S_{0n}(t,\alpha_{0})}\Big|.
\end{eqnarray*}
For $t\in[0,\tau]$, the denominator $S_{0}(t,\alpha_{0})S_{0n}(t,\alpha_{0})$
is bounded away from zero with probability tending to one. 
The numerator satisifes 
\begin{eqnarray*}
& &S_{0n}(t,\alpha)S_{0}(t,\alpha_{0})-S_{0n}(t,\alpha_{0})S_{0}(t,\alpha) \\
& &=S_{0}(t,\alpha_{0})[S_{0n}(t,\alpha)-S_{0n}(t,\alpha_{0})-S_{0}(t,\alpha)+S_{0}(t,\alpha_{0})]\\
 & &\quad \quad-[S_{0n}(t,\alpha_{0})-S_{0}(t,\alpha_{0})][S_{0}(t,\alpha)-S_{0}(t,\alpha_{0})].
\end{eqnarray*}
For the first term on the right side, we have  $S_{0}(t,\alpha_{0})=O(1)$
and 
\begin{eqnarray*}
& &[S_{0n}(t,\alpha)-S_{0n}(t,\alpha_{0})-S_{0}(t,\alpha)+S_{0}(t,\alpha_{0})]\\
& & \quad\quad\quad=(P_{n}-P)\big\{ Y(t)\big[\exp(\eta_{\alpha}(W))-\exp(\eta_{\alpha_{0}}(W))\big]\big\}.
\end{eqnarray*}

Define the above $(P_{n}-P)\big\{ Y(t)\big[\exp(\eta_{\alpha}(W))-\exp(\eta_{\alpha_{0}}(W))\big]\big\}\stackrel{def}{=}III$.

Lemma $\ref{lem: conv I and III}$ implies  that 
\[
\sup_{\delta/2\leq d(\alpha,\alpha_{0}))\leq\delta}|III|=O(\delta^{\frac{2r-1}{2r}}n^{-1/2}).
\]
For the second term, the Central Limit Theorem implies $S_{0n}(t,\alpha_{0})-S_{0}(t,\alpha_{0})=O_{p}(n^{-1/2})$, 
and

\begin{eqnarray*}
|S_{0}(t,\alpha)-S_{0}(t,\alpha_{0})| & \leq & E\big\{ Y(t)\big|\exp(\eta_{\alpha}(W))-\exp(\eta_{\alpha_{0}}(W))\big|\big\}\\
 & \lesssim & \Big(E[\eta_{\alpha}(W)-\eta_{\alpha_{0}}(W)]^{2}\Big)^{1/2}\\
 & \lesssim & d(\alpha,\alpha_{0}).
\end{eqnarray*}
Therefore 
\[
\sup_{\delta/2\leq d(\alpha,\alpha_{0}))\leq\delta}|II|\leq O(\delta^{\frac{2r-1}{2r}}n^{-1/2})+O(\delta n^{-1/2})=O(\delta^{\frac{2r-1}{2r}}n^{-1/2}).
\]
Combining  $I$ and $II$  yields 
\[
E^{*}\sup_{\delta/2\leq d(\alpha,\alpha_{0}))\leq\delta}\sqrt{n}|(M_{n}-M_{0})(\alpha-\alpha_{0})|\lesssim O(\delta^{\frac{2r-1}{2r}}).
\]

Furthermore,  Lemma $\ref{lem bound 1}$ implies  
\[
\sup_{\delta/2\leq d(\alpha,\alpha_{0}))\leq\delta}P_{\Delta}m_{0}(\cdot,\alpha)-P_{\Delta}m_{0}(\cdot,\alpha_{0})\lesssim-\delta^{2}.
\]

Let $r_{n}=n^{\frac{r}{2r+1}}$. It is easy to check that $r_n$ satisfies $r_{n}^{2}\phi_{n}(\frac{1}{r_{n}})\leq\sqrt{n}$, and
\[
 M_{n}(\hat{\alpha}_{\lambda})\geq M_{n}(\alpha{}_{0})+\lambda[J(\hat{\beta}_{\lambda})-J(\beta_{0})]\geq M_{n}(\alpha_{0})-O_{p}(r_{n}^{-2})
\]
with $\lambda=O(r_{n}^{-2})=O(n^{-\frac{2r}{2r+1}})$.

So far we  have  verified
all the conditions in Theorem 3.4.1 of \cite{Vaart1996} and thus  conclude that 
\[
d(\hat{\alpha},\alpha_{0})=O_p(r_{n}^{-1})=O_p(n^{-\frac{r}{2r+1}}).
\]

For part (iii), recall the projections $a^{*}$ and $g^{*}$ defined in Theorem $\ref{thm:score function}$,
then
\begin{align}
d(\hat{\alpha},\alpha_{0})^{2} & =\mathbb{E}\Delta[\eta_{\hat{\alpha}}(W)-\eta_{\alpha_{0}}(W)]^{2}\nonumber \\
 & =\mathbb{E}\Delta[Z^{'}(\hat{\theta}-\theta_{0})+(\eta_{\hat{\beta}}(X)-\eta_{\beta_{0}}(X))]^{2}\nonumber \\
 & =\mathbb{E}\Delta[(Z-a^{*}(T)-\eta_{g^{*}}(X))^{'}(\hat{\theta}-\theta_{0})+(a^{*}(T)+\eta_{g^{*}}(X))(\hat{\theta}-\theta_{0})\nonumber\\
 & \quad\quad\quad +(\eta_{\hat{\beta}}(X)-\eta_{\beta_{0}}(X))]^{2}\nonumber\\
 & =\mathbb{E}\Delta[(Z-a^{*}(T)-\eta_{g^{*}}(X))^{'}(\hat{\theta}-\theta_{0})]^{2}\nonumber \\
 & \quad\quad\quad +\mathbb{E}\Delta[(a^{*}(T)+\eta_{g^{*}}(X))(\hat{\theta}-\theta_{0})+(\eta_{\hat{\beta}}(X)-\eta_{\beta_{0}}(X))]^{2}.   \label{eq:dXidThetadbeta}
\end{align}
Since $I(\theta)$ is non-singular, it follows that$||\hat{\theta}-\theta_0||^{2}=O_p(n^{-\frac{2r}{2r+1}})$.  This in turn implies 
\[
d(\hat{\beta},\beta_{0})^{2}=O_p(n^{-\frac{2r}{2r+1}}).
\]

\subsection{Proof of Theorem $\ref{thm: asymp normal}$}

Let $u=(t,Z,X(\cdot)).$ For $g\in\mathcal{H}(K)$, define 
\[
s_{n}(u,\alpha)[g]=\eta_{g}(X)-\frac{S_{1n}(t,\alpha)[g]}{S_{0n}(t,\alpha)},\ \ \ \ s(u,\alpha)[g]=\eta_{g}(X)-\frac{S_{1}(t,\alpha)[g]}{S_{0}(t,\alpha)},
\]
and for $Z\in\mathbb{R}^{d}$ and the identify map $I(Z)=Z$, define  
\[
s_{n}(u,\alpha)[Z]=Z-\frac{S_{1n}(t,\alpha)[I]}{S_{0n}(t,\alpha)},\ \ \ \ s(u,\alpha)[Z]=\eta_{g}(X)-\frac{S_{1}(t,\alpha)[I]}{S_{0}(t,\alpha)},
\]
where $S_{1n}(t,\alpha)[I]=\frac{1}{n}\sum_{j=1}^{n}Y_{j}(t)e^{\eta_{\alpha}(W_{j})}Z_{j}$
and $S_{1}(t,\alpha)[I]=\mathbb{E}Y(t)e^{\eta_{\alpha}(W)}Z$. 

By analogy to the score function, we call the derivatives of the
partial likelihood with respect to the parameters the partial score
functions. The partial score function based on the partial likelihood
for $\theta$ is 
\[
i_{n\theta}(\alpha)=P_{\Delta n}s_{n}(\cdot,\alpha)[Z].
\]
The partial score function based on the partial likelihood for $\beta$ in
a direction $g\in\mathcal{H}(K)$ is 
\[
i_{n\beta}(\alpha)[g]=P_{\Delta n}s_{n}(\cdot,\alpha)[g].
\]
Recall that $(\hat{\theta},\hat{\beta})$ is defined to maximize the
penalized partial likelihood, i.e. 
\[
-P_{\Delta n}m_{n}(\cdot,\hat{\theta},\hat{\beta})+\lambda J(\hat{\beta})\leq-P_{\Delta n}m_{n}(\cdot,\theta,\beta)+\lambda J(\beta),
\] 
for all $\theta\in\mathbb{R}^{p}$ and $\beta\in\mathcal{H}(K)$. Since the penalty term is unrelated to $\theta$,  the partial
score function should satisfy
\[
i_{n\theta}(\hat{\alpha})=P_{\Delta n}s_{n}(\cdot,\hat{\alpha})[Z]=0.
\]
On the other hand, the partial score function for $\beta$ satisfies
\[
i_{n\beta}(\hat{\alpha})[g]=P_{\Delta n}s_{n}(\cdot,\alpha)[g]=O(\lambda)=o_{p}(n^{-\frac{1}{2}}),\ \ \ \text{for all }g\in\mathcal{H}(K).
\]
Combining this with Lemma $\ref{lm: normal 2}$ and Lemma $\ref{lm: normal 3}$,
we have 
\[
n^{1/2}P_{\Delta}\{s(\cdot,g_{0})[Z-h^{*}]\}^{\otimes2}(\hat{\theta}-\theta_{0})=-n^{1/2}P_{\Delta n}s_{n}(\cdot,\alpha_{0})[Z-g^{*}]+o_{p}(1).
\]

Let 
\[
M_{i}(t)=\Delta_{i}I\{T_{i}\leq t\}-\int_{0}^{t}Y_{i}(u)\exp(\eta_{\alpha_{0}}(W_{i}))dH_{0}(u),\ \ 1\leq i\leq n.
\]
We can write 
\[
n^{1/2}P_{\Delta n}s_{n}(\cdot,\alpha_{0})[Z-g^{*}]=n^{-1/2}\sum_{i=1}^{n}\int_{0}^{\tau}[Z_{i}-\eta_{h^{*}}(X_{i})-\frac{S_{1n}(t,\alpha_{0})[Z-g^{*}]}{S_{0n}(t,\alpha_{0})}]dM_{i}(t).
\]
Thus 
\begin{align*}
 & n^{1/2}P_{\Delta n}s_{n}(\cdot,\alpha_{0})[Z-g^{*}]-n^{-1/2}\sum_{i=1}^{n}\int_{0}^{\tau}[Z_{i}-\eta_{h^{*}}(X_{i})-\frac{S_{1}(t,\alpha_{0})[Z-g^{*}]}{S_{0}(t,\alpha_{0})}]dM_{i}(t)\\
 & =n^{-1/2}\sum_{i=1}^{n}\int_{0}^{\tau}[\frac{S_{1}(t,\alpha_{0})[Z-g^{*}]}{S_{0}(t,\alpha_{0})}-\frac{S_{1n}(t,\alpha_{0})[Z-g^{*}]}{S_{0n}(t,\alpha_{0})}]dM_{i}(t).
\end{align*}
Because 
\[
n^{-1}\sum_{i=1}^{n}\int_{0}^{\tau}[\frac{S_{1}(t,\alpha_{0})[Z-g^{*}]}{S_{0}(t,\alpha_{0})}-\frac{S_{1n}(t,\alpha_{0})[Z-g^{*}]}{S_{0n}(t,\alpha_{0})}]Y_{i}(t)\exp[\eta_{\alpha_{0}}(W_{i})]dH_{i}(t)\stackrel{P}{\rightarrow}0,
\]
by Lenglart's inequality, as stated in Theorem 3.4.1 and Corollary
3.4.1 of \cite{Fleming1991}, we have 
\begin{eqnarray*}
& & n^{1/2}P_{\Delta n}s_{n}(\cdot,\alpha_{0})[Z-g^{*}]\\
& & \quad \quad\quad=n^{-1/2}\sum_{i=1}^{n}\int_{0}^{\tau}[Z_{i}-\eta_{h^{*}}(X_{i})-\frac{S_{1}(t,\alpha_{0})[Z-g^{*}]}{S_{0}(t,\alpha_{0})}]dM_{i}(t)+o_{p}(1).
\end{eqnarray*}
Recall that 
\[
\frac{S_{1}(t,\alpha_{0})[Z-g^{*}]}{S_{0}(t,\alpha_{0})}=E[Z-\eta_{g^{*}}(W)]|T=t,\Delta=1]=a^{*}(t).
\]
By the definition of the efficient score function $l_{\theta}^{*}$,
we have 
\[
n^{1/2}P_{\Delta n}s_{n}(\cdot,\alpha_{0})[Z-g^{*}]=n^{-1/2}\sum_{i=1}^{n}l_{\theta}^{*}(T_{i},\Delta_{i},W_{i})+o_{p}(1)\stackrel{d}{\rightarrow\mathcal{N}(0,I(\theta_{0}))}.
\]

\subsection{Proof of Theorem $\ref{thm: lower bound}$}\label{sub: pf of lower bound}
To get the minmax lower bound,
it  suffices to show that, when the true baseline hazard function $h_0$ and  the true $\theta_0$ are fixed and known, for a subset  $\mathcal{H}^{*}$ of $\mathcal{H}(K),$  
\begin{equation}
\lim_{a\rightarrow0}\lim_{n\rightarrow\infty}\inf_{\hat{\beta}}\sup_{\beta_{0}\in\mathcal{H}^{*}}\mathbb P_{h_0,\theta_0,\beta_0}\{d(\hat{\beta},\beta_{0})\geq an^{-\frac{r}{2r+1}}\}=1.\label{eq: lower bound with subset}
\end{equation}

If we can find a subset $\{\beta^{(0)},\ldots,\beta^{(N)}\}\subset\mathcal{H}^{*}$
with $N$ increasing with $n$, such that for some positive constant
$c$ and all $0\leq i<j\leq N$,

\begin{equation}
d^{2}(\beta^{(i)},\beta^{(j)})\geq c\gamma^{\frac{2r}{2r+1}}n^{-\frac{2r}{2r+1}},\label{eq:bound 1}
\end{equation}
 and
\begin{eqnarray}
\frac{1}{N}\sum_{j=1}^{N}KL(P_{j},P_{0}) & \leq & \gamma\log N,\label{eq: bound 2}
\end{eqnarray}
then we can conclude, according to \cite{Tsybakov2009} (Theorem 2.5 on page 99), 
that
\[
\inf_{\hat{\beta}}\sup_{\beta\in\mathcal{H}^{*}}\mathbb{P}(d^{2}(\beta^{(i)},\beta^{(j)})\geq c\gamma^{\frac{2r}{2r+1}}n^{-\frac{2r}{2r+1}})\geq\frac{\sqrt{N}}{1+\sqrt{N}}(1-2\gamma-\sqrt{\frac{2\gamma}{\log N}}),
\]
which yields 
\[
\lim_{a\rightarrow0}\lim_{n\rightarrow\infty}\inf_{\hat{\beta}}\sup_{\beta_0\in\mathcal{H}^{*}}\mathbb{P}(d(\beta^{(i)},\beta^{(j)})\geq an^{-\frac{r}{2r+1}})\geq1.
\]
Hence Theorem $\ref{thm: lower bound}$ will be proved.

Next,  we are going to construct the set $\mathcal{H}^{*}$ and the subset
$\{\beta^{(0)},\ldots,\beta^{(N)}\}\subset\mathcal{H}^{*}$,  and then
show that both $(\ref{eq:bound 1})$ and $(\ref{eq: bound 2})$ are satisfied.

Consider the  function space
\begin{equation}
\mathcal{H}^{*}=\{\beta=\sum_{k=M+1}^{2M}b_{k}M^{-1/2}L_{K^{1/2}}\varphi_{k}:(b_{M+1},\ldots b_{2M})\in\{0,1\}^{M}\}, \label{eq: subset for lower bound}
\end{equation}
where $\{\varphi_{k}:k\geq1\}$  are the orthonomal eigenfunctions
of $T(s,t)=K^{1/2}C_{\Delta}K^{1/2}(s,t)$ and $M$ is some large
number to be decided later. 

For any $\beta\in\mathcal{H}^{*}$, observe that 
\begin{align*}
||\beta||_{K}^{2} & =||\sum_{k=M+1}^{2M}b_{k}M^{-1/2}L_{K^{1/2}}\varphi_{k}||_{K}^{2}\\
 & =\sum_{k=M+1}^{2M}b_{k}^{2}M^{-1}||L_{K^{1/2}}\varphi_{k}||_{K}^{2}\\
 & \leq\sum_{k=M+1}^{2M}M^{-1}||L_{K^{1/2}}\varphi_{k}||_{K}^{2}\\
 & =1,
\end{align*}
which follows from the fact that 
\begin{align*}
<L_{K^{1/2}}\varphi_{k},L_{K^{1/2}}\varphi_{l}>_{K} & =<L_{K}\varphi_{k},\varphi_{l}>_{K}=<\varphi_{k},\varphi_{l}>_{{L}_{2}}=\delta_{kl}.
\end{align*}
Therefore $\mathcal{H}^{*}\subset\mathcal{H}(K)=\{\beta:||\beta||_{k}<\infty\}$.

The Varshamov-Gilbert bound shows that for any $M\geq8$, there exists
a set $\mathcal{B}=\{b^{(0)},b^{(1)},\ldots,b^{(N)}\}\subset\{0,1\}^{M}$such
that
\begin{enumerate}
\item $b^{(0)}=(0,\ldots,0)'$;
\item $H(b,b')>M/8$ for any $b\neq b'\in\mathcal{B}$, where $H(\cdot,\cdot)=\frac{1}{4}\sum_{i=1}^{M}(b_{i}-b_{i}')^{2}$
is the Hamming distance;
\item $N\geq2^{M/8}$.
\end{enumerate}
The subset $\{\beta^{(0)},\ldots,\beta^{(N)}\}\subset\mathcal{H}^{*}$
is chosen as $\beta^{(i)}=\sum_{k=M+1}^{2M}b_{k-M}^{(i)}M^{-1/2}L_{K^{1/2}}\varphi_{k},$ $i=0,\ldots N$. 

For any $0\leq i<j\leq N$, observe that 
\begin{eqnarray*}
d^{2}(\beta^{(i)},\beta^{(j)}) & = & \mathbb{E}\Delta\big(\eta_{\beta^{(i)}}(X)-\eta_{\beta^{(j)}}(X)\big)^{2}\\
 & = & ||L_{C_{\Delta}^{1/2}}\sum_{k=M+1}^{2M}(b_{k-M}^{(i)}-b_{k-M}^{(j)})M^{-1/2}L_{K^{1/2}}\varphi_{k}||_{{L}_{2}}^{2}\\
 & = & \sum_{k=M+1}^{2M}(b_{k-M}^{(i)}-b_{k-M}^{(j)})^{2}M^{-1}||L_{C_{\Delta}^{1/2}}L_{K^{1/2}}\varphi_{k}||_{{L}_{2}}^{2}\\
 & = & \sum_{k=M+1}^{2M}(b_{k-M}^{(i)}-b_{k-M}^{(j)})^{2}M^{-1}s_{k}.
\end{eqnarray*}
On one hand, we have
\begin{eqnarray*}
d^{2}(\beta^{(i)},\beta^{(j)}) & = & \sum_{k=M+1}^{2M}(b_{k-M}^{(i)}-b_{k-M}^{(j)})^{2}M^{-1}s_{k}\\
 & \geq & s_{2M}M^{-1}\sum_{k=1}^{M}(b_{k}^{(i)}-b_{k}^{(j)})^{2}\\
 & = & 4s_{2M}M^{-1}H(b^{(i)},b^{(j)})\\
 & \geq & s_{2M}/2.
\end{eqnarray*}
On the other hand, we have
\begin{eqnarray*}
d^{2}(\beta^{(i)},\beta^{(j)}) & = & \sum_{k=M+1}^{2M}(b_{k-M}^{(i)}-b_{k-M}^{(j)})^{2}M^{-1}s_{k}\\
 & \leq & s_{M}M^{-1}\sum_{k=1}^{M}(b_{k}^{(i)}-b_{k}^{(j)})^{2}\\
 & \leq & s_{M}.
\end{eqnarray*}
So altogether,
\begin{equation}
s_{2M}/2\leq d^{2}(\beta^{(i)},\beta^{(j)})\leq s_{M}.\label{eq: d sM}
\end{equation}

Let $P_{j},\ j=1,\ldots N,$ be the likelihood function with data $\{(T_{i},\Delta_{i},W_{i}(s)),i=1,\ldots n\}$
and $\beta^{(j)}$, i.e
\[
P_{j}=\prod_{i=1}^{n}\big[f_{T^{u}|W}(T_{i})S_{T^{c}|W}(T_{i})\big]^{\Delta_{i}}\cdot\big[f_{T^{c}|W}(T_{i})S_{T^{u}|W}(T_{i})\big]^{1-\Delta_{i}}.
\]
Let $c_{T^{c}}=\prod_{i=1}^{n}\big[S_{T^{c}|W}(T_{i})\big]^{\Delta_{i}}\big[S_{T^{u}|W}(T_{i})\big]^{1-\Delta_{i}}$,
which does not depend on $\beta^{(j)}$, then
\[
P_{j}=c_{T^{c}}\prod_{i=1}^{n}\big[h_{0}(T_{i})\exp(\theta_{0}'Z_{i}+\eta_{\beta^{(j)}}(X_{i}))\big]^{\Delta_{i}}\cdot\exp\{-H_{0}(T_{i})\cdot e^{\theta_{0}'Z_{i}+\eta_{\beta^{(j)}}(X_{i})}\}.
\]
We calculte the Kullback-Leibler distance between
$P_{j}$ and $P_{0}$ as
\begin{align*}
KL(P_{j},P_{0}) & =\mathbb{E}_{P_{j}}\log\frac{P_{j}}{P_{0}}\\
 & =\mathbb{E}_{P_{j}}\Big\{\Delta_{i}\sum_{i=1}^{n}\{\eta_{\beta^{(j)}-\beta^{(0)}}(X_{i})\}+\sum_{i=1}^{n}H_{0}(T_{i})\, e^{\theta_{0}'Z_{i}}[\exp(\eta_{\beta^{(0)}}(X_{i}))\\
 & \quad\quad\quad-\exp(\eta_{\beta^{(j)}}(X_{i}))]\Big\}\\
 & =n\mathbb{E}_{P_{j}}\Delta[\eta_{\beta^{(j)}-\beta^{(0)}}(X)]+n\mathbb{E}_{P_{j}}H_{0}(T)\, e^{\theta_{0}'Z}[\exp(\eta_{\beta^{(0)}}(X))\\
 &\quad\quad\quad-\exp(\eta_{\beta^{(j)}}(X))]\\
 & =n\mathbb{E}_{P_{j}}^{W}\mathbb{E}_{P_{j}}^{T,\Delta}\{H_{0}(T)\,|W\}\, e^{\theta_{0}'Z}[\exp(\eta_{\beta^{(0)}}(X))-\exp(\eta_{\beta^{(j)}}(X))]\Big],
\end{align*}
where
\begin{align*}
\mathbb{E}_{p_{j}}^{T,\Delta}(H_{0}(T)\,|W) & \text{=}\mathbb{E}^{T^{c}}\big\{\mathbb{E}_{p_{j}}^{T,\Delta}(H_{0}(T)\,|T^{c},W)\big|W\big\}\\
 & =\mathbb{E}^{T^{c}}\big\{\int_{0}^{T^{c}}H_{0}(t)f_{T^{u}|W}(t)dt+H_{0}(T^{c})\mathbb{P}(T^{u}>T^{c}|T^{c},W)\big|W)\big\}, 
\end{align*}
\begin{align*}
 & \int_{0}^{T^{c}}H_{0}(t)f_{T^{u}|W}(t)dt\\
 & =\int_{0}^{T^{c}}H_{0}(t)\cdot h_{0}(t)\exp[\theta_{0}'Z+\eta_{\beta^{(j)}}(X)]\exp\{-H_{0}(t)e^{\theta_{0}'Z+\eta_{\beta^{(j)}}(X)}\}dt\\
 & =e^{-\theta_{0}'Z-\eta_{\beta^{(j)}}(X)}\int_{0}^{T^{c}}e^{\theta_{0}'Z+\eta_{\beta^{(j)}}(X)}H_{0}(T)\exp\{-H_{0}(T)\cdot e^{\theta_{0}'Z+\eta_{\beta^{(j)}}(X)}\}\\
 &\quad\quad\quad de^{\theta_{0}'Z+\eta_{\beta^{(j)}}(X)}H_{0}(T)\\
 & =\exp(-\theta_{0}'Z+\eta_{\beta^{(j)}}(X))\int_{0}^{a}ue^{-u}du\ \ \Big|_{a=e^{\theta_{0}'Z+\eta_{\beta^{(j)}}(X)}H_{0}(T^{c})}\\
 & =\exp(-\theta_{0}'Z-\eta_{\beta^{(j)}}(X))[1-e^{-a}-a\, e^{-a}]\Big|_{a=e^{\theta_{0}'Z+\eta_{\beta^{(j)}}(X)}H_{0}(T^{c})},
\end{align*}
and
\begin{align*}
\mathbb{P}(T^{u}>T^{c}|T^{c},W) & =S_{T^{u}|W}(T^{c})\\
 & =\exp\{-H_{0}(T^{c})e^{\theta_{0}'Z+\eta_{\beta^{(j)}}(X)}\}.
\end{align*}
Therefore 
\begin{align*}
& \mathbb{E}_{p_{j}}^{T,\Delta}(H_{0}(T)\,|T^{c},W)\\
&=e^{-\theta_{0}'Z-\eta_{\beta^{(j)}}(X)}[1-\exp\{-H_{0}(T^{c})e^{\theta_{0}'Z+\eta_{\beta^{(j)}}(X)}\}]-H_{0}(T^{c})\cdot \\
 & \quad\quad\quad\exp\{-H_{0}(T^{c})e^{\theta_{0}'Z+\eta_{\beta^{(j)}}(X)}\} +H_{0}(T^{c})\exp\{-H_{0}(T^{c})e^{\theta_{0}'Z+\eta_{\beta^{(j)}}(X)}\}\\
 & =e^{-\theta_{0}'Z-\eta_{\beta^{(j)}}(X)}[1-\exp\{-H_{0}(T^{c})e^{\theta_{0}'Z+\eta_{\beta^{(j)}}(X)}\}]\\
 & =e^{-\theta_{0}'Z-\eta_{\beta^{(j)}}(X)}[F_{T^{u}|W}(T^{c})]\\
 & =e^{-\theta_{0}'Z-\eta_{\beta^{(j)}}(X)}\mathbb{P}(T^{u}\leq T^{c}|T^{c},W),
\end{align*}
and further 
\begin{align*}
\mathbb{E}_{p_{j}}^{T,\Delta}(H_{0}(T)\,|W) & \text{=}\mathbb{E}^{T^{c}}\big\{\mathbb{E}_{p_{j}}^{T,\Delta}(H_{0}(T)\,|T^{c},W)\big|W\big\}\\
 & =e^{-\theta_{0}'Z-\eta_{\beta^{(j)}}(X)}\mathbb{P}(T^{u}\leq T^{c}|W)\\
 & =e^{-\theta_{0}'Z-\eta_{\beta^{(j)}}(X)}\mathbb{E}[\Delta|W].
\end{align*}
Then the KL distance becomes

\begin{align*}
KL(P_{j},P_{0}) & =n\mathbb{E}_{P_{j}}^{W}\mathbb{E}[\Delta|W]e^{-\theta_{0}'Z-\eta_{\beta^{(j)}}(X)}\, e^{\theta_{0}'Z}[\exp(\eta_{\beta^{(0)}}(X))-\exp(\eta_{\beta^{(j)}}(X))]\Big]\\
 & =n\mathbb{E}_{P_{j}}^{W,\Delta}\Delta[\exp(\eta_{\beta^{(0)}}(X)-\eta_{\beta^{(j)}}(X))-1\Big]\\
 & =n\mathbb{E}_{P_{j}}^{W,\Delta}[\frac{1}{2}\Delta(\eta_{\beta^{(0)}}(X)-\eta_{\beta^{(j)}}(X))^{2}+o(\Delta(\eta_{\beta^{(0)}}(X)-\eta_{\beta^{(j)}}(X))^{2})]\\
 & \leq n\mathbb{E}_{P_{j}}^{X}[\frac{1}{2}(\eta_{\beta^{(0)}}(X)-\eta_{\beta^{(j)}}(X))^{2}+o((\eta_{\beta^{(0)}}(X)-\eta_{\beta^{(j)}}(X))^{2})]\\
 & \lesssim nd^{2}(\beta^{(j)},\beta^{(0)})\\
 & \lesssim ns_{M}.
\end{align*}

Therefore for some positive constant $c_{1}$, 
\[
KL(P_{j},P_{0})\leq c_{1}nM^{-2r}.
\]
By taking $M$ to be the smallest integer greater than $c_{2}\gamma^{-\frac{1}{2r+1}}n^{\frac{1}{2r+1}}$
with $c_{2}=(c_{1}\cdot8\log2)^{1/(1+2r)}$, we verified $(\ref{eq: bound 2})$
that 
\begin{eqnarray*}
\frac{1}{N}\sum_{j=1}^{N}KL(P_{j},P_{0}) & \leq & \gamma\log N.
\end{eqnarray*}

Meanwhile, since $d^{2}(\beta^{(i)},\beta^{(j)})\geq s_{2M}/2$  and
$s_{2M}\asymp(2M)^{-2r}$, condition  $(\ref{eq:bound 1})$
is verified  by plugging in $M$. 

\section{Appendix}

\subsection{Derivation of $GCV(\lambda)$\label{sub: GCV}}

Recall that given the observations $\{(T_{i},\Delta_{i},$ $W_{i})\}_{i=1}^{n}$,
$\hat{\beta}_{\lambda}$ can be written in the form of 
\[
\hat{\beta}_{\lambda}(t)=\sum_{k=1}^{m}d_{k}\xi_{k}(t)+\sum_{i=1}^{n}c_{i}\int_{0}^{1}X_{i}(s)K_{1}(s,t)ds. 
\]
For simplicity, let $\xi_{k+j}(t)=\int_{0}^{1}X_{j}(s)K_{1}(s,t)ds,\, j=1,\ldots n$,
then write $\beta(t)=\sum_{k=1}^{m+n}c_{k}^{(\beta)}\xi_{k}(t)$. 
In this way, 
\[
\eta_{\alpha}(W_i)=\sum_{k=1}^p \theta_k Z_{ik}+\sum_{k=1}^{m+n} c_{k}^{(\beta)} \int X_i (t)\xi_{k}(t)dt. 
\]
Let $S^{(\beta)}$ be an  $n\times(m+n)$ matrix with the $(i,j)th$ entry defined as $S_{ij}^{(\beta)}=\int X_{i}(s)\xi_{j}(s)ds$, and  $\bold{Z}=(Z_1,\cdots,Z_n)_{n\times p} $. Denote $S=(\bold{Z},S^{(\beta)})$, a $ n\times (p+m+n)$ matrix, and $(\eta_{\alpha}(W_{1}),\ldots\eta_{\alpha}(W_{n}))^{T}=S\cdot c$ with $c=(c_{1},\ldots, c_{p+n+m})^{T}=(\theta_1,\ldots \theta_p,c_1^{(\beta)},\ldots,c_{m+n}^{(\beta )})^T$.

Since $\xi_{1},\ldots\xi_{m}$ are the bases of the null space with the  semi-norm
$J(\cdot)$,  we can write $J$ as $J(\beta)=c^{T}Qc$,
with $Q$ a $(p+m+n)\times(p+m+n)$ diagonal block matrix whose  non-zero
entries only occur in the $n\times n$ submatrix $(Q_{i,j})_{i,j=p+m+1}^{p+m+n}$. 

Let $\Delta=(\Delta_{1},\ldots,\Delta_{n})^{T}$ and $Y_{j}(t)=I\{t\geq T_{j}\}$.
Under the above expressions, we can write the penalized partial likelihood as  a function of the coefficient $c$:
\[
A_{\lambda}(c)=-\Delta'S\cdot c/n+\frac{1}{n}\sum_{i=1}^{n}\Delta_{i}\log\{\sum_{j=1}^{n}Y_{j}(T_{i})e^{S_{j\cdot}c}\}+\lambda c^{T}Qc,
\]
where $S_{j\cdot}$ is the $j^{th}$ row of $S$.

For any $\alpha\in \mathbb{R}^p\times \mathcal H(K)$, functions $f,g\in\mathcal H(K)$, and $z,z^*\in \mathbb R^{n\times 1}$ define
\[
\mu_{\alpha}(f|t)=\frac{\sum_{j=1}^{n}Y_{j}(t)e^{\eta_{\alpha}(W_j)}\eta_f (X_j)}{\sum_{j=1}^{n} Y_{j}(t)e^{\eta_{\alpha}(W_j)}},
\ \ \ \ \ \ \ 
\mu_{\alpha}(z|t)=\frac{\sum_{j=1}^{n}Y_{j}(t)e^{\eta_{\alpha}(W_j)}z_j}{\sum_{j=1}^{n} Y_{j}(t)e^{\eta_{\alpha}(W_j)}},
\]
and 
\[
\mu_{\alpha}(f,\, g|t)=\frac{\sum_{j=1}^{n}Y_{j}(t)e^{\eta_{\alpha}(W_j)} \eta_f (X_j)\cdot \eta_g (X_j)}{\sum_{j=1}^{n}Y_{j}(t)e^{\eta_{\alpha}(W_j)}},
\]
\[
\mu_{\alpha}(z,\, z^*|t)=\frac{\sum_{j=1}^{n}Y_{j}(t)e^{\eta_{\alpha}(W_j)} z_j\,z^*_j}{\sum_{j=1}^{n}Y_{j}(t)e^{\eta_{\alpha}(W_j)}},
\]
\[
\mu_{\alpha}(f,\, z|t)=\frac{\sum_{j=1}^{n}Y_{j}(t)e^{\eta_{\alpha}(W_j)} \eta_f (X_j)\,z_j}{\sum_{j=1}^{n}Y_{j}(t)e^{\eta_{\alpha}(W_j)}}.
\]
Define  $\mu_{\alpha}(g)=\frac{1}{n}\sum_{i=1}^{n}\mu_{\alpha}(g|T_{i})$,
$V_{\alpha}(f,g|t)=\mu_{\alpha}(f,\, g|t)-\mu_{\alpha}(f|t)\mu_{\alpha}(g|t)$, and $V_{\alpha}(f,g)=\frac{1}{n}\sum_{i=1}^{n}V_{\alpha}(f,g|T_{i})$, and define by analogy  $\mu_{\alpha}(z)$, $V_{\alpha}(z,z^*|t)$, $V_{\alpha}(f,z|t)$, $V_{\alpha}(z,z^*)$, and $V_{\alpha}(f,z)$.
Now take the derivative of $A_{\lambda}(c)$ at $\tilde{\alpha}=S\cdot\tilde{c}$
with respect to $c$, we have 
\[
\frac{\partial A_{\lambda}(c)}{\partial c}\big|_{\tilde{\alpha}}=-S^T\Delta/n+\mu_{\tilde{\alpha}}(\varsigma)+2\lambda Q\tilde{c},
\]
and 
\[
\frac{\partial^{2}A_{\lambda}(c)}{\partial c^{2}}\big|_{\tilde{\alpha}}=V_{\tilde{\alpha}}(\varsigma,\varsigma^{T})+2\lambda Q,
\]
where $\varsigma=(Z_{\cdot 1},\ldots,Z_{\cdot p},\xi_{1}(s),\ldots,\xi_{m+n}(s))^{T}$.  To obtain the minimum of $A_{\lambda}(c)$, we apply the Newton-Raphson algorithm to $\partial A_{\lambda}(c)/\partial c$. That is,
\[
[V_{\tilde{\alpha}}(\varsigma,\varsigma^{T})+2\lambda Q](c-\tilde{c})=S^T\Delta/n-\mu_{\tilde{\alpha}}(\varsigma)-2\lambda Q\tilde{c}.
\]
To simplify the notations, let $H=[V_{\tilde{\alpha}}(\varsigma,\varsigma^{T})+2\lambda Q]$
and $h=-\mu_{\tilde{\alpha}}(\varsigma)+V_{\tilde{\alpha}}(\varsigma,\varsigma^{T})\tilde{c}$, 
so $\hat{c}\thickapprox H^{-1}(S'\Delta/n+h)$ and 
\[
\hat{c}^{[i]}\thickapprox H^{-1}(\frac{S^T\Delta-\Delta_{i}S_{i\cdot}^T}{n-1}+h)=\hat{c}-\Delta_{i}\cdot\frac{H^{-1}S_{i\cdot}^T}{n-1}+\frac{H^{-1}S'\Delta}{n(n-1)}.
\]
Then the first term of $\widehat{RKL}$ becomes 
\[
\sum_{i=1}^{n}\eta_{\tilde{\alpha}_{\lambda}}^{[i]}(W_{i})=\sum_{i=1}^{n}\eta_{\tilde{\alpha}_{\lambda}}(W_{i})-\sum_{i=1}^{n}[\Delta_{i}\cdot\frac{S_{i\cdot} H^{-1}S_{i\cdot}^T}{n-1}+\frac{S_{i\cdot} H^{-1}S'\Delta}{n(n-1)}].
\]
Simplifying this leads to  
\[
\sum_{i=1}^{n}\eta_{\tilde{\alpha}_{\lambda}}^{[i]}(W_{i})=\sum_{i=1}^{n}\eta_{\tilde{\alpha}_{\lambda}}(W_{i})-\frac{1}{(n-1)}tr[(SH^{-1}S)(diag\Delta-\Delta\boldsymbol{1}'/n)],
\]
where $diag\Delta$ is an $n\times n$ diagonal matrix with diagonal
entries $\Delta_{1},\ldots,\Delta_{n}$. Plugging this back to $\widehat{RKL}$,
then $GCV(\lambda)$ is obtained.

If the \textcolor{black}{efficient estimator $\beta_{\lambda}^{*}$ is used instead,
the derivation and therefore the main result remain the same by adjusting the definition of $\xi$ and $S^{(\beta)}$ accordingly.}

\subsection{Proofs of Lemmas}


\begin{lem}
\label{lem: dH}Following the former notations, for $0\leq s\leq1$,
let 
\[
g(t,s)=\frac{S_{1}(t,\alpha_{0}+s\widetilde{\alpha})[\alpha^{*}]}{S_{0}(t,\alpha_{0}+s\widetilde{\alpha})}.
\]
Denote $R_{s}(t)=Y(t)\exp(\eta_{\alpha_{0}}+s\eta_{\widetilde{\alpha}})/S_{0}(t,\alpha_{0}+s\widetilde{\alpha})$.
We have
\begin{eqnarray*}
\frac{\partial}{\partial s}g(t,s) & = & \mathbb{E}[R_{s}(t)\eta_{\widetilde{\alpha}}\eta_{\alpha^{*}}]-\mathbb{E}[R_{s}(t)\eta_{\widetilde{\alpha}}]\mathbb{E}[R_{s}(t)\eta_{\alpha^{*}}]\\
 & = & \mathbb{E}\big\{ R_{s}(t)\big(\eta_{\widetilde{\alpha}}-\mathbb{E}[R_{s}(t)\eta_{\widetilde{\alpha}}]\big)\big(\eta_{\alpha^{*}}-\mathbb{E}[R_{s}(t)\eta_{\alpha^{*}}]\big),
\end{eqnarray*}
and 
\begin{align*}
\frac{\partial^{2}}{\partial s^{2}}g(t,s) & =\mathbb{E}[R_{s}(t)\eta_{\alpha^{*}}\eta_{\widetilde{\alpha}}^{2}]-2\mathbb{E}[R_{s}(t)\eta_{\widetilde{\alpha}}]\mathbb{E}[R_{s}(t)\eta_{\alpha^{*}}\eta_{\widetilde{\alpha}}] \\
 & \ \ \ \ \ \ \ \ -\mathbb{E}[R_{s}(t)\eta_{\alpha^{*}}]\mathbb{E}[R_{s}(t)\eta_{\widetilde{\alpha}}^{2}]+2\mathbb{E}[R_{s}(t)\eta_{\alpha^{*}}]\mathbb{E}[R_{s}(t)\eta_{\widetilde{\alpha}}]^{2}.
\end{align*}

\end{lem}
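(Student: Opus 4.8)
The plan is to regard $g(t,s)$ as the mean of $\eta_{\alpha^{*}}(W)$ under the exponentially tilted weight $R_{s}(t)$. Writing $N(s)=\mathbb{E}[Y(t)e^{\eta_{\alpha_{0}}+s\eta_{\widetilde{\alpha}}}\eta_{\alpha^{*}}]$ and $D(s)=S_{0}(t,\alpha_{0}+s\widetilde{\alpha})=\mathbb{E}[Y(t)e^{\eta_{\alpha_{0}}+s\eta_{\widetilde{\alpha}}}]$, we have $g(t,s)=N(s)/D(s)=\mathbb{E}[R_{s}(t)\eta_{\alpha^{*}}]$ together with the normalization $\mathbb{E}[R_{s}(t)]=1$. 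The whole computation then reduces to differentiating such ratios in $s$.

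First I would justify interchanging $\partial/\partial s$ with the expectation. Under (A4), $Y(t)\le1$ and the indices $\eta_{\alpha_{0}}(W),\eta_{\widetilde{\alpha}}(W),\eta_{\alpha^{*}}(W)$ are bounded almost surely, so for $s$ ranging over any compact interval the integrands defining $N$, $D$ and their formal $s$-derivatives of order up to two are dominated by a constant; and by (A3) and (A5), $D(s)>0$ at each fixed $t\in[0,\tau)$ (indeed bounded below uniformly in $s$ on compacta). Dominated convergence then gives $N'(s)=\mathbb{E}[Y(t)e^{\eta_{\alpha_{0}}+s\eta_{\widetilde{\alpha}}}\eta_{\widetilde{\alpha}}\eta_{\alpha^{*}}]$, $D'(s)=\mathbb{E}[Y(t)e^{\eta_{\alpha_{0}}+s\eta_{\widetilde{\alpha}}}\eta_{\widetilde{\alpha}}]$, and likewise for the second derivatives. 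The key algebraic fact is that, for any fixed bounded $\psi=\psi(W)$, the quotient rule gives
\[
\frac{\partial}{\partial s}\mathbb{E}[R_{s}(t)\psi]=\mathbb{E}[R_{s}(t)\eta_{\widetilde{\alpha}}\psi]-\mathbb{E}[R_{s}(t)\psi]\,\mathbb{E}[R_{s}(t)\eta_{\widetilde{\alpha}}],
\]
that is, differentiation in $s$ acts as the operation of taking the $R_{s}(t)$-covariance with $\eta_{\widetilde{\alpha}}$.

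Applying this identity with $\psi=\eta_{\alpha^{*}}$ gives the first displayed formula for $\partial_{s}g(t,s)$, and expanding $\mathbb{E}\{R_{s}(t)(\eta_{\widetilde{\alpha}}-\mathbb{E}[R_{s}(t)\eta_{\widetilde{\alpha}}])(\eta_{\alpha^{*}}-\mathbb{E}[R_{s}(t)\eta_{\alpha^{*}}])\}$ and using $\mathbb{E}[R_{s}(t)]=1$ recovers the stated covariance form. For the second derivative I would differentiate $\partial_{s}g(t,s)=\mathbb{E}[R_{s}(t)\eta_{\widetilde{\alpha}}\eta_{\alpha^{*}}]-\mathbb{E}[R_{s}(t)\eta_{\alpha^{*}}]\mathbb{E}[R_{s}(t)\eta_{\widetilde{\alpha}}]$ term by term: apply the covariance-operator identity to $\psi=\eta_{\widetilde{\alpha}}\eta_{\alpha^{*}}$ in the first term, and use the product rule together with the identity (for $\psi=\eta_{\alpha^{*}}$ and $\psi=\eta_{\widetilde{\alpha}}$ separately) on the product in the second; collecting the resulting terms yields exactly the stated expression for $\partial_{s}^{2}g(t,s)$. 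The only real obstacle is the interchange of derivative and expectation, which is handled entirely by the boundedness afforded by (A3)--(A5); everything after that is bookkeeping with the quotient and product rules.
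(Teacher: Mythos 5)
Your computation is correct and is exactly the ``direct calculation'' the paper invokes for this lemma: writing $g(t,s)$ as the ratio $N(s)/D(s)$, applying the quotient rule (equivalently, your covariance-operator identity for $\partial_s\mathbb{E}[R_s(t)\psi]$ with $\mathbb{E}[R_s(t)]=1$), and iterating it for the second derivative reproduces both displayed formulas term by term. Your added justification of the derivative--expectation interchange via the boundedness in (A3)--(A5) is a harmless strengthening of what the paper leaves implicit.
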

\begin{proof}
The lemma follows by direct calculation.   
\end{proof}

\begin{lem}
\label{lem bound 1}Let $\alpha_{0}$ be the true coefficients. Under
assumption A(1)-A(4), we have

\[
P_{\Delta}m_{0}(\cdot,\alpha)-P_{\Delta}m_{0}(\cdot,\alpha_{0})\asymp-d^{2}(\alpha,\alpha_{0}).
\]
\end{lem}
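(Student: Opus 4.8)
\emph{Proof plan.} Set $\widetilde\alpha:=\alpha-\alpha_0$, so that $d^2(\alpha,\alpha_0)=\mathbb E[\Delta\,\eta_{\widetilde\alpha}(W)^2]$; by (A4) (together with the embedding $\|\beta\|_2\lesssim\|\beta\|_K$, valid on the bounded parameter sets over which the lemma is invoked, with absolute constants understood in that regime) the index $\eta_{\widetilde\alpha}(W)$ is a.s. bounded by some $B$. The plan is to Taylor–expand $G(s):=M_0(\alpha_0+s\widetilde\alpha)=P_\Delta m_0(\cdot,\alpha_0+s\widetilde\alpha)$ on $[0,1]$ about $s=0$, so that $M_0(\alpha)-M_0(\alpha_0)=G'(0)+\int_0^1(1-s)G''(s)\,ds$. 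Differentiating under the expectation with $\partial_s\log S_0(t,\alpha_0+s\widetilde\alpha)=S_1(t,\alpha_0+s\widetilde\alpha)[\widetilde\alpha]/S_0(t,\alpha_0+s\widetilde\alpha)=:m(t,s)$ gives $G'(s)=P_\Delta\{1_{\{T\le\tau\}}(\eta_{\widetilde\alpha}(W)-m(T,s))\}$. By the identity (\ref{eq:ssieni}), $m(t,0)=\mathbb E[\eta_{\widetilde\alpha}(W)\mid T=t,\Delta=1]$, so by the tower property and $T\in[0,\tau]$ a.s. (A3) one has $P_\Delta\{1_{\{T\le\tau\}}m(T,0)\}=\mathbb E[\Delta\,\eta_{\widetilde\alpha}(W)]$, hence $G'(0)=0$.

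For the quadratic term, Lemma \ref{lem: dH} with $\alpha^*=\widetilde\alpha$ yields $\partial_s m(t,s)=\mathbb E[R_s(t)\eta_{\widetilde\alpha}^2]-(\mathbb E[R_s(t)\eta_{\widetilde\alpha}])^2=:\mathrm{Var}_{R_s(t)}(\eta_{\widetilde\alpha})\ge0$, where $R_s(t)$ is the unit-mean weight $Y(t)e^{\eta_{\alpha_0}+s\eta_{\widetilde\alpha}}/S_0(t,\alpha_0+s\widetilde\alpha)$. Thus $G''(s)=-P_\Delta\{1_{\{T\le\tau\}}\mathrm{Var}_{R_s(T)}(\eta_{\widetilde\alpha})\}\le0$ and
\[
M_0(\alpha_0)-M_0(\alpha)=\int_0^1(1-s)\,\mathbb E\big\{\Delta\,1_{\{T\le\tau\}}\,\mathrm{Var}_{R_s(T)}(\eta_{\widetilde\alpha}(W))\big\}\,ds\ \ge\ 0 .
\]
Since $|s\eta_{\widetilde\alpha}|\le B$, the ratio $R_s(t)/R_0(t)$ is bounded above and below by positive constants, so $\mathrm{Var}_{R_s(t)}(\eta_{\widetilde\alpha})\le\mathbb E[R_s(t)\eta_{\widetilde\alpha}^2]\lesssim\mathbb E[R_0(t)\eta_{\widetilde\alpha}^2]=\mathbb E[\eta_{\widetilde\alpha}(W)^2\mid T=t,\Delta=1]$, the last equality being (\ref{eq:ssieni}) applied to the function $\eta_{\widetilde\alpha}^2$ of $W$. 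Taking expectations and using $T\in[0,\tau]$ a.s. gives the upper bound $M_0(\alpha_0)-M_0(\alpha)\lesssim\mathbb E[\Delta\,\eta_{\widetilde\alpha}^2]=d^2(\alpha,\alpha_0)$.

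The lower bound is the substantive step. The same bounded–ratio remark (after reweighting the law of $W$ by $\mathbb E[Y(t)\mid W]e^{\eta_{\alpha_0}+s\eta_{\widetilde\alpha}}/S_0$) gives $\mathrm{Var}_{R_s(t)}(g)\gtrsim\mathrm{Var}_{R_0(t)}(g)$ for any $g$, via the elementary fact that $\mathrm{Var}_Q(g)\ge c^2\mathrm{Var}_P(g)$ whenever $dQ/dP\ge c$; hence
\[
M_0(\alpha_0)-M_0(\alpha)\ \gtrsim\ \mathbb E\big\{\Delta\,\mathrm{Var}\big(\eta_{\widetilde\alpha}(W)\mid T,\Delta=1\big)\big\} .
\]
It then remains to show that this $\Delta$-weighted within-stratum variance is bounded below by a fixed positive multiple of $\mathbb E[\Delta\,\eta_{\widetilde\alpha}(W)^2]$, uniformly over the relevant $\widetilde\alpha$; equivalently, writing $a_{\widetilde\alpha}(t):=\mathbb E[\eta_{\widetilde\alpha}(W)\mid T=t,\Delta=1]$, that $\mathbb E[\Delta\,a_{\widetilde\alpha}(T)^2]\le(1-c)\,\mathbb E[\Delta\,\eta_{\widetilde\alpha}(W)^2]$ for some $c>0$, i.e.\ the $\Delta$-weighted $L_2$-projection of the index onto functions of $T$ cannot exhaust its squared norm. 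This is exactly the identifiability of the functional Cox model, and it is \textbf{the main obstacle}. I would establish it by (i) using (A1) to place $a_{\widetilde\alpha}$ in the $\Delta$-mean-zero subspace, since $\mathbb E[\Delta\,a_{\widetilde\alpha}(T)]=\mathbb E[\Delta\,\eta_{\widetilde\alpha}(W)]=0$; (ii) using (A4)--(A5) (so that the joint sub-density $f(T,Z,X,\Delta=1)$ is bounded away from $0$ and $\infty$) to compare the law of $W$ given $(T,\Delta=1)$ with the fixed $\Delta$-weighted marginal law of $W$ through a bounded density ratio, reducing the claim to a statement on a fixed probability space; and (iii) on that space, invoking that the set of candidate indices $\eta_{\widetilde\alpha}(W)$ — finite-dimensional in the $Z$-directions and a norm ball in the $\mathcal H(K)$-direction — has closure making a strictly positive angle with the closed subspace of functions of $T$, the latter because continuity of $T^u$ under the Cox model prevents $\eta_{\widetilde\alpha}(W)$ from coinciding a.s.\ with a function of $T$ unless $\widetilde\alpha=0$. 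Combining the two bounds yields $M_0(\alpha_0)-M_0(\alpha)\asymp d^2(\alpha,\alpha_0)$, i.e.\ $P_\Delta m_0(\cdot,\alpha)-P_\Delta m_0(\cdot,\alpha_0)\asymp-d^2(\alpha,\alpha_0)$.
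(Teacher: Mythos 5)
Your expansion framework, the vanishing of the first derivative at $s=0$ (via the identity (\ref{eq:ssieni})), the sign of the curvature, and your upper bound $M_0(\alpha_0)-M_0(\alpha)\lesssim d^2(\alpha,\alpha_0)$ all match the paper's argument and are fine. The gap is in the lower bound, exactly where you flag "the main obstacle." By passing from $\mathrm{Var}_{R_s(t)}(\eta_{\widetilde\alpha})$ to $\mathrm{Var}_{R_0(t)}(\eta_{\widetilde\alpha})=\mathrm{Var}(\eta_{\widetilde\alpha}\mid T=t,\Delta=1)$ you recenter the index at its conditional mean given $T$, i.e.\ you discard the between-$T$ component $\mathbb E[\Delta\,a_{\widetilde\alpha}(T)^2]$ of the $\Delta$-weighted second moment, and you are then forced to prove that the within-stratum variance alone dominates $\mathbb E[\Delta\,\eta_{\widetilde\alpha}^2]$ uniformly over the class. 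Your step (iii) does not deliver this: the observation that $\eta_{\widetilde\alpha}(W)$ cannot coincide a.s.\ with a function of $T$ unless $\widetilde\alpha=0$ gives only per-direction positivity, and upgrading it to a uniform positive angle over an infinite-dimensional family (a norm ball in $\mathcal H(K)$ plus the $Z$-directions) would need a compactness-plus-continuity argument that is neither supplied nor obviously available under (A1)--(A5); the resulting constant would in any case depend on $h_0$ and the model in an uncontrolled way.

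The paper's proof shows this detour is unnecessary. Keep the weighted variance at fixed $t$ centered at the \emph{scalar} $c=\mathbb E[R_\gamma(t)\eta_{\widetilde\alpha}]$ rather than at a function of $T$: writing $G''_s(t,\gamma)=\mathbb E^W\bigl[\mathbb E(R_\gamma(t)\mid W)\,(\eta_{\widetilde\alpha}(W)-c)^2\bigr]$ and using (A2)--(A4) to get $\mathbb E(R_\gamma(t)\mid W)\ge c_2>0$ uniformly for $t\in[0,\tau]$, one obtains
\[
G''_s(t,\gamma)\;\ge\;c_2\,\mathbb E\bigl[(\eta_{\widetilde\alpha}-c)^2\bigr]\;\ge\;c_2\,\mathbb E\bigl[\Delta(\eta_{\widetilde\alpha}-c)^2\bigr]
= c_2\bigl(\mathbb E[\Delta\eta_{\widetilde\alpha}^2]+c^2\,\mathbb E\Delta\bigr)\;\ge\;c_2\,d^2(\alpha,\alpha_0),
\]
where the cross term vanishes because $\mathbb E[\Delta\,\eta_{\widetilde\alpha}(W)]=0$ by the centering assumption (A1). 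Integrating against $P_\Delta$ then gives $M_0(\alpha_0)-M_0(\alpha)\gtrsim d^2(\alpha,\alpha_0)$ with no identifiability or projection argument at all; (A1) is precisely what makes the scalar recentering harmless. So your proposal as written is incomplete at the decisive coercivity step, and the fix is to avoid conditioning on $T$ in the lower bound rather than to prove the positive-angle claim. (A minor further remark: your upper bound applies (\ref{eq:ssieni}) to $\eta_{\widetilde\alpha}^2$, which holds for general functions of $W$ even though the paper states it only for linear indices; the paper instead bounds $\mathbb E\eta_{\widetilde\alpha}^2\lesssim\mathbb E[\Delta\eta_{\widetilde\alpha}^2]$ via (A3)(i).)
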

\begin{proof}
Observe that
\begin{eqnarray*}
& &P_{\Delta}m_{0}(\cdot,\alpha)-P_{\Delta}m_{0}(\cdot,\alpha_{0})\\
 & & = P_{\Delta}(m_{0}(\cdot,\alpha)-m_{0}(\cdot,\alpha_{0}))\\
 & & = P_{\Delta}\{\eta_{\alpha-\alpha_{0}}(W)-\log S_{0}(\cdot,\alpha)+\log S_{0}(\cdot,\alpha_{0})\}1_{\{0\leq T\leq\tau\}}\\
 & & = -P_{\Delta}\{\log S_{0}(\cdot,\alpha)-\log S_{0}(\cdot,\alpha_{0})\}1_{\{0\leq T\leq\tau\}}.
\end{eqnarray*}
Let $\widetilde{\alpha}=(\theta-\theta_{0},\beta-\beta_{0})$ and $G(t,s)=\log(S_{0}(t,\alpha_{0}+s\widetilde{\alpha}))$, 
then
\[
P_{\Delta}m_{0}(\cdot,\alpha)-P_{\Delta}m_{0}(\cdot,\alpha_{0})=-P_{\Delta}(G(\cdot,1)-G(\cdot,0))1_{\{0\leq T\leq\tau\}}.
\]
For fixed t, take the derivative of $G(t,s)$ with respect to $s$, we
have 
\[
\frac{\partial}{\partial s}G(t,s)=\frac{S_{1}(t,\alpha_{0}+s\widetilde{\alpha})[\widetilde{\alpha}]}{S_{0}(t,\alpha_{0}+s\widetilde{\alpha})}\stackrel{def}{=}g(t,s).
\]
Noting that $P_{\Delta}\frac{\partial}{\partial s}G(\cdot,0)=P_{\Delta}g(\cdot,0)=0$, then  lemma
$\ref{lem: dH}$ implies, 
\[
\frac{\partial^{2}}{\partial s^{2}}G(t,s)=\frac{\partial}{\partial s}g(t,s)=\mathbb{E}[R_{s}(t)\eta_{\widetilde{\alpha}}^{2}]-\big(\mathbb{E}[R_{s}(t)\eta_{\widetilde{\alpha}}]\big)^{2}=\mathbb{E}R_{s}(t)\big(\eta_{\widetilde{\alpha}}-\mathbb{E}[R_{s}(t)\eta_{\widetilde{\alpha}}]\big)^{2},
\]
where $R_{s}(t)=\frac{Y(t)e^{\eta_{\alpha_{0}}+s\eta_{_{\widetilde{\alpha}}}}}{S_{0}(t,\eta_{\alpha_{0}}+s\eta_{\widetilde{\alpha}})}$.
Therefore for some $\gamma\in[0,1]$,

\begin{eqnarray*}
G(t,1)-G(t,0) & = & G'_{s}(t,0)+\frac{1}{2}G''_{s}(t,\gamma)\\
 & = &g(t,0)+\frac{1}{2} \mathbb{E}R_{\gamma}(t)\big(\eta_{\widetilde{\alpha}}-\mathbb{E}[R_{\gamma}(t)\eta_{\widetilde{\alpha}}]\big)^{2}\\
 & = & g(t,0)+\frac{1}{2} \mathbb{E}^{W}\mathbb{E}\big(R_{\gamma}(t)|W\big)\big(\eta_{\widetilde{\alpha}}-\mathbb{E}[R_{\gamma}(t)\eta_{\widetilde{\alpha}}]\big)^{2}.
\end{eqnarray*}
By the definition of $R_{s}(t)$,
\[
\mathbb{E}\big(R_{\gamma}(t)|W\big)=P(T\geq t|W)\exp(\eta_{\alpha_{0}+\gamma\widetilde{\alpha}}(W))/S_{0}(t,\eta_{\alpha_{0}+\gamma\widetilde{\alpha}}).
\]
By the assumptions and  for $t\in[0,\tau]$, there exists constants $c_{1}>c_{2}>0$
not depending on $t$, such that
\[
c_{2}\leq\mathbb{E}[R_{\gamma}(t)|W]\leq c_{1}.
\]
On one hand,
\begin{eqnarray*}
& & G(t,1)-G(t,0)\\ 
& & \geq   g(t,0)+\frac{1}{2}c_{2}\mathbb{E}\big(\eta_{\widetilde{\alpha}}-\mathbb{E}[R_{\gamma}(t)\eta_{\widetilde{\alpha}}]\big)^{2}\geq  g(t,0)+\frac{1}{2}c_{2}\mathbb{E}\Delta\big(\eta_{\widetilde{\alpha}}-\mathbb{E}[R_{\gamma}(t)\eta_{\widetilde{\alpha}}]\big)^{2}\\
 & & =   g(t,0)+\frac{1}{2}c_{2}\mathbb{E}\Delta\eta_{\widetilde{\alpha}}^{2}-2\mathbb{E}\Delta\eta_{\widetilde{\alpha}}\mathbb{E}[R_{\gamma}(t)\eta_{h}]+\mathbb{E}[R_{\gamma}(t)\eta_{\widetilde{\alpha}}]^{2}\\
  & & \geq   g(t,0)+\frac{1}{2}c_{2}d^{2}(\alpha,\alpha_{0}),
\end{eqnarray*}
which follows from the fact that $E\Delta\eta_{\widetilde{\alpha}}=0.$
So

\begin{eqnarray}
P_{\Delta}m_{0}(\cdot,\alpha)-P_{\Delta}m_{0}(\cdot,\alpha_{0}) & = & -P_{\Delta}\{G(\cdot,1)-G(\cdot,0)\}1_{\{0\leq T\leq\tau\}}\nonumber \\
 & \leq & -P_{\Delta}d^{2}(\alpha,\alpha_{0})1_{\{0\leq T\leq\tau\}}\nonumber \\
 & \lesssim & -d^{2}(\alpha,\alpha_{0}).  \label{eq: lemma2 1}
\end{eqnarray}

On the other hand,
\begin{eqnarray*}
G(t,1)-G(t,0) & \leq & g(t,0)+ \frac{1}{2}c_{1}\mathbb{E}\big(\eta_{\widetilde{\alpha}}^{2}-E[R_{\gamma}(t)\eta_{\widetilde{\alpha}}]\big)^{2}\\
&\leq&  g(t,0)+c_{1}\{E\eta_{\widetilde{\alpha}}^{2}+(E[R_{\gamma}(t)\eta_{\widetilde{\alpha}}]){}^{2}\}.
\end{eqnarray*}
Since $(E[R_{\gamma}(t)\eta_{\widetilde{\alpha}}]){}^{2}=(E^{W}E[R_{\gamma}(t)|W]^{2}\cdot\eta_{\widetilde{\alpha}}^{2})\leq c_{1}^{2}\epsilon^{-1}E\Delta\eta_{\widetilde{\alpha}}^{2}$,  we arrive at  
\begin{eqnarray}
P_{\Delta}m_{0}(\cdot,\alpha)-P_{\Delta}m_{0}(\cdot,\alpha_{0}) & = & -P_{\Delta}\{G(\cdot,1)-G(\cdot,0)\}1_{\{0\leq T\leq\tau\}}\nonumber \\
 & \gtrsim & -P_{\Delta}d^{2}(\alpha,\alpha_{0})1_{\{0\leq T\leq\tau\}}\nonumber \\
 & \gtrsim & -d^{2}(\alpha,\alpha_{0}).  \label{eq: lemma2 1-1}
\end{eqnarray}

Combining ($\ref{eq: lemma2 1}$) and ($\ref{eq: lemma2 1-1}$) we have
\[
P_{\Delta}m_{0}(\cdot,\alpha)-P_{\Delta}m_{0}(\cdot,\alpha_{0})\asymp-d^{2}(\alpha,\alpha_{0}). 
\]
\end{proof}

\begin{lem}
\label{lem:consist 1}$\mathcal{F}_{1}=\{\Delta m_{0}(t,W,\alpha):\alpha\in\mathbb{R}_{M}\times\mathcal{H}^{M}\}$
and $\mathcal{F}_{2}=\{Y(t)e^{\eta_{\alpha}(W)}:\alpha\in\mathbb{R}_{M}\times\mathcal{H}^{M},\ 0\leq t\leq\tau\}$
are P-Glivenko-Cantelli.\end{lem}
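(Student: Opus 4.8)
The plan is to show that both $\mathcal{F}_1$ and $\mathcal{F}_2$ have finite $L_1(P)$-bracketing numbers at every level $\epsilon>0$ and then to invoke the bracketing Glivenko--Cantelli theorem (Theorem~2.4.1 of \citet{Vaart1996}, or Theorem~19.4 of \citet{Vaart2000}). The structural fact that makes this work is that, on $\mathbb{R}_M\times\mathcal{H}^M$, the index $\eta_\alpha(W)=\theta'Z+\eta_\beta(X)$ is uniformly bounded and the parameter set is totally bounded in a tractable pseudometric. Indeed, by (A4) the covariate $Z$ ranges over a bounded set and $\|X\|_2$ is almost surely bounded, while for $\beta\in\mathcal{H}^M$ the continuous embedding $\mathcal{H}(K)\hookrightarrow L_2[0,1]$ (valid since $K$ is a trace-class Mercer kernel, $\int_0^1 K(s,s)\,ds<\infty$) gives $\|\beta\|_2\lesssim\|\beta\|_K<M$; hence $|\eta_\alpha(W)|\le B$ a.s.\ for some constant $B$. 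Moreover $\mathbb{R}_M$ is totally bounded in $\mathbb{R}^p$ and the ball $\mathcal{H}^M$ is totally bounded in $L_2[0,1]$ (its image there is an ellipsoid with semi-axes $M\mu_k^{1/2}\to0$, where the $\mu_k$ are the eigenvalues of $L_K$), so for each $\epsilon>0$ there is a finite $\epsilon$-net $\alpha_1,\dots,\alpha_{N(\epsilon)}$ of $\mathbb{R}_M\times\mathcal{H}^M$ for $\rho(\alpha,\alpha')=\|\theta-\theta'\|+\|\beta-\beta'\|_2$. Since $|\eta_\alpha(W)-\eta_{\alpha_j}(W)|\le\|\theta-\theta_j\|\,\|Z\|+\|\beta-\beta_j\|_2\,\|X\|_2\lesssim\epsilon$ a.s., the pairs $[\eta_{\alpha_j}(W)-c\epsilon,\ \eta_{\alpha_j}(W)+c\epsilon]$ constitute finitely many genuine $L_1(P)$-brackets of width $\lesssim\epsilon$ for the class $\{\eta_\alpha(W):\alpha\in\mathbb{R}_M\times\mathcal{H}^M\}$.

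For $\mathcal{F}_2$, I would push these brackets through the exponential, which is Lipschitz on the bounded range $[-B,B]$: this yields finitely many $L_1(P)$-brackets (of width $\lesssim e^B\epsilon$) for $\{e^{\eta_\alpha(W)}\}$, with bracketing functions valued in $[0,e^B]$. The family $\{Y(t)=1_{\{T\ge t\}}:0\le t\le\tau\}$ is a uniformly bounded collection of monotone indicator functions (a VC-subgraph class), so it too has finite $L_1(P)$-bracketing numbers. Forming products of the bracketing functions from the two families and using the uniform bounds $0\le e^{\eta_\alpha}\le e^B$ and $0\le Y(t)\le1$, one obtains finitely many $L_1(P)$-brackets for $\mathcal{F}_2=\{Y(t)e^{\eta_\alpha(W)}\}$, which is therefore $P$-Glivenko--Cantelli.

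For $\mathcal{F}_1$, write $\Delta m_0(T,W,\alpha)=\Delta 1_{\{0\le T\le\tau\}}\,\eta_\alpha(W)-\Delta 1_{\{0\le T\le\tau\}}\log S_0(T,\alpha)$. The first summand is the class $\{\eta_\alpha(W)\}$ multiplied by the fixed bounded measurable function $\Delta 1_{\{0\le T\le\tau\}}$, hence inherits finite bracketing numbers. For the second summand the crucial point is that $S_0(t,\alpha)=\mathbb{E}[Y(t)e^{\eta_\alpha(W)}]$ is \emph{deterministic}: it satisfies $S_0(t,\alpha)\le e^B$ and, by (A3) together with (A5) as noted after the assumptions, $\inf_{0\le t\le\tau,\ \alpha\in\mathbb{R}_M\times\mathcal{H}^M}S_0(t,\alpha)\ge c_0>0$, while $|S_0(t,\alpha)-S_0(t,\alpha_j)|\le e^B\,\mathbb{E}|\eta_\alpha(W)-\eta_{\alpha_j}(W)|\lesssim\rho(\alpha,\alpha_j)$ uniformly in $t$. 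Since $\log$ is Lipschitz on $[c_0,e^B]$, it follows that $|\log S_0(T,\alpha)-\log S_0(T,\alpha_j)|\lesssim\rho(\alpha,\alpha_j)$ uniformly in $T$, so the fixed functions $T\mapsto\Delta 1_{\{0\le T\le\tau\}}(\log S_0(T,\alpha_j)\pm c'\epsilon)$ give finitely many $L_1(P)$-brackets of width $\lesssim\epsilon$ for $\{\Delta 1_{\{0\le T\le\tau\}}\log S_0(T,\alpha)\}$. A sum of two classes with finite bracketing numbers again has finite bracketing numbers, so $\mathcal{F}_1$ does too, and the bracketing Glivenko--Cantelli theorem finishes the proof.

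The routine part is the permanence machinery: Lipschitz images on bounded ranges, products and sums of classes, and multiplication by fixed bounded functions all preserve finite bracketing numbers. The one point that genuinely uses the RKHS hypothesis, and which I expect to be the main thing to get right, is the total boundedness of $\mathcal{H}^M$ in $L_2[0,1]$ (equivalently, compactness of the embedding $\mathcal{H}(K)\hookrightarrow L_2[0,1]$, which rests on $K$ being trace-class); this is what reduces the infinite-dimensional nuisance index $\beta$ to a manageable set. The secondary point needing care is the uniform positivity of $S_0(t,\alpha)$ on $[0,\tau]\times(\mathbb{R}_M\times\mathcal{H}^M)$, to be extracted from (A3) (and finiteness of the baseline cumulative hazard at $\tau$) exactly as in the paper's remark that the joint density is bounded away from zero.
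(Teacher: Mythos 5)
Your proof is correct, and it reaches the conclusion by a route that is parallel to, but noticeably more complete than, the paper's. The paper also reduces both classes to the index class $\{\eta_\alpha(W)\}$ via Lipschitz-type bounds (including the same implicit use of a uniform lower bound on $S_0(t,\alpha)$ so that $\log$ is Lipschitz on its range), but it then works with $L_1(P)$ covering numbers and invokes Theorem 19.13 of \cite{Vaart2000}, dismissing the final entropy bound with the remark that finiteness of $\mathcal{N}(\epsilon,\{\eta_\alpha(W)\},L_1(P))$ is ``obvious since $\eta_\alpha(W)$ is bounded.'' Boundedness alone does not give total boundedness, so the step you identify as the crux --- compactness of the embedding $\mathcal{H}(K)\hookrightarrow L_2[0,1]$ for a trace-class Mercer kernel, which makes the ball $\mathcal{H}^M$ totally bounded and yields finite $\epsilon$-nets for the index set --- is exactly the argument the paper leaves unstated, and supplying it is the main added value of your write-up. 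You also differ in the Glivenko--Cantelli criterion: you build genuine $L_1(P)$ brackets and apply the bracketing theorem (Theorem 19.4 of \cite{Vaart2000} or Theorem 2.4.1 of \cite{Vaart1996}), which sidesteps the measurability caveats attached to the uniform-entropy result the paper cites; and for $\mathcal{F}_2$ you treat the extra index $t$ explicitly by bracketing the monotone class $\{Y(t):0\le t\le\tau\}$ and taking products, whereas the paper's displayed bound compares $f$ and $f_1$ at the same $t$ and so only controls the $\alpha$-direction. Your flagged reliance on $H_0(\tau)<\infty$ (through (A2)--(A3) and boundedness of $\eta_{\alpha_0}$) to get $\inf_{t\le\tau,\alpha}S_0(t,\alpha)>0$ is the same hypothesis the paper uses tacitly, so nothing beyond the paper's standing assumptions is needed.
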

\begin{proof}
Given that $\eta_{\alpha}(W)=\theta'Z+\eta_{\beta}(X)$ is bounded almost
surely, it is easy to see that $\Delta m_{0}(t,W,\alpha)=\Delta[\eta_{\alpha}(W)-\log S_{0}(t,\alpha)]1_{\{0\leq t\leq\tau\}}$
and $Y(t)e^{\eta_{\alpha}(W)}$ are bounded. So following Theorem 19.13
in \cite{Vaart2000}, it is sufficient to show that $\mathcal{N}(\epsilon,\mathcal{F}_{i},L_{1}(P))<\infty$
for $i=1,2$.

For any $f=\Delta m_{0}(t,W,\alpha)$, and $f_{1}=\Delta m_{0}(t,W,\alpha_{1})$
in $\mathcal{F}_{1}$, 
\begin{eqnarray*}
||f-f_{1}||_{L_{1}(P)} & = & P|f-f_{1}|=P|\Delta m_{0}(\cdot,\alpha)-\Delta m_{0}(\cdot,\alpha_{1})|\\
 & = & P|\Delta\big[\eta_{\alpha}(W)-\eta_{\alpha_{1}}(W)-\log\frac{S_{0}(\cdot,\alpha)}{S_{0}(\cdot,\alpha_{1})}\big]1_{\{0\leq T\leq\tau\}}|\\
 & \leq & P|\eta_{\alpha}(W)-\eta_{\alpha_{1}}(W)|+P|[\log S_{0}(\cdot,\alpha)-\log S_{0}(\cdot,\alpha_{1})]1_{\{0\leq T\leq\tau\}}|\\
 & \lesssim & P|\eta_{\alpha}(W)-\eta_{\alpha_{1}}(W)|+\sup_{0\leq t\leq\tau}|S_{0}(t,\alpha)-S_{0}(t,\alpha_{1})|\\
 & \lesssim & P|\eta_{\alpha}(W)-\eta_{\alpha_{1}}(W)|+\sup_{0\leq t\leq\tau}|E(Y(t)e^{\eta_{\alpha}(W)}-Y(t)e^{\eta_{\alpha_{1}}(W)})|\\
 & \lesssim & P|\eta_{\alpha}(W)-\eta_{\alpha_{1}}(W)|.
\end{eqnarray*}
Therefore $\mathcal{N}(\epsilon,\mathcal{F}_{1},L_{1}(P))\asymp\mathcal{N}(\epsilon,\{\eta_{\alpha}(W):\alpha\in\mathbb{R}_{M}\times\mathcal{H}^{M}\},L_{1}(P))$.

Similarly for $f=Y(t)e^{\eta_{\alpha}(W)}$, and $f_{1}=Y(t)e^{\eta_{\alpha_{1}}(W)}:$
in $\mathcal{F}_{2}$, 
\begin{eqnarray*}
||f-f_{1}||_{L_{1}(P)} & = & P|f-f_{1}|\\
 & \leq & P|e^{\eta_{\alpha}(W)}-e^{\eta_{\alpha_{1}}(W)}|\\
 & \lesssim & P|\eta_{\alpha}(W)-\eta_{\alpha_{1}}(W)|,
\end{eqnarray*}
and $\mathcal{N}(\epsilon,\mathcal{F}_{2},L_{1}(P))\asymp\mathcal{N}(\epsilon,\{\eta_{\alpha}(W):\alpha\in\mathbb{R}_{M}\times\mathcal{H}^{M}\},L_{1}(P))$. 

So it suffices to show that $\mathcal{N}(\epsilon,\{\eta_{\alpha}(W):\alpha\in\mathbb{R}_{M}\times\mathcal{H}^{M}\},L_{1}(P))<\infty$,
\textcolor{black}{which is obvious since $\eta_{\alpha}(W)$ is bounded
almost surely for $\alpha\in\mathbb{R}_{M}\times\mathcal{H}^{M}$.}\end{proof}

\begin{lem}
\label{lem: conv I and III}Let $I$ and $III$ be defined as 
\begin{eqnarray*}
I & = & (P_{\Delta n}-P_{\Delta})(m_{0}(\cdot,\alpha)-m_{0}(\cdot,\alpha_{0})),\\
III & = & (P_{n}-P)\big\{ Y(t)\big[\exp(\eta_{\alpha}(W))-\exp(\eta_{\alpha_{0}}(W))\big]\big\},
\end{eqnarray*}
and \textup{$\mathcal{B}_{\delta}=\{\alpha\in\mathbb{R}^{p}\times\mathcal{H}(K):\delta/2\leq d(\alpha,\alpha_{0})\leq\delta\}$,
}then 
\begin{eqnarray*}
\sup_{\alpha\in\mathcal{B}_{\delta}}I & = & O(\delta^{\frac{2r-1}{2r}}n^{-1/2}), \\
\sup_{\alpha\in\mathcal{B}_{\delta}}III & = & O(\delta^{\frac{2r-1}{2r}}n^{-1/2}),\quad \text{   for for } t\in [0,\tau]. 
\end{eqnarray*}
\end{lem}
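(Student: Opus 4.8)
The plan is to view both $I$ and $III$ as $n^{-1/2}$ times empirical processes indexed by localized function classes and to bound their suprema by the bracketing maximal inequality of \citet[Lemma 3.4.2]{Vaart1996}, whose output is controlled by a bracketing entropy integral. Writing $\mathbb{G}_{n}=\sqrt n(P_{n}-P)$ and absorbing $\Delta$ into the integrand, we have $\sqrt n\,I=\mathbb{G}_{n}(g_{\alpha})$ with $g_{\alpha}=\Delta[m_{0}(\cdot,\alpha)-m_{0}(\cdot,\alpha_{0})]$, and $\sqrt n\,III=\mathbb{G}_{n}(h_{\alpha,t})$ with $h_{\alpha,t}=Y(t)[e^{\eta_{\alpha}(W)}-e^{\eta_{\alpha_{0}}(W)}]$. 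It therefore suffices to bound $\mathbb{E}^{*}$ of the supremum of $|\mathbb{G}_{n}|$ over the classes $\{g_{\alpha}:\alpha\in\mathcal{B}_{\delta}\}$ and $\{h_{\alpha,t}:\alpha\in\mathcal{B}_{\delta},\,t\in[0,\tau]\}$, and then divide by $\sqrt n$.

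First I would record two elementary facts. By (A4) together with $\|\beta\|_{K}\le 2M$, the index $\eta_{\alpha}(W)$ is uniformly bounded on $\mathbb{R}_{M}\times\mathcal{H}^{M}$, and by (A3)--(A5) the denominator $S_{0}(t,\cdot)$ is bounded away from $0$ and $\infty$ on $[0,\tau]$; hence $e^{x}$ and $\log x$ are Lipschitz on the ranges that occur, so pointwise $|g_{\alpha}|+|h_{\alpha,t}|\lesssim|\eta_{\alpha}(W)-\eta_{\alpha_{0}}(W)|+\sup_{t\in[0,\tau]}|S_{0}(t,\alpha)-S_{0}(t,\alpha_{0})|$, and in particular every element of the two classes has sup-norm bounded by a fixed constant $M'$. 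Second, (A3)(i) gives $\mathbb{E}[\eta_{\alpha}(W)-\eta_{\alpha_{0}}(W)]^{2}\le\varepsilon^{-1}\mathbb{E}[\Delta(\eta_{\alpha}(W)-\eta_{\alpha_{0}}(W))^{2}]=\varepsilon^{-1}d^{2}(\alpha,\alpha_{0})$, which combined with $|S_{0}(t,\alpha)-S_{0}(t,\alpha_{0})|\le\mathbb{E}|e^{\eta_{\alpha}(W)}-e^{\eta_{\alpha_{0}}(W)}|\lesssim(\mathbb{E}[\eta_{\alpha}(W)-\eta_{\alpha_{0}}(W)]^{2})^{1/2}$ shows $\|g_{\alpha}\|_{L_{2}(P)}\lesssim d(\alpha,\alpha_{0})$ and $\|h_{\alpha,t}\|_{L_{2}(P)}\lesssim d(\alpha,\alpha_{0})$; hence on $\mathcal{B}_{\delta}$ all class elements have $L_{2}(P)$-norm at most $C\delta$.

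The core is the entropy estimate $\log N_{[\,]}\big(\epsilon,\{\eta_{\beta-\beta_{0}}(X):\|\beta-\beta_{0}\|_{K}\le 2M\},L_{2}(P)\big)\lesssim\epsilon^{-1/r}$, which is where the eigenvalue hypothesis enters. Writing $\beta-\beta_{0}=L_{K^{1/2}}f$ one has $\|\eta_{\beta-\beta_{0}}\|_{C_{\Delta}}^{2}=\langle L_{K^{1/2}C_{\Delta}K^{1/2}}f,f\rangle=\sum_{k}s_{k}\langle f,\varphi_{k}\rangle^{2}$ with $s_{k}\asymp k^{-2r}$, so the class is the image of an $L_{2}$-ball under a map with singular values $\sqrt{s_{k}}\asymp k^{-r}$, an ellipsoid whose metric entropy is of order $\epsilon^{-1/r}$; moreover $\|\cdot\|_{C_{\Delta}}$ and $\|\cdot\|_{L_{2}(P)}$ are equivalent on $\{\eta_{\beta-\beta_{0}}\}$ because $\varepsilon\le\mathbb{E}(\Delta\mid W)\le 1$ (using (A1), which centers $\Delta X(\cdot)$), and uniform boundedness from (A4) lets one pass from covering to bracketing numbers in $L_{2}(P)$ at negligible cost. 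The $\theta$-component ranges over a fixed-dimensional bounded set and $\{Y(t):t\in[0,\tau]\}$ is a monotone, hence VC, class, so each contributes only $\log(1/\epsilon)$, absorbed into $\epsilon^{-1/r}$; by the Lipschitz domination above the same entropy bound holds for $\{g_{\alpha}\}$ and $\{h_{\alpha,t}\}$. Consequently the bracketing integral satisfies
\[
\widetilde J_{[\,]}(\delta)=\int_{0}^{\delta}\sqrt{1+\log N_{[\,]}(\epsilon,\cdot,L_{2}(P))}\,d\epsilon\ \lesssim\ \int_{0}^{\delta}\epsilon^{-1/(2r)}\,d\epsilon\ \asymp\ \delta^{\frac{2r-1}{2r}}.
\]

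Finally I would apply \citet[Lemma 3.4.2]{Vaart1996} with $\delta$ a bound on the $L_{2}(P)$-norm of the class elements and $M'$ the uniform sup-norm bound, giving
\[
\mathbb{E}^{*}\sup_{\alpha\in\mathcal{B}_{\delta}}\big|\mathbb{G}_{n}(\cdot)\big|\ \lesssim\ \widetilde J_{[\,]}(\delta)\Big(1+\frac{M'\,\widetilde J_{[\,]}(\delta)}{\delta^{2}\sqrt n}\Big)\ \lesssim\ \delta^{\frac{2r-1}{2r}},
\]
the last inequality because $\widetilde J_{[\,]}(\delta)/(\delta^{2}\sqrt n)=O(1)$ once $\delta\gtrsim n^{-r/(2r+1)}$ (the regime in which this lemma is invoked), and for the class $\{h_{\alpha,t}\}$ the supremum is automatically uniform over $t\in[0,\tau]$. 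Dividing by $\sqrt n$ and using Markov's inequality yields $\sup_{\alpha\in\mathcal{B}_{\delta}}|I|=O_{p}(\delta^{\frac{2r-1}{2r}}n^{-1/2})$ and $\sup_{\alpha\in\mathcal{B}_{\delta}}|III|=O_{p}(\delta^{\frac{2r-1}{2r}}n^{-1/2})$. I expect the main obstacle to be the entropy estimate of the third paragraph --- turning the abstract decay of the eigenvalues of $K^{1/2}C_{\Delta}K^{1/2}$ into a genuine bracketing-entropy bound for $\{\eta_{\beta-\beta_{0}}\}$ in the data geometry $L_{2}(P)$, and keeping the bookkeeping among $P$, $P_{\Delta}$, the weighted norm $\|\cdot\|_{C_{\Delta}}$, and the extra index $t$ straight; once that is in place, the Lipschitz reduction and the maximal inequality are routine.
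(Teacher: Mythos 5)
Your overall architecture matches the paper's: reduce $I$ and $III$ to empirical processes over localized classes, dominate differences by $|\eta_{\alpha}(W)-\eta_{\alpha_{0}}(W)|$ plus $\sup_{t}|S_{0}(t,\alpha)-S_{0}(t,\alpha_{0})|$ so that every class element has $L_{2}(P)$-norm $\lesssim d(\alpha,\alpha_{0})\le\delta$, convert the decay $s_{k}\asymp k^{-2r}$ of the eigenvalues of $K^{1/2}C_{\Delta}K^{1/2}$ into an entropy bound of order $\epsilon^{-1/r}$, and feed an entropy integral of order $\delta^{\frac{2r-1}{2r}}$ into a maximal inequality. The difference is the type of entropy: the paper computes covering numbers in $\|\cdot\|_{P,2}$ (bounding $N(\epsilon,\mathcal{M}_{\delta 1},\|\cdot\|_{P,2})$ by $N(\epsilon/C,\mathcal{B}_{\delta},d)$ and handling the $\beta$-part by truncating the ellipsoid to its first $\asymp\epsilon^{-1/r}$ coordinates) and then invokes the covering-number maximal inequality, Theorem 2.14.1 of \cite{Vaart1996}; you instead invoke the bracketing maximal inequality, Lemma 3.4.2 of \cite{Vaart1996}, and therefore need bracketing numbers $N_{[\,]}(\epsilon,\cdot,L_{2}(P))$ of the same order.

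That substitution is where your argument has a genuine gap. The step ``uniform boundedness from (A4) lets one pass from covering to bracketing numbers in $L_{2}(P)$ at negligible cost'' is not valid: $L_{2}(P)$-covering numbers (equivalently, covering numbers of the ellipsoid in the weak metric $d$) do not control $L_{2}(P)$-bracketing numbers, and a sup-norm bound on the class does not bridge the two; that passage works only if one has covering in the sup-norm, or a pointwise-in-the-data Lipschitz bound with respect to the same metric $d$, neither of which holds here (pointwise one only has $|\eta_{\beta-\beta'}(X)|\le\|X\|_{2}\,\|\beta-\beta'\|_{L_{2}}$, and $\|\cdot\|_{L_{2}}$ is not controlled by $d$ --- that is precisely why the paper works with $\|\cdot\|_{C_{\Delta}}$). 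Concretely, a bracket must dominate pointwise the envelope of a $d$-ball of radius $\epsilon$ intersected with $\{\|\beta-\beta_{0}\|_{K}\le 2M\}$; splitting coordinates at $k_{\epsilon}\asymp\epsilon^{-1/r}$ shows this envelope has $L_{2}(P)$-norm of order $\epsilon^{1-\frac{1}{2r}}\gg\epsilon$, so the natural bracket constructions give $\log N_{[\,]}(\epsilon)\lesssim\epsilon^{-2/(2r-1)}$ rather than $\epsilon^{-1/r}$, whose entropy integral does not produce $\delta^{\frac{2r-1}{2r}}$ (and diverges unless $r>1$). As written, therefore, the key entropy estimate is asserted but not established, and the bracketing route does not obviously deliver the lemma; the repair is to stay with covering numbers and apply Theorem 2.14.1 as the paper does. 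Your other deviations are harmless: the extra factor $1+M'\widetilde J_{[\,]}(\delta)/(\delta^{2}\sqrt n)$ with the restriction $\delta\gtrsim n^{-r/(2r+1)}$ would be acceptable for the lemma's use inside the rate argument, and your treatment of the $\theta$-component, of the index $t$, and of the $L_{2}(P)$-norm bound on the classes agrees with the paper.
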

\begin{proof}
Consider 
\begin{eqnarray*}
\mathcal{M}_{\delta1} & = & \{\Delta[m_{0}(t,W,\alpha)-m_{0}(t,W,\alpha_{0})]1_{\{0\leq t\leq\tau\}},\ \alpha\in\mathcal{B}_{\delta}\},\\
\mathcal{M}_{\delta2} & = & \{Y(t)\big[\exp(\eta_{\alpha}(W))-\exp(\eta_{\alpha}(W))\big],\ \alpha\in\mathcal{B}_{\delta},\ t\in[0,\tau]\},
\end{eqnarray*}
with $L_{2}(P)$ norm, i.e for any $f\in\mathcal{M}_{\delta1}$, $||f||_{P,2}=(\int f^{2}dP)^{1/2}=\big(E^{t,W}f^{2}(t,W,\alpha)\big)^{1/2}$, 
and for any $f\in\mathcal{M}_{\delta2}$, $||f||_{P,2}=(\int f^{2}dP)^{1/2}=\big(E^{T,W}f^{2}(T,W,t,\alpha)\big)^{1/2}$
. Then it suffices to show that

\[
\big|\big|||\mathbb{G}_{n}||_{\mathcal{M}_{\delta1}}\big|\big|_{P,2}=O(\delta^{\frac{2r-1}{2r}}),
\]

\[
\big|\big|||\mathbb{G}_{n}||_{\mathcal{M}_{\delta2}}\big|\big|_{P,2}=O(\delta^{\frac{2r-1}{2r}}),
\]
where $\mathbb{G}_{n}=\sqrt{n}(P_{n}-P)$ and $||\mathbb{G}_{n}||_{\mathcal{M}_{\delta i}}=\sup_{f\in\mathcal{M}_{\delta i}}|\mathbb{G}_{n}f|,\ i=1,2$. 

We first show that 
\[
\log\mathcal{N}(\epsilon,\mathcal{M}_{\delta1},||\cdot||_{p,2})\leq O((p+\epsilon^{-1/r})\log(\frac{\delta}{\epsilon})),
\]
and 
\[
\log\mathcal{N}(\epsilon,\mathcal{M}_{\delta2}(t),||\cdot||_{p,2})\leq O((p+\epsilon^{-1/r})\log(\frac{\delta}{\epsilon})),\ \ \text{for all }t\in[0,\tau].
\]

Suppose there exist functions $f_{1},\ldots,f_{m}\in\mathcal{M}_{\delta1}$,
such that 
\[
\min_{1\leq i\leq m}||f-f_{i}||_{p,2}<\epsilon, \ \ \ \text{for all }f\in\mathcal{M}_{\delta1}.
\]
This is equivalent to the existence of $\alpha_{1},\ldots\alpha_{m}\in\mathcal{B}_{\delta}$,\ 
s.t 
\[
\min_{1\leq i\leq m}\big|\big|\Delta[m_{0}(\cdot,\alpha)-m_{0}(\cdot,\alpha_{i})]1_{\{0\leq T\leq\tau\}}\big|\big|_{p,2}<\epsilon,\ \ \ \text{for all }\alpha\in\mathcal{B}_{\delta}.
\]
Observe that
\begin{eqnarray*}
 &  & \{\Delta[m_{0}(t,W,\alpha)-m_{0}(t,W,\alpha_{i})]1_{\{0\leq t\leq\tau\}}\}^{2}\\
 & = & \Delta\big[\eta_{\alpha}(W)-\eta_{\alpha_{i}}(W)-\log\frac{S_{0}(t,\alpha)}{S_{0}(t,\alpha_{i})}\big]{}^{2}1_{\{0\leq t\leq\tau\}}\\
 & \leq & 2\Delta\{[\eta_{\alpha}(W)-\eta_{\alpha_{i}}(W)]^{2}+[\log\frac{S_{0}(t,\alpha)}{S_{0}(t,\alpha_{i})}]^{2}\}1_{\{0\leq t\leq\tau\}}\\
 & \leq & 2\Delta\{[\eta_{\alpha}(W)-\eta_{\alpha_{i}}(W)]^{2}+c[S_{0}(t,\alpha)-S_{0}(t,\alpha_{i})]^{2}\}1_{\{0\leq t\leq\tau\}}\\
 & = & 2\Delta\big\{[\eta_{\alpha}(W)-\eta_{\alpha_{i}}(W)]^{2}+c[\mathbb{E}Y(t)\{\exp(\eta_{\alpha}(W))-\exp(\eta_{\alpha_{i}}(\alpha))\}]^{2}\big\}1_{\{0\leq t\leq\tau\}}\\
 & \leq & 2\Delta\big\{[\eta_{\alpha}(W)-\eta_{\alpha_{i}}(W)]^{2}+c\mathbb{E}Y^{2}(t)\mathbb{E}\big[\exp(\eta_{\alpha}(W))-\exp(\eta_{\alpha i}(W))\big]{}^{2}\big\}1_{\{0\leq t\leq\tau\}}\\
 & \leq & 2\Delta\big\{[\eta_{\alpha}(W)-\eta_{\alpha_{i}}(W)]^{2}+c_{1}\mathbb{E}Y(t)\mathbb{E}\big[\eta_{\alpha}(W)-\eta_{\alpha_{i}}(W)\big]{}^{2}\big\}1_{\{0\leq t\leq\tau\}}.
\end{eqnarray*}
Then
\begin{align*}
 & \big|\big|\Delta[m_{0}(\cdot,\alpha)-m_{0}(\cdot,\alpha_{i})]1_{\{0\leq T\leq\tau\}}\big|\big|_{p,2}^{2}\\
 & =P\{\Delta[m_{0}(\cdot,\alpha)-m_{0}(\cdot,\alpha_{i})]1_{\{0\leq T\leq\tau\}}\}^{2}\\
 & \lesssim d^{2}(\alpha,\alpha_{i}).
\end{align*}
Therefore, the covering number for $\mathcal{M}_{\delta1}$ is of the
same order as that for $\mathcal{B}_{\delta}$. To be more specific,
\begin{equation}
N(\epsilon,\mathcal{M}_{\delta1},||\cdot||_{p,2})\leq N(\epsilon/C,\mathcal{B}_{\delta},d).\label{eq:covering number 1}
\end{equation}
In addition, we know that 
\[
d^{2}(\alpha,\alpha_{i})\leq2\mathbb{E}\Delta[(\theta-\theta_{i})'Z]^{2}+2d^{2}(\beta,\beta_{i}),
\]
and it follows that $N(\epsilon/C,\mathcal{B}_{\delta},d)\leq N(\epsilon/2C,\mathcal{B}_{\delta}^{\theta},d_{\theta})\cdot N(\epsilon/2C,\mathcal{B}_{\delta}^{\beta},d_{\beta})$,
where $d_{\theta}^{2}(\theta_{1},\theta_{2})=\mathbb{E}\Delta[(\theta_{1}-\theta_{2})'Z]^{2}$
and $d_{\beta}(\beta_{1},\beta_{2})=d(\beta_{1},\beta_{2})$.  Here    $\mathcal{B}_{\delta}^{\theta}$ 
and $\mathcal{B}_{\delta}^{\beta}$  are defined as 
\[
\mathcal{B}_{\delta}^{\theta}=\{\theta\in\mathbb{R}^{p},d_{\theta}(\theta,\theta_{0}))\leq\delta\},\ \ \ \ \mathcal{B}_{\delta}^{\beta}=\{\beta\in\mathcal{H}(K),d_{\beta}(\beta,\beta_{0})\leq\delta\}, 
\]
with \ $\mathcal{B}_{\delta}^{\theta}\times\mathcal{B}_{\delta}^{\beta}\supset\mathcal{B}_{\delta}$.

It is easy to see that $N(\epsilon/2C,\mathcal{B}_{\delta}^{\theta},d_{\theta})=O((\frac{\delta}{\epsilon})^{p})$.
For $N(\epsilon/2C,\mathcal{B}_{\delta}^{\beta},d_{\beta})$, noticing
that $\mathcal{H}(K)=L_{K^{1/2}}({L}_{2})=\{\sum_{k}b_{k}L_{K^{1/2}}\phi_{k}:(b_{k})\in l_{2}\}$,
then for any $\beta=\sum_{k\geq1}b_{k}L_{K^{1/2}}\phi_{k}\in\mathcal{H}(K)$,
we have

\begin{eqnarray*}
d^{2}(\text{\ensuremath{\beta}},\beta_{0}) & = & \mathbb{E}\Delta\eta_{\beta-\beta_{0}}^{2}(X)\\
 & = & <\beta-\beta_{0},L_{C_{\Delta}}\beta-\beta_{0}>_{{L}_{2}}\\
 & = & <\sum_{k\geq1}(b{}_{k}-b_{k}^{0})L_{K^{1/2}}\phi_{k},\sum_{k\geq1}(b_{k}-b_{k}^{0})L_{C_{\Delta}K^{1/2}}\phi_{k}>_{{L}_{2}}\\
 & = & <\sum_{k\geq1}(b_{k}-b_{k}^{0})\phi_{k},\sum_{k\geq1}(b_{k}-b_{k}^{0})L_{K^{1/2}C_{\Delta}K^{1/2}}\phi_{k}>_{{L}_{2}}\\
 & = & <\sum_{k\geq1}(b_{k}-b_{k}^{0})\phi_{k},\sum_{k\geq1}(b_{k}-b_{k}^{0})s_{k}\phi_{k}>_{{L}_{2}}\\
 & = & \sum_{k\geq1}s_{k}(b_{k}-b_{k}^{0})^{2}.
\end{eqnarray*}
If we further let $\gamma_{k}=\sqrt{s_{k}}b_{k}$ , then $d(\beta,\beta_{0})=\sum_{k\geq1}(\gamma_{k}-\gamma_{k}^{0})^{2}$ and
$\mathcal{B}_{\delta}^{\beta}=\{\beta\in\mathcal{H}(K):d(\beta,\beta_{0}))\leq\delta\}$
can be rewritten as 
\[
\mathcal{B}_{\delta}=\{\beta=\sum_{k\geq1}s_{k}^{-1/2}\gamma_{k}L_{K^{1/2}}\phi_{k}:\,(s_{k}^{-1/2}\gamma_{k})\in l_{2},\ \sum_{k\geq1}(\gamma_{k}-\gamma_{k}^{0})^{2}\leq\delta^{2}\}.
\]
Let $M=(\frac{\epsilon}{4C})^{-1/r}$, and 
\[
\mathcal{B}_{\delta}^{\beta*}=\{\beta=\sum_{k=1}^{M}s_{k}^{-1/2}\gamma_{k}L_{K^{1/2}}\phi_{k}:\,(s_{k}^{-1/2}\gamma_{k})_{k=1}^{M}\in l_{2},\ \sum_{k=1}^{M}(\gamma_{k}-\gamma_{k}^{0})^{2}\leq\delta^{2}\}.
\]
For any $\beta=\sum_{k\geq1}s_{k}^{-1/2}\gamma_{k}L_{K^{1/2}}\phi_{k}\in\mathcal{B}_{\delta}$,
let $\beta^{*}=\sum_{k=1}^{M}s_{k}^{-1/2}\gamma_{k}L_{K^{1/2}}\phi_{k}\in\mathcal{B}_{\delta}^{*}$.
It's easy to see that 
\begin{eqnarray*}
d^{2}(\beta,\beta^{*}) & = & \sum_{k>M}\gamma_{k}^{2}=\sum_{k>M}s_{k}b_{k}^{2}\leq s_{M}\sum_{k>M}b_{k}^{2}\asymp M^{-2r}=(\frac{\epsilon}{4C})^{2},
\end{eqnarray*}
where $\sum_{k>M}b_{k}^{2}$ is some small number when $M$ is large, 
since $(b_{k})\in l_{2}$ . So if we can find a set $\{\beta_{i}^{*}\}_{i=1}^{m}\subset\mathcal{B}_{\delta}^{*}$ 
satisfying  
\[
\min_{1\leq k\leq m}d(\beta^{*},\beta_{i}^{*})\leq\epsilon/4C\ \ \text{for all }\beta^{*}\in\mathcal{B}_{\delta}^{*},
\]
then it also guarantees that 
\[
\min_{1\leq k\leq m}d(\beta,\beta_{i}^{*})\leq\min_{1\leq k\leq m}[d(\beta,\beta^{*})+d(\beta^{*},\beta_{i}^{*})]\lesssim\epsilon/2C\ \ \text{for all }\beta\in\mathcal{B}_{\delta},
\]
i.e. 
\begin{equation}
N(\epsilon/2C,\mathcal{B}_{\delta}^{\beta},d_{\beta})\lesssim N(\epsilon/4C,\mathcal{B}_{\delta}^{*},d).\label{eq:covering number 2}
\end{equation}
We know that $N(\epsilon/4C,\mathcal{B}_{\delta}^{*},d)\leq(\frac{4\delta+\epsilon/4C}{\epsilon/4C})^{M}$
is the covering number for a ball in $\mathbb{R}^{M}$. Therefore
combining with  $(\ref{eq:covering number 1})$,  we have 
\begin{align*}
\log\mathcal{N}(\epsilon,\mathcal{M}_{\delta1},||\cdot||_{p,2}) & \leq\log N(\epsilon/C,\mathcal{B}_{\delta},d)\\
 & \leq\log N(\epsilon/2C,\mathcal{B}_{\delta}^{\theta},d_{\theta})+\log N(\epsilon/2C,\mathcal{B}_{\delta}^{\beta},d_{\beta})\\
 & \leq(\frac{\epsilon}{4C})^{-1/r}\log(\frac{4\delta+\epsilon/4C}{\epsilon/4C})+\log O((\frac{\delta}{\epsilon})^{p})\\
 & =O((p+\epsilon^{-1/r})\log(\frac{\delta}{\epsilon})).
\end{align*}

Similarly,
\begin{align*}
 & \big|\big|Y(t)\big[\exp(\eta_{\alpha_{1}}(W))-\exp(\eta_{\alpha_{2}}(W))\big]\big|\big|_{p,2}^{2}\\
 & = P^{TW}\{Y(t)\big[\exp(\eta_{\alpha_{1}}(W))-\exp(\eta_{\alpha_{2}}(W))\big]\}^{2} \\
 & \leq Cd^{2}(\alpha_{1},\alpha_{2}),\quad\text{for all }t\in[0,\tau].
\end{align*}

Following the same procedure, we have 
\[
\log\mathcal{N}(\epsilon,\mathcal{M}_{\delta2},||\cdot||_{p,2})\leq O((p+\epsilon^{-1/r})\log(\frac{\delta}{\epsilon})).
\]

Now we are able to calculate $J(1,\mathcal{M}_{\delta1})$, 
\begin{eqnarray*}
J(1,\mathcal{M}_{\delta1}) & = & \int_{0}^{1}\sqrt{1+\log\mathcal{N}(\epsilon,\mathcal{M}_{\delta1},||\cdot||_{p,2})}d\epsilon\\
 & = & \int_{0}^{1}\sqrt{1+(p+\epsilon^{-1/r})\log(\frac{\delta}{\epsilon})}d\epsilon\\
 & \asymp & \int_{0}^{1}\sqrt{\epsilon^{-1/r}\log(\frac{\delta}{\epsilon})}d\epsilon,\\
\mbox{and \ for }  \  u=\sqrt{\log(\frac{\delta}{\epsilon})},  \   & \asymp & \int_{\sqrt{\log\delta}}^{\infty}(\frac{\delta}{e^{u^{2}}})^{-\frac{1}{2r}}u^{2}\cdot2\delta e^{-u^{2}}du\\
 & = & O(\delta{}^{\frac{2r-1}{2r}})\int_{\sqrt{\log\delta}}^{\infty}(e^{-u^{2}})^{(1-\frac{1}{2r})}u^{2}\cdot du\\
 & = & O(\delta{}^{\frac{2r-1}{2r}}), \ \ \ \ \ \mbox{for} \  r>\frac{1}{2}.
\end{eqnarray*}
The last inequality follows from the fact that the integral above can be seen
as the second order moment of a standard normal times some constant, hence it is a constant not depending on $\delta$. Since
\textcolor{black}{ functions in $\mathcal{M}_{\delta1}$
are bounded} and $J(1,\mathcal{M}_{\delta1})=O(\delta{}^{\frac{2r-1}{2r}})$,
 Theorem 2.14.1 in \cite{Vaart1996} implies  
\[
\big|\big|||\mathbb{G}_{n}||_{\mathcal{M}_{\delta1}}\big|\big|_{P,2}\lesssim J(1,\mathcal{M}_{\delta1})=O(\delta{}^{\frac{2r-1}{2r}}).
\]

Similarly we have 
\[
\big|\big|||\mathbb{G}_{n}||_{\mathcal{M}_{\delta2}(t)}\big|\big|_{P,2}=O(\delta^{\frac{2r-1}{2r}}),\ \ \text{for all }t\in[0,\tau].
\]
\end{proof}

\begin{lem}
$\ $\textup{\label{lm: normal 2}$\ $}
\begin{equation}
P_{\Delta n}\{s_{n}(\cdot,\hat{\alpha})[Z]-s_{n}(\cdot,\alpha_{0})[Z]\}-P_{\Delta}\{s(\cdot,\hat{\alpha})[Z]-s(\cdot,\alpha_{0})[Z]\}=o_{p}(n^{-1/2})\label{eq: AN th lem11},
\end{equation}
\begin{equation}
P_{\Delta n}\{s_{n}(\cdot,\hat{\alpha})[g^{*}]-s_{n}(\cdot,\alpha_{0})[g^{*}]\}-P_{\Delta}\{s(\cdot,\hat{\alpha})[g^{*}]-s(\cdot,\alpha_{0})[g^{*}]\}=o_{p}(n^{-1/2}).  \label{eq: AN th lem12}
\end{equation}
\end{lem}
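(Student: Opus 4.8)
The plan is to prove both displays by a single stochastic-equicontinuity argument, using that $Z$ and $\eta_{g^{*}}(X)$ are bounded almost surely under (A4). Write $\psi_{n}(\cdot,\alpha)$ for either $s_{n}(\cdot,\alpha)[Z]$ or $s_{n}(\cdot,\alpha)[g^{*}]$ and $\psi(\cdot,\alpha)$ for its population analogue. The starting identity is
\[
P_{\Delta n}\{\psi_{n}(\cdot,\hat{\alpha})-\psi_{n}(\cdot,\alpha_{0})\}-P_{\Delta}\{\psi(\cdot,\hat{\alpha})-\psi(\cdot,\alpha_{0})\}=\mathrm{(I)}+\mathrm{(II)},
\]
where $\mathrm{(I)}=(P_{\Delta n}-P_{\Delta})\{\psi(\cdot,\hat{\alpha})-\psi(\cdot,\alpha_{0})\}$ is an empirical-process increment about the population version and $\mathrm{(II)}=P_{\Delta n}\{(\psi_{n}-\psi)(\cdot,\hat{\alpha})-(\psi_{n}-\psi)(\cdot,\alpha_{0})\}$ collects the error from replacing $S_{0n},S_{1n}$ by $S_{0},S_{1}$. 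All of the analysis would be carried out on the intersection of the events $\{d(\hat{\alpha},\alpha_{0})\le\delta_{n}\}$ with $\delta_{n}\asymp n^{-r/(2r+1)}\to 0$ and $\hat{\alpha}\in\mathbb{R}_{M}\times\mathcal{H}^{M}$ (probability $\to 1$ by Theorem \ref{thm: convergence rate}(ii)), and $\{\inf_{t\le\tau}S_{0n}(t,\alpha_{0})>c,\ \inf_{t\le\tau}S_{0}(t,\alpha_{0})>c\}$ for some $c>0$ (probability $\to 1$ by (A3)).

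For term $\mathrm{(I)}$, I would show that the class $\{\psi(\cdot,\alpha)-\psi(\cdot,\alpha_{0}):\alpha\in\mathbb{R}_{M}\times\mathcal{H}^{M}\}$ is $P$-Donsker with $\|\psi(\cdot,\alpha)-\psi(\cdot,\alpha_{0})\|_{P,2}\lesssim d(\alpha,\alpha_{0})$. The entropy estimate $\log\mathcal{N}(\epsilon,\cdot,\|\cdot\|_{P,2})\lesssim (p+\epsilon^{-1/r})\log(1/\epsilon)$ is obtained exactly as in the proof of Lemma \ref{lem: conv I and III}: on the good event $S_{0}(t,\alpha)$ is bounded below and the integrands are bounded, so $\alpha\mapsto S_{1}(t,\alpha)[\,\cdot\,]/S_{0}(t,\alpha)$ is Lipschitz with respect to $d$ uniformly in $t\le\tau$, whence a $d$-net on the ball induces a $\|\cdot\|_{P,2}$-net on the class; the uniform entropy integral is finite for $r>1/2$, so the class is Donsker. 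Combined with $\|\psi(\cdot,\hat{\alpha})-\psi(\cdot,\alpha_{0})\|_{P,2}\lesssim d(\hat{\alpha},\alpha_{0})\stackrel{p}{\rightarrow}0$, the asymptotic equicontinuity of an empirical process over a $P$-Donsker class (see \cite{Vaart1996}) yields $\sqrt{n}\,\mathrm{(I)}=o_{p}(1)$, i.e. $\mathrm{(I)}=o_{p}(n^{-1/2})$.

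For term $\mathrm{(II)}$, note that $(\psi_{n}-\psi)(\cdot,\alpha)=-\bigl(S_{1n}(t,\alpha)[\,\cdot\,]/S_{0n}(t,\alpha)-S_{1}(t,\alpha)[\,\cdot\,]/S_{0}(t,\alpha)\bigr)$ depends on the observation only through $T$, so $|\mathrm{(II)}|\le\sup_{t\le\tau}\bigl|(\psi_{n}-\psi)(t,\hat{\alpha})-(\psi_{n}-\psi)(t,\alpha_{0})\bigr|$. Writing
\[
\frac{S_{1n}}{S_{0n}}-\frac{S_{1}}{S_{0}}=\frac{S_{1n}-S_{1}}{S_{0}}-\frac{S_{1n}}{S_{0n}}\cdot\frac{S_{0n}-S_{0}}{S_{0}}
\]
and expanding the increment between $\hat{\alpha}$ and $\alpha_{0}$, every resulting term is a product in which at least one factor is either an increment of the form $(P_{n}-P)\{Y(t)[\,\cdot\,](e^{\eta_{\alpha}(W)}-e^{\eta_{\alpha_{0}}(W)})\}$, which by the maximal inequality of Lemma \ref{lem: conv I and III} (the same entropy bound together with Theorem 2.14.1 of \cite{Vaart1996}) is $O_{p}(\delta_{n}^{(2r-1)/(2r)}n^{-1/2})$ uniformly in $t\le\tau$ and in $d(\alpha,\alpha_{0})\le\delta_{n}$, or a Lipschitz increment of size $O_{p}(\delta_{n})$; the remaining factors are either bounded on the good event, or uniform $O_{p}(n^{-1/2})$ terms such as $(S_{0n}-S_{0})(\cdot,\alpha_{0})$ and $(S_{1n}-S_{1})(\cdot,\alpha_{0})$. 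Hence every term is $o_{p}(n^{-1/2})$ and, since $P_{\Delta n}$ has total mass at most one, $\mathrm{(II)}=o_{p}(n^{-1/2})$. The main obstacle is exactly this bookkeeping in $\mathrm{(II)}$: one must expand both ratios carefully so that each cross term retains a genuinely vanishing factor ($\delta_{n}^{(2r-1)/(2r)}$ or $o_{p}(1)$) in addition to the $n^{-1/2}$, because a product of two uniform $O_{p}(n^{-1/2})$ terms is only $O_{p}(n^{-1/2})$ after $P_{\Delta n}$ is applied, which would not suffice.
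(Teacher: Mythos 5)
Your proof is correct, and its skeleton coincides with the paper's: the paper bounds the left-hand side by exactly your two terms, $I_{1n}=(P_{\Delta n}-P_{\Delta})\{s(\cdot,\hat{\alpha})[g^{*}]-s(\cdot,\alpha_{0})[g^{*}]\}$ and $I_{2n}=P_{\Delta n}\{s_{n}(\cdot,\hat{\alpha})[g^{*}]-s_{n}(\cdot,\alpha_{0})[g^{*}]-s(\cdot,\hat{\alpha})[g^{*}]+s(\cdot,\alpha_{0})[g^{*}]\}$, and likewise treats only the $g^{*}$ case, the $Z$ case being identical. The differences are in how the two pieces are finished. For your term $\mathrm{(I)}$ the paper does not work with the ratio class directly; it linearizes $S_{1}/S_{0}$ into separate increments of $S_{1}(\cdot,\alpha)[g^{*}]$ and $S_{0}(\cdot,\alpha)$ and applies the rate-type maximal inequality of Lemma~\ref{lem: conv I and III} to each, obtaining the quantitative bound $O(d^{\frac{2r-1}{2r}}(\hat{\alpha},\alpha_{0})n^{-1/2})$; your Donsker-plus-asymptotic-equicontinuity argument on the ratio class uses the same entropy estimates (Lipschitzness in $d$, boundedness of $S_{0}$ away from zero on $[0,\tau]$, the $r>1/2$ restriction) and yields the weaker but sufficient $o_{p}(n^{-1/2})$. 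For your term $\mathrm{(II)}$ the paper simply invokes Lemma A.7 of \cite{Huang1999} for $\sup_{t}|II_{2n}(t)|=o_{p}(n^{-1/2})$, whereas you carry out the ratio expansion explicitly; your bookkeeping is sound and is essentially the content of that cited lemma, so your write-up is the more self-contained of the two. One small caveat: your closing remark that ``a product of two uniform $O_{p}(n^{-1/2})$ terms is only $O_{p}(n^{-1/2})$ after $P_{\Delta n}$ is applied'' is not right as stated --- such a product is $O_{p}(n^{-1})=o_{p}(n^{-1/2})$ and would be harmless; the genuine danger, which your expansion does correctly guard against, is a lone $O_{p}(n^{-1/2})$ factor (e.g.\ $(S_{0n}-S_{0})(\cdot,\alpha_{0})$) appearing without a companion factor that vanishes, since that would give only $O_{p}(n^{-1/2})$ rather than $o_{p}(n^{-1/2})$.
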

\begin{proof}
We only prove ($\ref{eq: AN th lem12}$) as  the proof of ($\ref{eq: AN th lem11}$)
is similar. The right-hand side of ($\ref{eq: AN th lem12}$) can
be bounded by the sum of the following two terms
\[
I_{1n}=\big|(P_{\Delta n}-P_{\Delta})\{s(\cdot,\hat{\alpha})[g^{*}]-s(\cdot,\alpha_{0})[g^{*}]\}\big|,
\]
and 
\[
I_{2n}=\big|P_{\Delta n}\{s_{n}(\cdot,\hat{\alpha})[g^{*}]-s_{n}(\cdot,\alpha_{0})[g^{*}]-s(\cdot,\hat{\alpha})[g^{*}]+s(\cdot,\alpha_{0})[g^{*}]\}\big|.
\]
We are going to show that $I_{1n}=o_{p}(n^{-\frac{1}{2}})$ and $I_{2n}=o_{p}(n^{-\frac{1}{2}})$.

For the first term, since $S_{0}(\cdot,\hat{\alpha})$, $S_{0}(\cdot,\alpha_{0})$
and $S_{1}(t,\alpha_{0})[g^{*}]$ are bounded almost surely,  we have 
\begin{align*}
I_{1n} & =\big|(P_{n}-P)\{\Delta[\frac{S_{1}(\cdot,\hat{\alpha})[g^{*}]}{S_{0}(\cdot,\hat{\alpha})}-\frac{S_{1}(\cdot,\alpha_{0})[g^{*}]}{S_{0}(\cdot,\alpha_{0})}]\}\big|\\
 & =\big|(P_{n}-P)\{\Delta[S_{0}(\cdot,\hat{\alpha})]^{-1}\big[S_{1}(\cdot,\hat{\alpha})[g^{*}]-S_{1}(\cdot,\alpha_{0})[g^{*}]\big]\\
 & \ \ \ \ \ \ \ \ \ +\Delta[S_{0}(\cdot,\hat{\alpha})S_{0}(\cdot,\alpha_{0})]^{-1}S_{1}(\cdot,\alpha_{0})[g^{*}][S_{0}(\cdot,\hat{\alpha})-S_{0}(\cdot,\alpha_{0})]\}\big|\\
 & \lesssim\big|(P_{n}-P)\{\Delta\big[S_{1}(\cdot,\hat{\alpha})[g^{*}]-S_{1}(\cdot,\alpha_{0})[g^{*}]\big]\}\big|\\
 & \ \ \ \ \ \ \ \ \ +\big|(P_{n}-P)\{\Delta[S_{0}(\cdot,\hat{\alpha})-S_{0}(\cdot,\alpha_{0})]\}\big|.
\end{align*}
Considering $\mathcal{M}_{\delta3}=\big\{\Delta\big[S_{1}(t,\alpha)[g^{*}]-S_{1}(t,\alpha_{0})[g^{*}]\big],\ \alpha\in\mathcal{B}_{\delta}\big\}$,
for any $f_{1},f_{2}\in\mathcal{M}_{\delta1}$, we have
\begin{align*}
||f_{1}-f_{2}||_{p,2} & =\mathbb{E}\Delta^{2}\{S_{1}(\cdot,\alpha_{1})[g^{*}]-S_{1}(\cdot,\alpha_{2})[g^{*}]\}^{2}\\
 & =\mathbb{E}^{\Delta,t,X}\Delta\{\mathbb{E}Y(t)(e^{\eta_{\alpha_{1}}(W)}-e^{\eta_{\alpha_{2}}(W)})\eta_{g^{*}}(X)\}^{2}\\
 & \lesssim d^{2}(\alpha_{1},\alpha_{2}).
\end{align*}
Following the same proof as Lemma $\ref{lem: conv I and III}$, we
can show that 
\[
\big|(P_{n}-P)\{\Delta\big[S_{1}(\cdot,\hat{\alpha})[g^{*}]-S_{1}(\cdot,\alpha_{0})[g^{*}]\big]\}\big|=O(d^{\frac{2r-1}{2r}}(\hat{\alpha},\alpha_{0})n^{-\frac{1}{2}})=o_{p}(n^{-\frac{1}{2}}),
\]
given that $d(\hat{\alpha},\alpha_{0})=O_{p}(n^{-\frac{2r}{2r+1}})$. Similarly,
\[
\big|(P_{n}-P)\{\Delta[S_{0}(\cdot,\hat{\alpha})-S_{0}(\cdot,\alpha_{0})]\}\big|=o_{p}(n^{-\frac{1}{2}}),
\]
and altogether we have shown that $I_{1n}=o_{p}(n^{-\frac{1}{2}})$.

For the second term,    the quantity  inside the empirical measure $P_{\Delta n}$
is 
\[
II_{2n}(t):=\frac{S_{1n}(t,\hat{\alpha})[g^{*}]}{S_{0n}(t,\hat{\alpha})}-\frac{S_{1n}(t,\alpha_{0})[g^{*}]}{S_{0n}(t,\alpha_{0})}-\frac{S_{1}(t,\hat{\alpha})[g^{*}]}{S_{0}(t,\hat{\alpha})}+\frac{S_{1}(t,\alpha_{0})[g^{*}]}{S_{0}(t,\alpha_{0})}.
\]
It follows from the same proof as in Lemma A.7 of  \cite{Huang1999} that 
\[
\sup_{0\leq t\leq1}|II_{2n}(t)|=o_{p}(n^{-\frac{1}{2}}).
\]
\end{proof}

\begin{lem}
$\ $\textup{\label{lm: normal 3}$\ $}
\begin{flalign*}
 & P_{\Delta}\{s(\cdot,\hat{\alpha})[Z-g^{*}]-s(\cdot,\alpha_{0})[Z-g^{*}]\}\\
 & =P_{\Delta}\{s(\cdot,\alpha_{0})[Z-g^{*}]\}^{\otimes2}(\hat{\theta}-\theta_{0})+O(||\hat{\theta}-\theta_{0}||^{2}+||\hat{\beta}-\beta)||_{C_{\Delta}}^{2})\\
 & =P_{\Delta}\{s(\cdot,\alpha_{0})[Z-g^{*}]\}^{\otimes2}(\hat{\theta}-\theta_{0})+o_{p}(n^{-1/2}).
\end{flalign*}
\end{lem}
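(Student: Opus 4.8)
The plan is to regard the left-hand side as the increment $\varphi(\hat\alpha)-\varphi(\alpha_0)$ of the functional $\varphi(\alpha):=P_\Delta\, s(\cdot,\alpha)[Z-g^*]$ and to Taylor-expand it to second order along the segment $\alpha_s:=\alpha_0+s(\hat\alpha-\alpha_0)$, $s\in[0,1]$. Writing $\tilde\alpha:=\hat\alpha-\alpha_0=(\hat\theta-\theta_0,\hat\beta-\beta_0)$ and observing that, since $g^*$ is fixed, $\varphi(\alpha)=\mathbb{E}[\Delta(Z-\eta_{g^*}(X))]-\mathbb{E}\big[\Delta\,S_1(T,\alpha)[Z-g^*]/S_0(T,\alpha)\big]$ whose first term is $\alpha$-free, set $\psi(s):=\varphi(\alpha_s)$ and use $\psi(1)-\psi(0)=\psi'(0)+\int_0^1(1-s)\psi''(s)\,ds$. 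All of this is carried out on the event $\{\|\hat\theta\|_\infty<M,\ \|\hat\beta\|_K<M\}$, whose probability tends to $1$, so that $\eta_{\alpha_s}(W)$, $\eta_{\tilde\alpha}(W)$, $Z-\eta_{g^*}(X)$ are uniformly bounded and, as in Lemma~\ref{lem bound 1}, $S_0(t,\alpha_s)$ is bounded away from $0$ and $\infty$ for $t\in[0,\tau]$ uniformly in $s$; this justifies differentiating under the expectation by dominated convergence and shows $\psi\in C^2[0,1]$.

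For the linear term I would use the first-derivative identity of Lemma~\ref{lem: dH}, with its ``$\alpha^*$'' taken to be the direction $(I,-g^*)$, i.e. $\eta_{\alpha^*}(W)=Z-\eta_{g^*}(X)$, and its ``$\tilde\alpha$'' equal to $\tilde\alpha$. At $s=0$ the weights satisfy $\mathbb{E}[R_0(t)h(W)]=\mathbb{E}[h(W)\mid T=t,\Delta=1]$ by the Sasieni identity (\ref{eq:ssieni}); in particular $S_1(t,\alpha_0)[Z-g^*]/S_0(t,\alpha_0)=a^*(t)$, and Lemma~\ref{lem: dH} gives $\partial_s g(t,0)=\mathbb{E}\big[\eta_{\tilde\alpha}(W)\big(Z-\eta_{g^*}(X)-a^*(t)\big)\mid T=t,\Delta=1\big]$. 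Integrating this against the law of $(T,\Delta=1)$ yields $\psi'(0)=-\mathbb{E}\big[\Delta\,\eta_{\tilde\alpha}(W)\,e^*(T,W)\big]$, where $e^*(T,W):=Z-\eta_{g^*}(X)-a^*(T)=s(\cdot,\alpha_0)[Z-g^*]$. I then invoke the normal equations (\ref{eq: a*})--(\ref{eq:g*}) that characterise the least-favorable direction: because (A1) makes $\mathbb{E}[\Delta\eta_g(X)]=0$ for every $g\in\mathcal{H}(K)$, the increment $\hat\beta-\beta_0$ and each $\sum_j(\hat\theta-\theta_0)_jg^*_j$ lie in $L(P_X^{(u)})$ while each $\sum_j(\hat\theta-\theta_0)_ja^*_j$ lies in $L(P_T^{(u)})$; hence the contribution of the $\eta_{\hat\beta-\beta_0}(X)$ part of $\eta_{\tilde\alpha}$ to $\psi'(0)$ vanishes, and, writing $Z=e^*+\eta_{g^*}(X)+a^*(T)$, only the $e^*$-component of $(\hat\theta-\theta_0)'Z$ contributes. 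This reduces $\psi'(0)$ to $P_\Delta\{s(\cdot,\alpha_0)[Z-g^*]\}^{\otimes2}(\hat\theta-\theta_0)$, which is the linear part claimed.

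For the remainder I would apply the second-derivative formula of Lemma~\ref{lem: dH} with the same ``$\alpha^*$''. Each of its terms is a product of $R_s(t)$-weighted expectations of $\eta_{\alpha^*}$, $\eta_{\tilde\alpha}$, $\eta_{\tilde\alpha}^2$; since $|\eta_{\alpha^*}(W)|$ and $|\eta_{\tilde\alpha}(W)|$ are bounded and $R_s(t)$ has a bounded conditional mean (Conditions (A3)--(A4)), we get $\sup_{s\in[0,1]}|\psi''(s)|=\sup_s\big|\mathbb{E}[\Delta\,\partial_s^2 g(T,s)]\big|\lesssim\mathbb{E}[\eta_{\tilde\alpha}^2(W)]$. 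By (A3)(i), $\mathbb{E}[\eta_{\tilde\alpha}^2(W)]\le\varepsilon^{-1}\mathbb{E}[\Delta\eta_{\tilde\alpha}^2(W)]=\varepsilon^{-1}d^2(\hat\alpha,\alpha_0)\lesssim\|\hat\theta-\theta_0\|^2+\|\hat\beta-\beta_0\|_{C_\Delta}^2$, which gives the first displayed equality. The second equality is then immediate from Theorem~\ref{thm: convergence rate}(iii): both $\|\hat\theta-\theta_0\|^2$ and $\|\hat\beta-\beta_0\|_{C_\Delta}^2$ are $O_p(n^{-2r/(2r+1)})=o_p(n^{-1/2})$ since $r>1/2$.

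I expect the main obstacle to be bookkeeping rather than any single estimate: keeping the restriction to $\{\|\hat\theta\|_\infty<M,\|\hat\beta\|_K<M\}$ in force so that all ``$\lesssim$'' constants are genuinely deterministic and uniform in $s\in[0,1]$; rigorously justifying the interchange of $\partial_s$ and $\mathbb{E}[\Delta\,\cdot\,]$; and, most delicately, checking the membership facts $\hat\beta-\beta_0\in L(P_X^{(u)})$, $\sum_j(\hat\theta-\theta_0)_jg^*_j\in L(P_X^{(u)})$ and $\sum_j(\hat\theta-\theta_0)_ja^*_j\in L(P_T^{(u)})$ that license the normal-equation cancellations; it is exactly here that the centering assumption (A1) is doing the real work.
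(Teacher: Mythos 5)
Your proposal is correct and follows essentially the same route as the paper's own proof: a Taylor expansion of $P_{\Delta}s(\cdot,\alpha)[Z-g^{*}]$ in $\alpha$ built on Lemma~\ref{lem: dH} and the Sasieni identity (\ref{eq:ssieni}), with the cross term in the direction $\hat{\beta}-\beta_{0}$ annihilated by the normal equations (\ref{eq: a*})--(\ref{eq:g*}) (exactly where (A1) enters, as you note), and the remainder bounded by $d^{2}(\hat{\alpha},\alpha_{0})=o_{p}(n^{-1/2})$ via Theorem~\ref{thm: convergence rate}. One small flag: your own intermediate formula $\psi'(0)=-\mathbb{E}[\Delta\,\eta_{\tilde{\alpha}}(W)\,e^{*}(T,W)]$ actually yields the leading term $-P_{\Delta}\{s(\cdot,\alpha_{0})[Z-g^{*}]\}^{\otimes2}(\hat{\theta}-\theta_{0})$, so dropping the minus to match the stated display reproduces a sign slip that is also present in the paper's statement and proof (and is compensated later in the proof of Theorem~\ref{thm: asymp normal}, leaving the asymptotic normality unaffected).
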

\begin{proof}
By lemma $\ref{lem: dH}$, direct calculation implies 
\begin{align*}
 & P_{\Delta}\{s(\cdot,\hat{\alpha})[Z-g^{*}]-s(\cdot,g_{0})[Z-h^{*}]\}\\
= & P_{\Delta}\{s(\cdot,\alpha_{0})[Z-g^{*}]s(\cdot,g_{0})[\hat{\alpha}-\alpha_{0}]\}+O(d^{2}(\hat{\alpha},\alpha_{0}))\\
= & P_{\Delta}\{s(\cdot,\alpha_{0})[Z-g^{*}]s(\cdot,g_{0})[Z]\}(\hat{\theta}-\theta_{0})\\
 & \ \ \ +P_{\Delta}\{s(\cdot,\alpha_{0})[Z-g^{*}]s(\cdot,g_{0})[\eta_{\hat{\beta}}-\eta_{\beta_{0}}]\}\\
 & \ \ \ +O(d^{2}(\hat{\alpha},\alpha_{0})),
\end{align*}
while by ($\ref{eq: a*}$) , ($\ref{eq:g*}$) and ($\ref{eq:ssieni}$),
we have 
\begin{align*}
 & P_{\Delta}\{s(\cdot,\alpha_{0})[Z-g^{*}]s(\cdot,g_{0})[\eta_{\hat{\beta}}-\eta_{\beta_{0}}]\}\\
 & =P_{\Delta}[Z-\eta_{g^{*}}(X)-\frac{S_{1}(t,\alpha_{0})[Z-g^{*}]}{S_{0}(t,\alpha_{0})}][\eta_{\hat{\beta}}-\eta_{\beta_{0}}-\frac{S_{1}(t,\alpha_{0})[\hat{\beta}-\beta_{0}]}{S_{0}(t,\alpha_{0})}]\\
 & =P_{\Delta}\{Z-\eta_{g^{*}}(X)-\mathbb{E}[Z-\eta_{g^{*}}(X)|T,\Delta=1]\}\{\eta_{\hat{\beta}-\beta_{0}}(X)-\mathbb{E}[\eta_{\hat{\beta}-\beta_{0}}(X)|T,\Delta=1]\}\\
 & =P_{\Delta}[Z-\eta_{g^{*}}(X)-a^{*}(T)][\eta_{\hat{\beta}-\beta_{0}}(X)-a(T)]\\
 & =0,
\end{align*}
and 
\[
P_{\Delta}\{s(\cdot,\alpha_{0})[Z-g^{*}]s(\cdot,g_{0})[Z]\}=P_{\Delta}\{s(\cdot,\alpha_{0})[Z-g^{*}]\}^{\otimes2}.
\]
The lemma now follows from from  Theorem $\ref{thm: convergence rate}$ which asserts that  $d^{2}(\hat{\alpha},\alpha_{0})=o_{p}(n^{-1/2})$. 
\end{proof}

\bibliography{CoxReference}
\end{document}